\tikzset{dotmark/.style={circle,fill,inner sep=1.5pt}}
\tikzset{emptymark/.style={circle,draw,fill=white,inner sep=1.5pt}}
\tikzset{crossmark/.style={thick,inner sep=1.5pt}}
\newcommand{\qcsp}{$q$-\textsc{CSP}-$B$\xspace}
\newcommand{\qcspa}{$q$-\textsc{CSP}-$\mname$\xspace}
\DeclareMathOperator{\poly}{poly}
\newcommand{\fragment}[2]{\bm{[}\,#1\,\bm{.\,.}\,#2\,\bm{]}}
\newcommand{\position}[1]{\bm{[}\,#1\,\bm{]}}
\newcommand{\vposition}[1]{\bm{[}\,#1\,\bm{]}}
\renewenvironment{cases}{%
  \matrix@check\cases\env@cases
}{%
  \endarray\right.%
}
\def\env@cases{%
\let\@ifnextchar\new@ifnextchar
\left\lbrace
\def\arraystretch{1.1}%
\array{@{\;}c@{\quad}l@{}}%
}
\def\emptyset{\varnothing}
\newcommand{\Nat}{\mathbb Z_{\ge 0}} 
\newcommand{\numb}[1]{\fragment{1}{#1}} 
\newcommand{\numbZ}[1]{\fragment{0}{#1}} 
\newcommand{\EVEN}{\ensuremath{\{ x \in \Nat \mid x \equiv_2 0 \}}}
\newcommand{\ODD} {\ensuremath{\{ x \in \Nat \mid x \equiv_2 1 \}}}
\newcommand{\Field}{\ensuremath{\mathbb F}}
\newcommand{\NP}{\textnormal{\textsf{NP}}\xspace}
\newcommand{\W}[1][1]{\textnormal{\textsf{W}[\ensuremath{#1}]}\xspace}
\newcommand{\ttop}{\textup{top}}
\newcommand{\rhoMax}{r_{\ttop}}
\newcommand{\sigMax}{s_{\ttop}}
\newcommand{\allMax}{t_{\ttop}}
\newcommand{\rhoMin}{r_{\min}}
\newcommand{\sigMin}{s_{\min}}
\newcommand{\rhoStates}{{\mathbb R}}
\newcommand{\sigStates}{{\mathbb S}}
\newcommand{\allStates}{{\mathbb A}}
\newcommand{\port}{u} 
\newcommand{\Port}{U} 
\newcommand{\sigMod}{\mname_\sigma}
\newcommand{\rhoMod}{\mname_\rho}
\newcommand{\inverse}[2][]{{\operatorname{\mathsf{inv}}^{#1}({#2})}}
\newcommand{\inv}{\overline}
\newcommand{\mname}{{\mathrm{m}}}
\def\sigvec#1{\ensuremath \overrightarrow{\sigma}(#1)} 
\def\degvec#1{\ensuremath \overrightarrow{\mathrm w}(#1)} 
\def\comp#1#2{\ensuremath {{#2\!\!\downarrow}}}
\newcommand{\decomp}[2][o_1+o_2]{\ensuremath {#2\!\!\uparrow_{#1}}}
\newcommand{\witnessvec}[2][\ell]{w_{#2,#1}}
\newcommand{\remvec}[3][]{\ensuremath \mathrm{rem}_{#1}(#2, #3)} 
\newcommand{\manager}[1]{\ensuremath{#1}\text{-manager}}
\newcommand{\scope}{\operatorname{\mathsf{scp}}}
\newcommand{\acc}{\operatorname{\mathsf{acc}}} 
\newcommand{\rank}{rank\xspace}
\newcommand\Bl{B} 
\newcommand\Br{\inv B} 
\newcommand{\Relation}[3]{\ensuremath{\mathtt{#1}%
    \ifthenelse{\equal{#2}{}}{}{_{#2}}%
    \ifthenelse{\equal{#3}{}}{}{^{(#3)}}}\xspace}
\newcommand{\HWeq}[2][]{\Relation{HW}{=#2}{#1}}
\newcommand{\HWin}[2][]{\Relation{HW}{\in #2}{#1}}
\newcommand{\HWone}[1][]{\HWin[#1]{\rho-\min\rho+1}}
\newcommand{\HWsetGen}[1]{\ensuremath{\mathtt{HW}_{#1}}}
\newcommand{\HWset}[1]{\HWsetGen{= #1}}
\mathchardef\mhyph="2D
\newcommand{\DomSetGeneral}[4]{\ensuremath{(#1,#2)\mhyph}\textsc{{#3}Dom\-Set\ensuremath{^{#4}}}\xspace}
\newcommand{\DomSet}[2]{\DomSetGeneral{#1}{#2}{}{}}
\newcommand{\srDomSet}{\DomSetGeneral{\sigma}{\rho}{}{}}
\newcommand{\DomSetRel}[3][\Rel]{\DomSetGeneral{#2}{#3}{}{#1}}
\newcommand{\srDomSetRel}[1][\Rel]{\DomSetGeneral{\sigma}{\rho}{}{#1}}
\newcommand{\srDomSetShift}{\ensuremath{(\sigma,\rho)\mhyph}\textsc{Dom\-Set with Shift-Vectors}\xspace}
\newcommand{\AllOff}{\textsc{AllOff}\xspace}
\newcommand{\ReflAllOff}{\textsc{Reflexive-AllOff}\xspace}
\newcommand{\kSAT}[1][k]{\ensuremath{#1}-SAT\xspace}
\newcommand{\LightsOut}{Lights~Out\@\xspace}
\DeclarePairedDelimiter{\abs}{\lvert}{\rvert}
\let\pos\position
\newcommand{\tw}{\textup{\textsf{tw}}} 
\newcommand{\pw}{\textup{\textsf{pw}}} 
\newcommand{\deff}{\coloneqq}
\newcommand{\Oh}{\mathcal O} 
\newcommand*\from{\colon}
\newcommand{\compl}{\overline} 
\let\epsilon\varepsilon
\let\bar\overline
\def\nset#1{\numb{#1}}
\newcommand{\CZ}{\mathcal{Z}}
\newcommand{\CS}{\mathcal{S}}
\newcommand{\CC}{\mathcal{C}}
\newcommand{\CW}{\mathcal{W}}
\newcommand{\ZZ}{\mathbb{Z}}
\newcommand{\FF}{\mathbb{F}}
\newcommand{\cutset}{\textsf{\upshape{cut}}}
\newcommand{\reductionGraph}{G_I}
\newcommand{\variableCount}{n}
\newcommand{\constraintCount}{\ell}
\newcommand{\constraint}{C}
\newcommand{\variable}{x}
\newcommand{\colCount}{\constraintCount}
\newcommand{\rowCount}{\variableCount}
\newcommand{\rowSet}{\numb{\rowCount}}
\newcommand{\colSet}{\numb{\colCount}}
\newcommand{\cutStateSet}{\rhoStates_\cutset}
\newcommand{\consistencyRelation}{\mathtt{R}}
\newcommand{\constraintRelation}[1]{\mathtt{C}^{#1}}
\newcommand{\tupleindex}[1]{\lambda_{#1}}
\newcommand{\satAssignment}{\pi}
\title{Residue Domination in Bounded-Treewidth Graphs}
\author{Jakob Greilhuber}
{TU Wien, Vienna, Austria \and CISPA Helmholtz Center for Information Security, Saarbrücken, Germany}
{jakob.greilhuber@cispa.de}
{https://orcid.org/0009-0001-8796-6400}
{}
\author{Philipp Schepper}
{CISPA Helmholtz Center for Information Security, Saarbrücken, Germany}
{}
{https://orcid.org/0000-0002-5810-7949}
{}
\author{Philip Wellnitz}{National Institute of Informatics\\The Graduate University for Advanced Studies, SOKENDAI\\Tokyo, Japan}{wellnitz@nii.ac.jp}{https://orcid.org/0000-0002-6482-8478}{}
\authorrunning{J. Greilhuber, P. Schepper, and P. Wellnitz}
\keywords{Parameterized Complexity, Treewidth, Generalized Dominating Set, Strong Exponential Time Hypothesis}
\begin{document}

\pagenumbering{roman}

\maketitle
\begin{abstract}
    For the vertex selection problem \srDomSet
one is given two fixed sets $\sigma$ and $\rho$ of integers
and the task is to decide whether we can select vertices of the input graph
such that, for every selected vertex,
the number of selected neighbors is in $\sigma$
and, for every unselected vertex,
the number of selected neighbors is in $\rho$
[Telle, Nord.\ J.\ Comp.\ 1994].
This framework covers many fundamental graph problems such as
\textsc{Independent Set} and \textsc{Dominating Set}.

We significantly extend the recent result by Focke~et~al.~[SODA~2023]
to investigate the case when $\sigma$ and $\rho$
are two (potentially different) residue classes modulo $\mname\ge 2$.
We study the problem parameterized by treewidth
and present an algorithm that solves
in time $\mname^\tw \cdot n^{\Oh(1)}$
the decision, minimization and maximization version of the problem.
This significantly improves upon the known algorithms
where for the case $\mname \ge 3$ not even an explicit running time is known.
We complement our algorithm by providing matching lower bounds
which state that there is no $(\mname-\epsilon)^\pw \cdot n^{\Oh(1)}$-time
algorithm parameterized by pathwidth~$\pw$,
unless SETH fails.
For $\mname = 2$, we extend these bounds to the minimization version
as the decision version is efficiently solvable.

\end{abstract}
\clearpage
\thispagestyle{plain}
\tableofcontents
\clearpage
\pagenumbering{arabic}

\section{Introduction}

Classical graph problems such as
Dominating Set or Independent Set
are ubiquitous in computer science.
These problems are not only of theoretical interest
but also have many practical applications;
including
facility location, coding theory, modeling communication networks,
map labeling, or even similarity measures on molecules
\cite{doi:10.1089/cmb.2010.0280,%
  DBLP:conf/gis/BarthNNS16,%
  1083994,%
  gagarinMultipleDominationModels2018,%
  HalldorssonKT00-mod-2,%
  DBLP:journals/dm/HedetniemiL90a%
}.
Therefore, these problems are extensively studied on plenty of graph classes
and several generalizations and variations have been formulated and considered
\cite{%
  BorradaileL16,%
  DBLP:conf/fct/CattaneoP13,%
  DBLP:conf/soda/DubhashiMPRS03,%
  FominT06,%
  DBLP:journals/talg/GaspersKLT09,%
  DBLP:conf/soda/KhullerPS14,
  DBLP:conf/csr/MisraR19,%
  DBLP:conf/esa/PhilipRS09,%
  DBLP:conf/stoc/SolomonU23,%
  DBLP:conf/sigal/TsaiH90}.
Moreover, the problems seem to come with a significant complexity
but also sufficient structural properties
to serve as a testing point for new techniques
which frequently result in faster algorithms for the problems
\cite{DBLP:journals/talg/CyganNPPRW22,FominGK09,RooijBR09}.

In 1993, Telle and Proskurowski introduced the general class of \srDomSet problems
which capture several well-known vertex selection problems
for appropriately chosen sets $\sigma, \rho \subseteq \Nat$
\cite{Telle94,tellePracticalAlgorithmsPartial1993}.
In this problem the input is an undirected graph
and the task is to decide if we can select vertices
such that~(1), for every selected vertex,
the number of selected neighbors
is contained in the set~$\sigma$
and~(2), for every unselected vertex,
the number of selected neighbors is contained in the set~$\rho$.
Formally, for a graph~$G$, decide
if there exists a vertex set~$S \subseteq V(G)$
such that, for all $v \in S$, we have $\abs{N(v) \cap S} \in \sigma$,
and, for all $v \notin S$, we have $\abs{N(v) \cap S} \in \rho$.
Such a set $S$ is called a $(\sigma,\rho)$-set.

It is easy to see that \srDomSet captures classical Dominating Set
when we set~$\sigma = \Nat$ and~$\rho = \Nat \setminus\{0\}$
and ask for a selection of bounded size.
Moreover,
with different requirements imposed on the size of the selection,
we can also reformulate other problems
such as Independent Set ($\sigma = \{0\}$ and $\rho = \Nat$),
Perfect Code ($\sigma = \{0\}$ and $\rho= \{1\}$),
Induced $q$-regular Subgraph ($\sigma = \{q\}$ and $\rho = \Nat$),
Odd Domination ($\sigma = \{0,2,\dots\}$ and $\rho = \{1,3,\dots\}$),
and many more.
We refer to \cite{Bui-XuanTV13,Telle94} for a longer
list of problems that can be described as \srDomSet.

Since \srDomSet generalizes
many fundamental
graph problems,
the ultimate goal is to settle the complexity of \srDomSet
for \emph{all} (decidable) sets~$\sigma$ and~$\rho$.
We know that for many choices of~$\sigma$ and~$\rho$
the problem is \NP-hard.
Hence,
we frequently either restrict the input to special graph classes
or parameterize by some (structural) measure of the input
(for example, the solution size).

One of the best explored structural parameters is treewidth
\cite{%
  DBLP:conf/stoc/Bodlaender93,%
  CurticapeanM16,%
  DBLP:journals/talg/CyganNPPRW22,%
  FockeMINSSW23,%
  DBLP:conf/focs/Korhonen21,%
  lokshtanovKnownAlgorithmsGraphs2018,%
  DBLP:conf/birthday/Marx20,%
  DBLP:conf/soda/OkrasaR20,%
  RooijBR09%
}
which measures how similar a graph is to a tree
(see \cite[Chapter~7]{cyganParameterizedAlgorithms} for a more thorough introduction).
Many problems admit efficient algorithms on trees
with a simple dynamic program.
With treewidth as parameter,
we can lift these programs
to more general graphs
and obtain fast algorithms
especially compared to the running time
obtained from Courcelle's Theorem~\cite{Courcelle90}.
For most of these problems the goal is to find the smallest constant~$c$
such that the respective problem can be solved
in time~$c^{\tw} \cdot n^{\Oh(1)}$.

For problems parameterized by treewidth
a perpetual improvement of the algorithm seems unlikely.
After a few iterations,
frequently
conceptually new ideas seems to be necessary
to obtain further improvements.
Lokshtanov, Marx, and Saurabh initiated a line of research
that proves that such limitations
are often not a shortcoming of the techniques at hand,
but rather an inherent property of the problem itself
\cite{lokshtanovKnownAlgorithmsGraphs2018}.
For example, they prove that the known algorithm for Dominating Set~\cite{RooijBR09}
which takes time $3^\tw \cdot n^{\Oh(1)}$
cannot be improved further
unless the Strong Exponential-Time Hypothesis (SETH) \cite{cip09,ip01} fails.

Hence, the ultimate goal for \srDomSet is to show the following result
(or to prove that no such constant exists in the respective setting):

\begin{center}
  For all sets~$\sigma$ and~$\rho$,
  determine
  the constant~$c_{\sigma,\rho}$
  such that \srDomSet \\
  can be solved in time
  $c_{\sigma,\rho}^{\tw} \cdot n^{\Oh(1)}$ \\
  but not in time
  $(c_{\sigma,\rho}-\epsilon)^{\tw} \cdot n^{\Oh(1)}$
  for any~$\epsilon > 0$,
  unless SETH fails.
\end{center}

For certain choices of the sets $\sigma$ and $\rho$,
some (partial) results of this form are already known
\cite{DBLP:journals/tcs/MeybodiFMP20%
}.
A broad class of algorithms was given by
van Rooij, Bodlaender, and Rossmanith~\cite{RooijBR09}
for the case of finite and cofinite sets.
These algorithms were
later improved by van Rooij~\cite{Rooij_FastJoinOperations}.
Focke et al.~\cite{FockeMINSSW23,focke_tight_2023_ub}
recently introduced highly non-trivial techniques
to improve these algorithms further for an infinite class
of choices for~$\sigma$ and~$\rho$.
Moreover, Focke et al.\ additionally provide matching lower bounds
for these new algorithms
that rule out additional improvements~\cite{FockeMINSSW23,focke_tight_2023_lb}.

\subparagraph*{Beyond Finite and Cofinite Sets.}
Although the known results already capture large classes of problems,
they are limited to finite and cofinite sets.
This leaves open the entire range of infinite sets (with infinite complements),
which contains not only ``unstructured'' sets
like the set of prime numbers, for example,
but also easy to describe and frequently used
sets like the even or odd numbers,
and arithmetic progressions in general.

One important example for families
that are neither finite nor cofinite
are \emph{residue classes}.
We say that a set~$\tau \subseteq \Nat$ is a residue class modulo $\mname$
if there are two integers~$a$ and~$\mname$
such that $\tau = \{ n \in \Nat \mid n \equiv_\mname a \}$;
we usually require that $0 \le a < \mname$
for the canonical representation.
Again, the two most natural residue classes are the even and odd numbers.

Surprisingly, for such infinite sets the complexity of \srDomSet
is significantly underexplored and heavily fragmented
even in the classical, non-parameterized, setting.
This is especially surprising as this variant of the problem
has direct applications
in other fields like coding theory~\cite{DBLP:conf/tamc/CattaneoP14,HalldorssonKT00-mod-2}.

Halld\'orsson, Kratochv\'il, and Telle
consider a variation of Independent Set
where the unselected vertices have parity constraints~\cite{HalldorssonKT00}.
They provide a complete dichotomy between polynomial-time solvable cases and \NP-hard cases.
Later the same group of authors considered the case
where each of the sets comprises either the even or odd integers
and proved similar hardness results~\cite{HalldorssonKT00-mod-2}.
Caro, Klostermeyer, and Goldwasser consider
a variant of \srDomSet with residue classes as sets
where they restrict the \emph{closed} neighborhood of a vertex
\cite{caroOddResidueDomination2001}.
In this setting they prove new upper bounds
for specific graphs classes including complements of powers of cycles
and grid graphs.
Fomin, Golovach, Kratochv\'il, Kratsch, and Liedloff
consider general graphs and provide \emph{exponential-time} algorithms
for general residue classes~\cite{FominGKKL09}.

Although for the case of general sets
some more results are known
in the parameterized setting
\cite{AlberBFKN02,Bui-XuanTV13,%
  FominT06,golovachParameterizedComplexityGeneralized2012a},
surprisingly few involve
sets that are neither finite nor cofinite
and the parameter treewidth.
Gassner and Hatzl \cite{gassnerParityDominationProblem2008}
consider the problem of residue domination
where the sets are either all even or all odd numbers.
They provide an algorithm for the problem
with running time~$2^{3\tw} \cdot n^{\Oh(1)}$.
Gassner and Hatzl also conjecture that their algorithm works
when the sets have a larger modulus
but unfortunately do not state the expected running time
or the actual algorithm.

Chapelle in contrast provides a general algorithm for \srDomSet
which covers \emph{all} ultimately periodic sets%
\footnote{A set~$\tau$ is ultimately periodic
  if there is a finite automaton (over a unary alphabet)
  such that the length of the accepted words
  is precisely described by the set~$\tau$.}
but, unfortunately, does not provide an explicit running time
of the algorithm
\cite{Chapelle10v5,Chapelle11}.

In this work, we answer the main question from above
and settle the complexity of \srDomSet
for another large class of sets,
namely for the residue classes.
\begin{restatable*}{mtheorem}{thmTwUpperBound}
  \label{thm:twUpperBound}
  \label{thm:algorithm_runtime}
  Write $\sigma, \rho \subseteq \Nat$ for two residue classes
  modulo $\mname \ge 2$.

  Then, in time $\mname^\tw \cdot \abs{G}^{\Oh(1)}$ we can decide
  simultaneously for all $s$
  if the given graph $G$ has a $(\sigma,\rho)$-set of size~$s$
  when a tree decomposition of width $\tw$ is given with the input.
\end{restatable*}

We remark that our algorithm
does not only solve the decision version
but also the maximization, minimization and exact version
of \srDomSet.

Despite the fact that there are some pairs
for which the decision version of \srDomSet is efficiently solvable
(for example, the empty set is a trivial minimum solution if $0 \in \rho$),
we prove that for all other ``difficult'' cases our algorithm is optimal
even for the decision version
and cannot be improved unless SETH fails.
We refer to \cref{def:easyCases} for a complete list of the ``easy'' pairs
for which the decision version can be solved in polynomial time;
to all other pairs we refer as ``difficult''.

\begin{restatable*}{mtheorem}{thmPwLowerBound}
  \label{thm:pwLowerBound}
  Write $\sigma, \rho \subseteq \Nat$ for difficult residue classes
  modulo $\mname \ge 2$.

  Unless SETH fails,
  for all $\epsilon > 0$,
  there is no algorithm that can decide in time
  $(\mname - \epsilon)^\pw \cdot \abs{G}^{\Oh(1)}$
  whether the input graph $G$
  has a $(\sigma,\rho)$-set,
  when a path decomposition of width $\pw$ is given with the input.
\end{restatable*}

Observe that our lower bound is for the larger parameter pathwidth,
which immediately implies the result
for the smaller parameter treewidth.

\subparagraph*{Our Contribution.}
Before we outline the formal ideas behind our results,
we first highlight why these bounds are
more surprising than they seem to be at the first glance.

To that end, let us take a deeper look at the algorithms
of~\cite{focke_tight_2023_ub,Rooij_FastJoinOperations}.
Typically, the limiting factor for faster algorithms
parameterized by treewidth
is the number of states that have to be considered
for each bag of the tree decomposition.
For vertex selection problems, the state of a vertex
is defined by two values:
(1)~whether it is selected
and (2)~how many selected neighbors it gets in some (partial) solution.
To bound this latter number, we identify the largest ``reasonable'' state
a vertex can have when it is selected and when it is unselected.

For finite sets $\sigma$ and $\rho$
this largest reasonable state
is simply determined by the maximum of the respective sets,
that is,
we set $\sigMax=\max \sigma$ and $\rhoMax=\max \rho$
as the largest reasonable number of neighbors, respectively.

Then,
for a selected vertex, the allowed number of selected neighbors
ranges from~$0$ to~$\sigMax$, yielding $\sigMax + 1$
states for selected vertices.
Similarly, we need to consider $\rhoMax + 1$ states for unselected vertices.
Combining the two cases, for each bag of the tree decomposition
there are at most $(\sigMax + \rhoMax + 2)^{\tw+1}$ states to consider.
Surprisingly, Focke et\ al.\ proved
that, for an infinite number of finite sets,
even at most $(\allMax+1)^{\tw+1}$ of said states suffice,
where $\allMax = \max(\sigMax, \rhoMax)$~\cite{FockeMINSSW23}.%
\footnote{To keep notation simple,
  we omit the special case where the bound is $(\allMax+2)^{\tw+1}$.}
When $\sigMax = \rhoMax$ this improves the algorithm by a factor of~$2^{\tw+1}$.

Similarly,
for the case of residue classes with modulus~$\mname$,
the number of selected neighbors effectively ranges from~$0$ to~$\mname-1$
as for all larger values
the behavior is equivalent to some smaller value.
This gives us $\mname$ states if a vertex is unselected
and $\mname$ states if a vertex is selected.
Hence, the straight-forward bound for the number of states is
$(2 \mname)^{\tw+1}$
which is a factor of $2^{\tw+1}$ worse
than the bound from the running time of our algorithm.

We remark that the most naive approach,
for which we remove all integers from the sets~$\sigma$ and~$\rho$
that are larger than the number of vertices of the input graph
and then apply the improved result by Focke et al.~for finite sets,
fails miserably.
This approach would
merely give an \textsc{XP}-algorithm
as the size of the sets now depends on the size of the input graph.

Hence, it is far from trivial to obtain the claimed running time
since the classical approach would not give something better
than $(2\mname)^{\tw} \cdot n^{\Oh(1)}$.

\subparagraph*{Our Techniques.}
Although we use the algorithmic result
by Focke et al.\ as a basis
for our upper bounds,
our algorithm does not follow as an immediate corollary.
There are two main challenges that we need to overcome to obtain the fast running time.

First, all previous results with \emph{tight} bounds
considered finite or cofinite sets.
In this setting all integers (starting from some threshold)
are somewhat equivalent
in the sense that they are either all contained in the set
or all not contained in the set.
This makes defining a largest reasonable state quite convenient.
For the residue classes this is not as easily possible
as the integers change between membership and non-membership.
Hence, we need an even more careful construction and analysis
when improving upon the naive bound for the number of states.

Second, it does not suffice
to bound the number of states
which the dynamic program considers,
but we also need to be able to combine these states
efficiently at the join nodes.
Although this is a general issue when parameterizing by treewidth,
until today there does not seem to be one solution
which works in a black-box manner in all settings.
Instead, we need to carefully design a new approach
that takes care of the particular setting we consider
which is based on but differs from the existing results
for finite and cofinite sets.

For the lower bound result we are at a similar situation
as for the upper bound;
the setting is similar to what is known but still different.
There are several known lower bounds for
Dominating Set and its related problems,
but these reductions are usually quite tailored to the specific problem
and lack a modular construction%
---it is difficult to reuse existing results.
The proofs are usually based on a direct reduction from $k$-SAT
which introduces an additional overhead
that obfuscates the high-level idea by technicalities
(SAT has a running time bound of the form~$2^n$ but we want a bound of the form~$c^{\pw}$).

We avoid this overhead by reducing
from an appropriate Constraint Satisfaction Problem
introduced by Lampis
for precisely such settings~\cite{Lampis20}.
Our results can be seen as one of the first applications
of this new approach (outside the original setting)
that can potentially also serve as a blue-print
to simplify many other reductions and lower bound proofs
or to directly obtain simpler results from scratch.

\subparagraph*{The Special Case of Parity Domination.}
When taking a closer look at the precise statement of our lower bound
(and \cref{def:easyCases}),
we are reminded that both results do not apply for the same set of pairs.
Especially for the case of residue classes with modulus $\mname = 2$,
our algorithm solves the minimization version,
but \cref{thm:pwLowerBound} provides no matching lower bound.
As we may solve the decision problem for these cases in polynomial-time
via Gaussian elimination
(see, for example,~\cite{AndersonF98,dodisUniversalConfigurationsLightflipping2001,GoldwasserKT97,HalldorssonKT00-mod-2,Sutner89}),
our lower bound explicitly excludes these cases
by referring to them as ``easy''.

Surprisingly, exactly these easy cases
can be related to a single-player game called
\emph{\LightsOut} that was published 1995.
In this game the unassuming
player is presented with a $5 \times 5$ grid of switches
and lamps,
some or all of them initially turned on,
and the task is to turn off all lamps
by pressing the switches.
The catch is that every switch
flips not only the state of its corresponding lamp
(from ``on'' to ``off'' or vice versa),
but also the states of the neighboring lamps
in the grid~\cite{BermanBH21,FleischerY13}.%
\footnote{Similar games have also been released
  under the names \emph{Merlin} and \emph{Orbix} \cite{FleischerY13}.}

Since the order in which the buttons are pressed
does not matter
and every button has to be pressed at most once
as a second press would undo the first operation,
we can describe a \emph{solution} to an initial configuration
as a set of switches
that need to be flipped to turn all lights off.

When we assume that initially all lights are turned on,
then we can directly treat \LightsOut as a variant
of \srDomSet with $\sigma = \EVEN$ and $\rho = \ODD$.
We also refer to this problem, where the input is an arbitrary graph, as \ReflAllOff
since we assume that each switch triggers the corresponding lamp.
When this is not the case but still all lights are initially turned on,
we have $\sigma = \ODD = \rho$ and refer to the problem as \AllOff
as the corresponding switch does not trigger the associated lamp.

Despite the fact that it is easy to find \emph{some} solution for these two problems if one exists,
the minimization versions do not have such a trivial answer
and are known to be \NP-complete~\cite{%
  caroOddResidueDomination2001,%
  HalldorssonKT00-mod-2,sutner1988additive}.
Hence,
we investigate the minimization versions for these two problems
and complement the algorithmic result
from \cref{thm:twUpperBound} as follows.

\begin{restatable*}{mtheorem}{thmLightsOutLowerPW}
  \label{thm:lightsOut:lowerPW}
  Unless SETH fails,
  for all $\epsilon > 0$, there is no algorithm
  for each of the problems \ReflAllOff and \AllOff
  that can decide in time $(2-\epsilon)^\pw \cdot \abs{G}^{\Oh(1)}$
  whether there exists a solution
  of arbitrary size (the size is given as input)
  for a graph~$G$
  that is given with a path decomposition of width $\pw$.
  \qedhere
\end{restatable*}
Together with the lower bound for the general case,
we conclude that our algorithm is the best possible,
unless a major breakthrough for solving SAT happens.

\subparagraph*{Further Directions.}
When taking a step back,
the results of this work serve two purposes
which can be seen as starting points
for further investigations and improvements.
\begin{itemize}
  \item
        First, we settle the complexity of \srDomSet
        conclusively for the case of residue classes
        by providing matching upper and lower bounds.

        We later list some candidates
        that
        might allow similar improvements.
        Such improvements would, similar to our results,
        extend the list of problems
        started by Focke et al.~\cite{FockeMINSSW23}
        where significant improvements
        for the supposedly optimal algorithms are possible.

  \item
        Second, in comparison to the fairly complicated results
        for the case of finite sets in~\cite{FockeMINSSW23},
        this work can be seen as a significantly simpler introduction
        to those techniques that are relevant to obtain faster algorithms
        by exploiting the structural properties of the sets.

        We believe that
        for many other (parameterized) problems%
        ---including but not limited to \srDomSet---%
        the algorithms can be improved exponentially
        by using these new techniques.

\end{itemize}

In the following we list several possible directions
that could serve as a next step on the route to
a complete picture of the complexity of \srDomSet.

\begin{itemize}
  \item
        A first natural case could be pairs of two residue classes
        with \emph{different} moduli $\mname_\sigma$ and $\mname_\rho$.
        Then, the natural structural parameter $\mname$
        (which is the modulus in our case)
        is the greatest common divisor of $\mname_\sigma$ and $\mname_\rho$.
        In this setting
        the case $\mname = 1$
        is also relevant as this does not directly imply
        that the sets contain all natural numbers.

  \item
        A different direction considers the combination of a residue class
        with a finite or cofinite set.
        Focke et~al.\ show
        that representative sets~\cite{FominLPS16,KratschW20,MarxSS22,ShachnaiZ14}
        can be used to speed up the algorithm
        even further for the case of cofinite sets~\cite{focke_tight_2023_ub}.
        Independently of finding the optimal algorithm to handle the join operation
        for representative sets,
        it is not even clear what the optimal running time should be in such a case.

  \item
        Caro and Jacobson~\cite{caroNonzModDominating2003}
        introduced the problem \textsc{Non-$z(\bmod~k)$ Dominating~Set}
        which can also be described as a \srDomSet problem
        where the sets are complements of residue classes,
        which is equivalent to a finite union of residue classes.
        For example, for $z=0$ and $k = 3$,
        we set $\sigma = \{0,1,3,4,\dots \}
          = \{0, 3, 6,\dots\} \cup \{1,4,7,\dots\}$
        and $\rho = \{1, 2, 4, 5, 7, 8,\dots\}
          = \{1, 4, 7, \dots\} \cup \{2, 5, 8, \dots\}$.
        What is the optimal running time in this case?

  \item
        The general algorithm by Chapelle
        for the case
        when both sets are ultimately periodic
        has a running time single-exponential in treewidth
        despite being stated implicitly only~\cite{Chapelle10v5,Chapelle11}.
        What is the best running time for an algorithm solving \emph{all}
        cases of \srDomSet that are currently known to be fixed-parameter tractable?

  \item
        Are there more classes of sets for which there is an fpt algorithm parameterized by treewidth?
        Chapelle showed that once
        there are large gaps in the set,
        the problem becomes significantly harder~\cite{Chapelle10v5,Chapelle11}.

        \begin{theorem}
          [{\cite[Theorem~1]{Chapelle10v5} and \cite[Th\'eorème~3.3.1]{Chapelle11}}]
          \label{thm:intro:w1hardness}
          Write $\sigma$ for a set with arbitrarily large gaps
          between two consecutive elements
          (such that a gap of length $t$ is at distance $\poly(t)$ in $\sigma$),
          and write $\rho$ for a cofinite set with $\min \sigma \ge 1$ and $\min \rho \ge 2$.
          Then, the problem \srDomSet is \W[1]-hard
          when parameterized by the treewidth of the input graph.
        \end{theorem}

        Examples are the two natural sets
        where $\sigma = \{2^i \mid i \in \Nat\}$
        or when $\sigma$ is the set of all Fibonacci numbers~\cite{Chapelle10v5}.
        We observe that this is one of the rare cases
        where a problem is \W[1]-hard even when parameterizing by treewidth.

        The classification by Chapelle is not a dichotomy result
        in the sense that it provides a full classification between
        the fpt cases and the ones that are \W[1]-hard.
        For instance, what is the complexity
        for sets like $\sigma = \Nat \setminus \{2^i \mid i \in \Nat\}$
        which have gaps of constant size only
        but are not ultimately periodic?

  \item
        With our results,
        there are improved algorithms
        for the case when the sets are finite, cofinite or residue classes.
        Nevertheless,
        the description of the exact running time is highly non-uniform,
        that is, the exact complexity explicitly depends on the underlying set.
        Can we describe the complexity of optimal algorithms
        in a compact form as, for example,
        done by Chapelle for the general algorithm
        via finite automata~\cite{Chapelle10v5,Chapelle11}?
        This notation suffices
        to describe the state of a single vertex,
        but the representation of the structural insights
        leading to fewer states and faster algorithms
        remains open.

  \item
        Lastly,
        for which other problems besides \srDomSet
        can the techniques from our upper bounds
        (sparse languages and compression of vectors)
        be used to obtain faster algorithms?
        As the high-level idea of our lower bounds is quite modular,
        it should also be possible to use these concepts as blue-prints
        to achieve matching bounds for other problems as well.
\end{itemize}

\section{Technical Overview}

In this section we give a high-level overview of the results in this paper
and outline the main technical contributions we use.
We start by rigorously defining the main problem
considered in this work and
the property of a set being $\mname$-structured.

\begin{definition}[$(\sigma,\rho)$-sets, \srDomSet]
    \label{def:srSet}
    \label{def:srDomSet}
    Fix two non-empty sets $\sigma$ and $\rho$ of non-negative integers.

    For a graph $G$,
    a set $S \subseteq V(G)$ is a $(\sigma,\rho)$-set for $G$
    if and only if
    (1) for all $v \in S$,
    we have $\abs{N(v) \cap S} \in \sigma$,
    and (2), for all $v \in V(G) \setminus S$,
    we have $\abs{N(v) \cap S} \in \rho$.

    The problem \srDomSet asks for a given graph $G$,
    whether there is a $(\sigma,\rho)$-set $S$ or not.
\end{definition}

We also refer to the problem above as the \emph{decision} version.
The problem naturally also admits related problems
such as asking for a solution of a specific size,
or for the smallest or largest solution,
that is, the \emph{minimization} and \emph{maximization} version.

For the case of finite and cofinite sets,
Focke et al.~\cite{focke_tight_2023_ub, focke_tight_2023_lb} realized
that the complexity of \srDomSet significantly changes
(and allows faster algorithms)
when $\sigma$ and $\rho$ exhibit a specific structure,
which they refer to as \emph{$\mname$-structured}.

\begin{definition}[{$\mname$-structured sets~\cite[Definition 3.6]{focke_tight_2023_lb}}]
    \label{def:m-structured}
    Fix an integer $\mname \geq 1$.
    A set $\tau \subseteq \ZZ_{\ge 0}$ is \emph{$\mname$-structured} if
    all numbers in \(\tau\) are in the same residue class modulo \(\mname\), that is,
    if there is an integer $c^*$ such that
    \(c \equiv_{\mname} c^* \)
    for all $c \in \tau$.
\end{definition}

Observe that every set $\tau$ is $\mname$-structured for $\mname = 1$.
Therefore, one is usually interested in the largest $\mname$
such that a set is $\mname$-structured.
When considering two sets $\sigma$ and $\rho$,
we say that this pair is $\mname$-structured
if each of the two sets is $\mname$-structured.
More formally, assume that $\sigma$ is $\mname_\sigma$-structured
and $\rho$ is $\mname_\rho$-structured.
In this case the pair $(\sigma,\rho)$ is $\mname$-structured
where $\mname$ is the greatest common divisor
of $\mname_\sigma$ and $\mname_\rho$.
As in our case the sets $\sigma$ and $\rho$ are residue classes
modulo $\mname \ge 2$,
the sets are always $\mname$-structured.

In the following we first present the algorithmic result,
which outlines the proof of \cref{thm:twUpperBound}.
Afterward we move to the lower bounds where we consider
\cref{thm:pwLowerBound}
and finally we focus on the special case of \LightsOut
from \cref{thm:lightsOut:lowerPW}.

\subsection{Upper Bounds}

The basic idea to prove the upper bound is to provide
a dynamic programming algorithm that operates on a tree decomposition
of the given graph.
For each node of this decomposition we store all valid states,
where each such state describes how a possible solution,
i.e., a set of selected vertices,
interacts with the bags of the corresponding node.
We formalize this by the notion of a partial solution.

For a node $t$ with associated bag $X_t$,
we denote by $V_t$ the set of vertices introduced in the subtree rooted at $t$
and by $G_t$ the graph induced by these vertices.
We say that a set $S \subseteq V_t$
is a partial solution (for $G_t$) if
\begin{itemize}
    \item
          for each vertex $v \in S \setminus X_t$,
          we have $|N(v) \cap S| \in \sigma$,
          and
    \item
          for each vertex $v \in V_t \setminus (S\cup X_t)$,
          we have $|N(v) \cap S| \in \rho$.
\end{itemize}
The solution is partial in the sense that there are no constraints
imposed on the number of neighbors of the vertices in~$X_t$,
that is, only the vertices in $V_t \setminus X_t$
must have a valid number of neighbors.

We characterize the partial solutions by the states of the vertices in the bag.
When $\sigma$ and $\rho$ are finite or cofinite sets,
the largest reasonable state
is included in the respective set,
which is not necessarily the case for residue classes.
Consider two fixed, residue classes $\sigma$ and $\rho$
modulo $\mname \ge 2$.
Every selected vertex can have up to $\mname$ different states
and similarly, every unselected vertex can have $\mname$ different states.
Hence, for each bag, the number of relevant different partial solutions is bounded by
\((2 \mname)^{\abs{X_t}}.\)

\subparagraph*{High-level Idea.}
The crucial step to fast and efficient algorithms
is to provide a better bound on the number of states for each bag
when the sets $\sigma$ and $\rho$ are residue classes
modulo $\mname\ge 2$.
We denote by $\allStates$ the set of all possible states
a vertex might have in a valid solution.
Then, let $L \subseteq \allStates^{X_t}$
be the set of all possible state-vectors corresponding to partial solutions
for $G_t$.
Our first goal is to show that $\abs{L} \le \mname^{\abs{X_t}}$,
which means that not all
theoretically possible combinations of states
can actually have a corresponding partial solution in the graph.

Moreover, we also need to be able to combine two partial solutions
at the join nodes of the tree decomposition.
For a fast join operation, it does not suffice to bound the size of $L$.
This follows from the observation that
the convolution algorithm used to handle the join operation
does not depend on~$L$
but on the \emph{space} where the states come from.
In our case, the size of the space where~$L$ comes from
is still $(2\mname)^{\abs{X_t}}$, which is too large.
To decrease this size,
we observe that a significant amount of information
about the states of the vertices
can be inferred from other positions,
that is, we can \emph{compress} the vectors.

As a last step it remains to combine the states significantly faster than a naive algorithm.
To efficiently compute the join,
we use an approach based on the fast convolution techniques by van~Rooij
which was already used for the finite case \cite{Rooij_FastJoinOperations}.
However, we have to ensure that the compression of the vectors
is actually compatible with the join operation,
that is, while designing the compression we already have to take in mind
that we later join two (compressed) partial solutions together.
Since the compressed vectors are significantly simpler,
these states can now be combined much faster.

\subparagraph*{Bounding the Size of a Single Language.}
Recall that every partial solution $S$ can be described
by a state-vector $x \in \allStates^{n}$
where we abuse notation and set $n \deff \abs{X_t}$.
When $x$ describes the partial solution $S$,
we also say that $S$ is a witness for $x$.
We denote the set of the state-vectors of all partial solutions for $G_t$ as $L$.
We later refer to the set $L$ as the realized language of $(G_t,X_t)$.
To provide the improved bound on the size of $L$,
we decompose each state-vector~$x$ into two vectors:
The \emph{selection-vector} of~$x$, also called the $\sigma$-vector and
denoted by $\sigvec{x}$,
indicates whether each vertex in $X_t$ is selected or not.
The \emph{weight-vector} of $x$, denoted by $\degvec{x}$,
contains the number of selected neighbors of the vertices in~$X_t$.

The key insight into the improved bound is that
for two partial solutions of similar size
(with regard to modulo $\mname$),
the $\sigma$-vectors and the weight-vectors
of these two solutions are orthogonal.
This observation was already used to prove the improved bound
when $\sigma$ and $\rho$ are finite~\cite{focke_tight_2023_ub}.
We extend this result
to the case of residue classes.

To formally state the key insight, we define a graph with portals as a pair $(G,U)$, where $U \subseteq V(G)$ is a set of \emph{portal vertices} (see \cref{def:graph_with_portals}).
Intuitively, one can think of $G$ being the graph $G_t$, and $U$ the set $X_t$ for a node $t$ of a tree decomposition.

\begin{restatable*}[{Compare~\cite[Lemma~4.3]{focke_tight_2023_ub}}]{lemma}{lemStructuralProperty}
    \label{thm:structural_property}
    Let $\sigma$ and $\rho$ denote two residue classes
    modulo $\mname \ge 2$.
    Let $(G,U)$ be a graph with portals and
    let \(L \deff L(G,U) \subseteq \allStates^U\) denote its realized language.
    Consider two strings $x,y \in L$ with witnesses
    $S_x, S_y \subseteq V(G)$ such that
    \(|S_{x} \setminus U| \equiv_{\mname} |S_{y} \setminus U|\).
    Then, \(\sigvec{x}\cdot\degvec{y}
    \equiv_{\mname} \sigvec{y}\cdot\degvec{x}\).
\end{restatable*}

To prove this result, we count edges
between the vertices in $S_x$ and the vertices in $S_y$
in two different ways.
We first count the edges based on their endpoint in $S_x$.
These vertices can be partitioned into three groups:
(1)~the vertices contained in $U$,
(2)~the vertices outside $U$ which are not in $S_y$,
and~(3) the vertices outside $U$ which are in $S_y$.
Then, the number of edges $\abs{E(S_x \to S_y)}$ from $S_x$ to $S_y$
satisfies
\[
    \abs{E(S_x \to S_y)}
    \equiv_\mname
    \min\rho \cdot |S_x \setminus (S_y \cup U)|
    + \min\sigma \cdot |(S_x \cap S_y) \setminus U|
    + \sigvec{x} \cdot \degvec{y}
\]
because the sets $\sigma$ and $\rho$ are residue classes modulo $\mname$.
When counting the edges based on their endpoint in $S_y$,
the positions of $x$ and $y$ flip and the result follows.
As this property enables us to prove that the size of $L$ is small,
we refer to this property as \emph{sparse}.

Even though intuitively this orthogonality provides
a reason why the size of the language is not too large,
this does not result in a formal proof.
However, when fixing which vertices are selected,
that is, when fixing a $\sigma$-vector $\vec s$,
then there is an even stronger restriction on the values of the weight-vectors.
Instead of restricting the entire vector,
it actually suffices to fix the vector on a certain number of positions
which are described by some set $S$
to which we refer as $\sigma$-defining set.
If two $\sigma$-vectors of strings of the language agree on these positions from $S$,
then \emph{all} remaining positions of the two $\sigma$-vectors
must be identical as well.

With the sparseness property we then show that
it suffices to fix the $\sigma$-vectors on the positions from $S$
(which then determines the values on $\compl S$),
and the weight-vector on the positions from $\compl S$
(which then determines the weight-vector on the positions from $S$).
Formally, we prove \cref{thm:s_complement_determines_weight_vector}
which mirrors \cite[Lemma~4.9]{focke_tight_2023_ub}
in the case of residue classes.
We use $\sigvec{L}$ to denote the set $\{\sigvec{z} \mid z \in L\}$.

\begin{restatable*}[{Compare~\cite[Lemma~4.9]{focke_tight_2023_ub}}]{lemma}{lemSComplDeterminesWeight}
    \label{thm:s_complement_determines_weight_vector}
    Let $\sigma$ and $\rho$ denote two residue classes
    modulo $\mname \ge 2$.
    Let $L \subseteq \allStates^n$ be a sparse language
    with a \(\sigma\)-defining set $S$ for $\sigvec{L}$.
    Then, for any two strings \(x, y\in L\) with \(\sigvec{x} = \sigvec{y}\),
    the positions $\compl S$ uniquely characterize the
    weight vectors of \(x\) and \(y\), that is, we have
    \begin{equation*}
        \degvec{x}\vposition{\compl{S}}
        =
        \degvec{y}\vposition{\compl{S}}
        \quad\text{implies}\quad
        \degvec{x} = \degvec{y}.
        \qedhere
    \end{equation*}
\end{restatable*}

With this result it is straight-forward to bound the size of a sparse language of dimension~$n$.
Our goal is to bound the number of weight-vectors
that can be combined with a fixed $\sigma$-vector to form a valid type.
Assume we fixed a $\sigma$-vector $\vec s$
on the positions from~$S$.
Since this already determines the remaining positions of the $\sigma$-vector
(even if we do not know the values a priori),
the number of possible $\sigma$-vectors is at most $2^{|S|}$.
For the weight-vector there are $\mname$ choices
for each of the positions from $\compl S$.
Then, the values for the positions from $S$ are uniquely determined by those on $\compl S$
because of the previous result.
Using $\mname \ge 2$ this allows us to bound the size of a sparse language
by
\begin{align*}
    \mname^{\abs{\compl S}}
    \cdot
    2^{\abs{S}}
    \le \mname^n
    .
\end{align*}

\subparagraph*{Compressing Weight-Vectors.}
Based on the previous observations and results,
we focus on the analysis for a fixed $\sigma$-vector $\vec s$.
Though we could iterate over all
at most $2^{\abs{X_t}}$ possible $\sigma$-vectors
without dominating the running time,
the final algorithm only considers the $\sigma$-vectors
resulting from the underlying set $L$.
Hence, we assume that all vectors in $L$
share the same $\sigma$-vector $\vec s$.

When looking again at the bound for the size of $L$,
it already becomes apparent how we can compress the weight-vectors.
Recall that once we have fixed the entries of a weight-vector
of some vector $x \in L$
at the positions of $\compl S$,
the entries of the weight-vector on $S$ are predetermined
by \cref{thm:s_complement_determines_weight_vector}.
Hence, for the compressed vector
we simply omit the entries on the positions in $S$,
that is, the compressed weight-vector is the projection
of the original weight-vector to the dimensions from $\compl S$.

It remains to recover the original vectors from their compression.
As the implication from \cref{thm:s_complement_determines_weight_vector}
actually
does not provide the values for the positions on $S$,
it seems tempting
to store a single representative,
which we refer to as \emph{origin}-vector $o$
to recover the omitted values for all compressed vectors.
Unfortunately, this is not (yet) sufficient.

Observe that
\cref{thm:s_complement_determines_weight_vector},
which serves as basis for the compression, requires
that the weight-vector~$u$ and the origin-vector~$o$
agree on the coordinates from $\compl S$.
Therefore, it would be necessary to store one origin-vector
for each possible choice of values on $\compl S$,
which would not yield any improvement in the end.

In order to recover the values of the compressed weight-vector,
we use our structural property
from \cref{thm:structural_property} once more.
Intuitively, we use that changing the weight-vector at one position
(from $\compl S$, in our case),
has an effect on the value at some other position (from $S$, in our case).
Based on this idea we define an auxiliary vector,
which we refer to as \emph{remainder}-vector.
Intuitively, the entries of this vector capture the difference of
the weight-vector~$u$
and the origin-vector~$o$ on the positions in $\compl S$.
By the previous observation this also encodes
how much these two vectors~$u$ and~$o$
differ on the positions from $S$.
This remainder-vector then allows us
to efficiently decompress the compressed weight-vectors again.
In consequence,
the final compression reduces the size of the space
where the weight-vectors are chosen from,
which is a prerequisite for the last part of the algorithm.

\subparagraph*{Faster Join Operations.}
To obtain the fast join operation,
we apply the known convolution techniques
by van Rooij~\cite{Rooij_FastJoinOperations}.
As the convolution requires that all operations
are done modulo some small number,
we can directly apply it as every coordinate of the compressed vector
is computed modulo $\mname$.
As the convolution operates in the time of the space where the vectors are from,
we obtain an overall running time of $\mname^{\abs{X_t}}$ for the join operation.

The final algorithm is then a dynamic program
where the procedures for all nodes except the join node
follow the standard procedure.
For the join node, we iterate over all potential $\sigma$-vectors
of the combined language, then join the compressed weight-vectors,
and finally output the union of their decompressions.

By designing the algorithm such that we consider solutions of a certain size,
we achieve that the considered languages are sparse
and thus, the established machinery provides the optimal bound
for the running time. In total, we obtain \cref{thm:twUpperBound}.

\thmTwUpperBound*

\subsection{Lower Bounds}

After establishing the upper bounds,
we focus on proving matching lower bounds,
that is, we prove the previous algorithm to be optimal under SETH.
For all difficult cases, we provide a general lower bound
and for the easy cases that are solvable in polynomial time
but are non-trivial,
we prove a lower bound for the minimization version
by a separate reduction.
In the following we first focus on the difficult cases.

\begin{definition}[Easy and Difficult Cases]
    \label{def:easyCases}
    Let $\sigma$ and $\rho$ be two residue classes.
    We say that this pair is \emph{easy}
    if $0 \in \rho$ or
    \begin{itemize}
        \item $\sigma = \EVEN$ and $\rho = \ODD$,
              or
        \item $\sigma = \ODD$  and $\rho = \ODD$.
    \end{itemize}
    Otherwise, we say that the pair is \emph{difficult}.
\end{definition}

Clearly the case~$0 \in \rho$ is trivial
since the empty set is a valid solution.
For the other two cases we can formulate the problem
as a system of linear equations over~$\Field_2$:
for each vertex we create a variable indicating the selection status
and introduce one appropriately chosen constraint
involving the neighboring vertices.
Then, Gaussian elimination provides a solution in polynomial time.
We refer to \cite{AndersonF98,GoldwasserKT97,HalldorssonKT00-mod-2,Sutner89}
for a formal proof.
Thus, unless mentioned otherwise, we henceforth focus on the difficult cases.

Starting from the first SETH-based lower bounds
when parameterizing by treewidth by
Lokshtanov, Marx and Saurabh~\cite{lokshtanovKnownAlgorithmsGraphs2018}
(see also references in \cite{Lampis20} for other applications)
many reductions suffered from the following obstacle:
SETH provides a lower bound of the form $(2-\epsilon)^n$
whereas for most problems a lower bound of the form $(c-\epsilon)^\tw$ is needed
for some integer $c > 2$.
To bridge this gap, several technicalities are needed
to obtain the bound with the correct base.
Lampis introduced the problem (family) \qcsp,
which hides these technicalities and allows for cleaner reductions.
This problem generalizes $q$-SAT such that
every variable can now take $B$ different values,
that is, for $B=2$ this is the classical $q$-SAT problem.
Formally \qcsp is defined as follows.

\begin{restatable}[\qcsp~\cite{Lampis20}]{definition}{defCSP}
    \label{def:q-CSP-B}
    Fix two numbers $q, B \ge 2$.
    An instance of \qcsp is a tuple $(X, \CC)$
    that consists of a set $X$ of $n$ variables
    having the domain $D = \numb{B}$ each,
    and a set $\CC$ of constraints on $X$.
    A constraint $C$ is a pair $(\scope(C), \acc(C))$
    where $\scope(C) \in X^q$ is the scope of $C$
    and $\acc(C) \subseteq D^q$ is the set of accepted states.

    The task of the problem is to decide if $(X, \CC)$ is satisfiable,
    that is, decide if there exists an assignment
    $\pi \from X \to D$ such that, for all constraints $\CC$
    with $\scope(C) = (v_{\lambda_1},\dots,v_{\lambda_q})$
    it holds that
    $(\pi(v_{\lambda_1}),\dots,\pi(v_{\lambda_q})) \in \acc(C)$.
\end{restatable}
In other words,
the constraints specify valid assignments for the variables,
and we are looking for a variable assignment that satisfies all constraints.

Apart from introducing this problem,
Lampis also proved a conditional lower bound based on SETH
which allows us to base our reduction on this special type of
CSP.

\begin{restatable}[{\cite[Theorem~3.1]{Lampis20}}]{theorem}{qcspHardness}
    \label{thm:qcsp_hardness}
    For any $B \ge 2, \varepsilon > 0$ we have the following:
    assuming SETH, there is a $q$
    such that $n$-variable \qcsp with $\ell$ constraints
    cannot be solved in time
    $(B - \varepsilon)^n \cdot (n+\ell)^{\Oh(1)}$.
\end{restatable}
To obtain the correct lower bound the most suitable version of \qcsp
can be used, which then hides the unwanted technicalities.

In our case we cover numerous (actually infinitely many) problems.
This creates many positions in the potential proof where
(unwanted) properties of the sets $\sigma$ and $\rho$
have to be circumvented or exploited.
In order to minimize these places
and to make use of the special starting problem,
we split the proof in two parts.
This concept of splitting the reduction
has already proven to be successful for several other
problems~\cite{CurticapeanM16,focke_tight_2023_ub,MarxSS21,MarxSS22}.

As synchronizing point, we generalize the known \srDomSet problem
where we additionally allow that \emph{relations} are added to the graph.
Therefore, we refer to this problem as \srDomSetRel.
Intuitively one can think of these relations as constraints
that observe a predefined set of vertices, which we refer to as \emph{scope},
and enforce that only certain ways of selecting these vertices
are allowed in a valid solution.
To formally state this intermediate problem,
we first define the notion of a \emph{graph with relations}.

\begin{definition}[{Graph with Relations \cite[Definition~4.1]{focke_tight_2023_lb}}] \label{def:graphWithRelations}
    We define a \emph{graph with relations} as a tuple $G = (V,E,\CC)$,
    where $V$ is a set of vertices,
    $E$ is a set of edges on $V$,
    and $\CC$ is a set of relational constraints,
    that is, each $C\in \CC$ is in itself a tuple $(\scope(C), \acc(C))$.
    Here the \emph{scope} $\scope(C)$ of $C$ is an unordered tuple
    of $\abs{\scope(C)}$ vertices from $V$.
    Then, $\acc(C)\subseteq 2^{\scope(C)}$ is a $\abs{\scope(C)}$-ary relation
    specifying possible selections within $\scope(C)$.
    We also say that $C$ \emph{observes} $\scope(C)$.

    The size of $G$ is $\abs{G}\deff \abs{V}+\sum_{C\in \CC} \abs{\acc(C)}$.
    Slightly abusing notation, we usually do not distinguish between $G$
    and its underlying graph $(V,E)$.
    We use $G$ to refer to both objects depending on the context.
\end{definition}

We define the treewidth and pathwidth of a graph with relations
as the corresponding measure of the modified graph
that is obtained by replacing all relations by a clique
on the vertices from the scope.
See \cref{def:graphWithRelationsWidthMeasures} for a formal definition.

We lift the notion of $(\sigma,\rho)$-set from \cref{def:srSet}
to graphs with relations
by requiring that every relation has to be satisfied as well.
Hence,
the definition of \srDomSetRel follows naturally.
These definitions are a reformulation
of~\cite[Definition~4.3 and~4.8]{focke_tight_2023_lb}.

\begin{definition}[$(\sigma,\rho)$-Sets of a Graph with Relations,
        \srDomSetRel]
    \label{def:srDomSetRel}
    Fix two non-empty sets $\sigma$ and $\rho$ of non-negative integers.

    For a graph with relations $G = (V,E,\CC)$,
    a set $S \subseteq V$ is a \emph{$(\sigma,\rho)$-set} of $G$
    if and only if
    (1) $S$ is a $(\sigma,\rho)$-set of the underlying graph $(V,E)$
    and (2) for every $C \in \CC$, the set $S$ satisfies
    $S \cap \scope(C) \in \acc(C)$.
    We use $\abs{G}$ as the size of the graph
    and say that the \emph{arity} of $G$ is the maximum arity of a relation of $G$.

    The problem \srDomSetRel
    asks for a given graph with relations $G=(V, C, \CC)$,
    whether there is such a $(\sigma,\rho)$-set or not.
\end{definition}

With this intermediate problem,
we can now formally state the two parts of our lower bound proof.
The first step embeds the \qcsp problem (for appropriately chosen $B$)
into the graph problem \srDomSetRel.
We design the reduction in such a way that the resulting instance has a small pathwidth
(namely, roughly equal to the number of variables).
Combined with the conditional lower bound for \qcsp based on SETH
from \cref{thm:qcsp_hardness},
we prove the following intermediate lower bound.

\begin{restatable*}{lemma}{lowerBoundsRelations}
    \label{thm:lower_bound_relations}
    Let $\sigma$ and $\rho$ be two residue classes
    modulo $\mname \ge 2$.

    Then, for all $\epsilon > 0$, there is a constant $d$
    such that \srDomSetRel on instances of arity at most $d$
    cannot be solved in time $(\mname - \varepsilon)^{\pw} \cdot |G|^{\Oh(1)}$,
    where $\pw$ is the width of a path decomposition provided with the input instance~$G$,
    unless SETH fails.
\end{restatable*}

For the second step, we then remove the relations
from the constructed \srDomSetRel instance,
to obtain a reduction to the \srDomSet problem.
Observe that the construction from \cref{thm:lower_bound_relations}
works for the general case (even when $0 \in \rho$ is allowed).
Hence, our second step now exploits that the sets are difficult.

\begin{restatable*}{lemma}{reductionToRemoveRelations}
    \label{lem:relations:reductionToRemoveRelations}
    Let $\sigma$ and $\rho$ be two difficult residue classes
    modulo $\mname$.
    For all constants $d$,
    there is a polynomial-time reduction from \srDomSetRel
    on instances with arity $d$ given with a path decomposition of width $\pw$
    to \srDomSet
    on instances given with a path decomposition of width $\pw + \Oh(2^d)$.
\end{restatable*}

Combining these two intermediate results
directly leads to the proof of \cref{thm:pwLowerBound}.

\setcounter{mtheorem}{1}
\thmPwLowerBound
\begin{proof}
    Assume we are given a faster algorithm for \srDomSet
    for some $\epsilon > 0$.
    Let $d$ be the constant from \cref{thm:lower_bound_relations}
    such that there is no algorithm solving \srDomSetRel
    in time $(\mname-\epsilon)^\pw \cdot |G|^{\Oh(1)}$
    when the input instance~$G$ is
    given with a path decomposition of width $\pw$.

    Consider an instance $G$ of \srDomSetRel with arity $d$
    along with a path decomposition of width $\pw(G)$.
    We use \cref{lem:relations:reductionToRemoveRelations}
    to transform this instance into an instance $G'$ of \srDomSet
    with a path decomposition of width $\pw(G') = \pw(G) + \Oh(2^d)$.

    We apply the fast algorithm for \srDomSet to the instance $G'$
    which correctly outputs the answer for the original instance $G$
    of \srDomSetRel.
    The running time of this entire procedure is
    \begin{align*}
        \abs{G}^{\Oh(1)} + (\mname-\epsilon)^{\pw(G')} \cdot \abs{G'}^{\Oh(1)}
        = (\mname-\epsilon)^{\pw(G)+\Oh(2^d)} \cdot \abs{G}^{\Oh(1)}
        = (\mname-\epsilon)^{\pw(G)} \cdot \abs{G}^{\Oh(1)}
    \end{align*}
    since \(d\) is a constant only depending on $\epsilon$.
    This contradicts SETH and concludes the proof.
\end{proof}

The following highlights the main technical contributions
leading to the results from
\cref{thm:lower_bound_relations,lem:relations:reductionToRemoveRelations}.

\paragraph*{Step 1: Encoding the CSP as a Graph Problem.}

Focke et~al.\ already established the corresponding intermediate result
when $\sigma$ and $\rho$ are finite~\cite{focke_tight_2023_lb}.
Hence, we could try to reuse their lower bound
for \DomSetRel{\hat\sigma}{\hat\rho}
for two finite sets $\hat \sigma \subseteq \sigma$
and $\hat \rho \subseteq \rho$.
However, since $\sigma$ and $\rho$ are residue classes,
several solutions could be indistinguishable from each other
(not globally but locally from the perspective of a single vertex)
which would result in unpredictable behavior of the construction.
Thus, we need to come up with a new intermediate lower bound.

To prove this lower bound for \srDomSetRel,
we provide a reduction from \qcsp
where $B = \mname$
but reuse some ideas
from the known lower bounds
in~\cite{CurticapeanM16,focke_tight_2023_lb,MarxSS21,MarxSS22}.
This allows for a much cleaner reduction
(especially compared to the one from \cite{focke_tight_2023_lb})
that focuses on the conversion of a constraint satisfaction problem
into a vertex selection problem
without having to deal with technicalities.
Consult \cref{fig:lower:highLevel} for an illustration
of the high-level idea of the construction.

\begin{figure}[tp]
    \graphicspath{{figures/ipe},{figures/tikz}}
    \centering
    \begin{subfigure}[t]{.48\textwidth}
        \includegraphics[page=1,width=\textwidth]{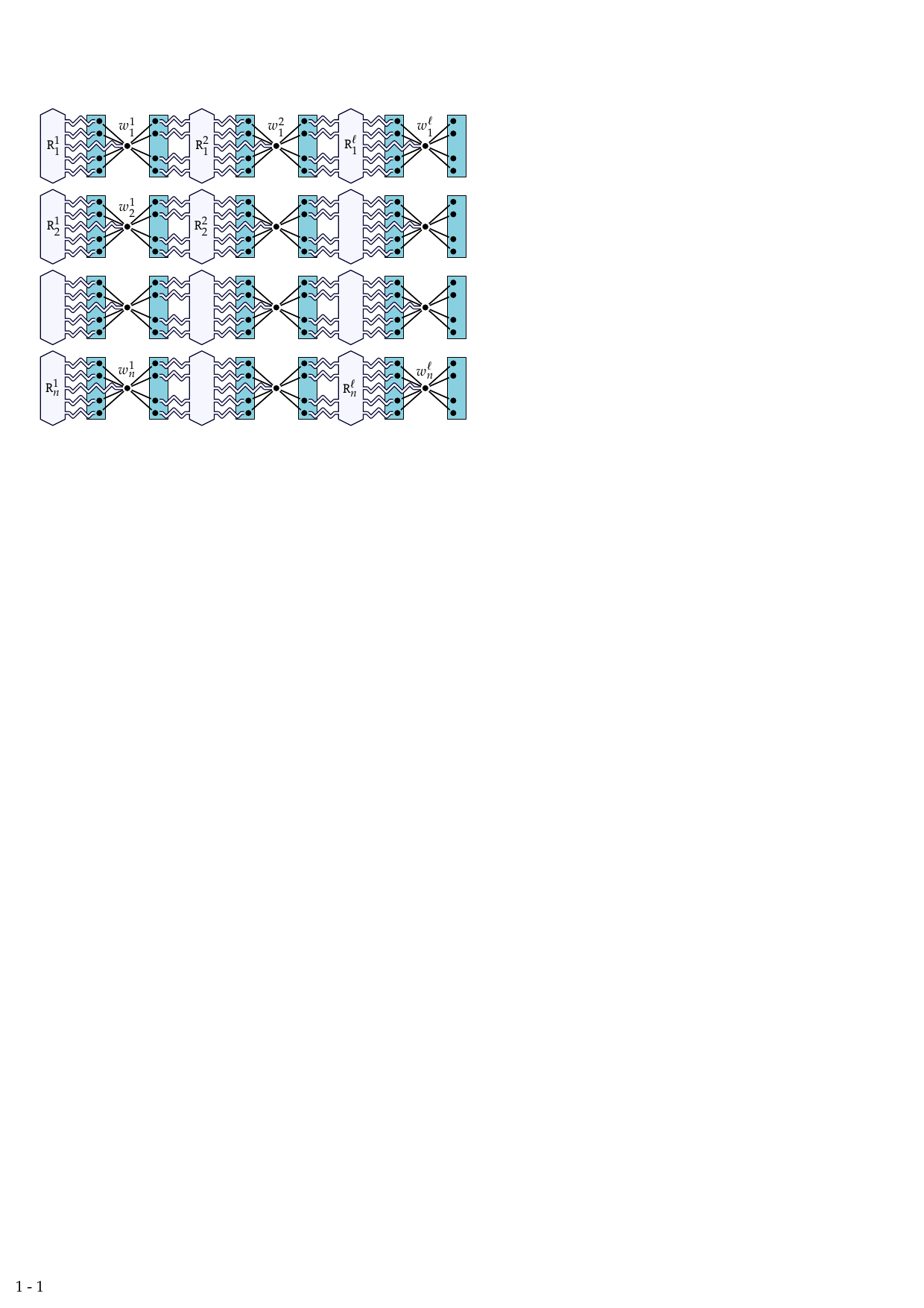}
        \caption{%
            The information vertices together with the managers
            and the consistency relations $\consistencyRelation_i^j$.}
    \end{subfigure}%
    \hfill
    \begin{subfigure}[t]{.48\textwidth}
        \centering
        \includegraphics[page=2,width=\textwidth]{construction-lower_bound.pdf}
        \caption{%
            The information vertices together with the managers
            and the constraint relations $\constraintRelation{j}$.}
    \end{subfigure}

    \caption{%
        A depiction of the construction from the lower bound
        where $\mname = 5$, $n = 4$, and $\ell = 3$.
    }
    \label{fig:lower:highLevel}
\end{figure}

Consider a \qcspa instance $I$ with $n$ variables and $\ell$ constraints.
To achieve a low treewidth (or actually pathwidth),
we construct a graph with $n \cdot \ell$ vertices,
which we refer to as \emph{information vertices},
that are arranged as an $n$ times $\ell$ grid;
rows corresponding to variables
and columns corresponding to constraints.
We refer to the information vertex from row $i$ and column $j$ as $w_i^j$.
We encode the $\mname$ different values of each variable
by the states of the information vertices in the graph.

To provide sufficiently many neighbors to these information vertices,
we introduce \emph{managers}.
In our case a manager consists of $2n$ blocks,
$n$ left blocks and $n$ right blocks,
and each block can provide up to $\mname-1$ neighbors to a single vertex.
We create one manager for each column
(i.e., constraint)
and assign one left block and one right block to each information vertex.
Then, we make each information vertex
adjacent to the two associated blocks by $\mname-1$ edges each.

We use the number of selected neighbors from the left block
to determine the state of an information vertex
(though the vertex might have selected neighbors in the right block too).
This directly relates
the states of the information vertices
to the variable assignments.

Recall that we create a separate manager for each column and that
the managers are not connected to each other.
Thus,
despite the mentioned correspondence,
even for a single row the information vertices can have different states.
Phrased differently, the encoded assignment is not necessarily consistent.
To keep treewidth low,
we cannot simply add a single big relation for each row
enforcing the intended behavior.
Instead, for each row $i$, we add a small \emph{consistency relation} $\consistencyRelation_i^j$
between every two consecutive columns $j$ and $j+1$.

The relation $\consistencyRelation_i^j$
ensures the consistency
between the information vertices $w_i^j$ and $w_i^{j+1}$,
and thus, additionally
observes the right block of $w_i^j$
and the left block of $w_i^{j+1}$.
First, $\consistencyRelation_i^j$ ensures that information vertex~$w_i^j$ is unselected.
Now assume that
$w_i^j$ has $b_1$ selected neighbors in its right block
and $w_i^{j+1}$ has $b_2$ selected neighbors in its left block.
Then, relation~$\consistencyRelation_i^j$ ensures that $b_2$ complements $b_1$
in the sense that
$b_2 = \min\rho - b_1 \mod \mname$,
that is,
$b_2$ is the smallest number such that
$b_1+b_2 \equiv_\mname \min\rho$.

It remains to analyze the influence of the information vertices
themselves
on the consistency of the encoded assignment.
By our construction,
information vertex $w_i^j$
receives $b_0$ neighbors from its left block of the manager,
receives $b_1$ neighbors from its right block of the manager,
and receives no other neighbors.
Since we consider a solution,
vertex $w_i^j$ must have a valid number of neighbors,
that is,
the solution must satisfy $b_0+b_1 \in \rho$.
Since $\rho$ is a residue class modulo $\mname$,
we get $b_0 + b_1 \equiv_\mname \min\rho$
which implies that $b_1 = \min\rho - b_0 \mod \mname$.
When combining this with the observation from the previous paragraph,
where we consider two different information vertices,
we get $b_2 = \min\rho - (\min\rho - b_0 \mod \mname) \mod \mname$
and hence, $b_2 = b_0 \mod \mname$
which implies that all information vertices of one row
have the same state.

As a last step we encode the constraints of the CSP instance.
For each constraint $C_j$ we add one
\emph{constraint relation} $\constraintRelation{j}$
which observes,
for each variable appearing in $C_j$,
the neighbors of
the corresponding information vertex in the left block of the manager
(they are needed to infer the state of the information vertices).
The relation $\constraintRelation{j}$ then accepts a selection of these vertices
if and only if it corresponds to a satisfying assignment.

This concludes the lower bound for the intermediate problem \srDomSetRel.
Next, we remove the relations and replace them by appropriate gadgets
to lift the result to \srDomSet.

\paragraph*{Step 2: Realizing the Relations.}
Formally, the second step is a reduction
from \srDomSetRel to \srDomSet.
We replace each relation by a suitable gadget
that precisely mimics the behavior of the original relation,
that is, we \emph{realize} the relation.
The realization gadget accepts a selection of vertices if and only if
the original relation also accepted this selection.
Moreover, such a gadget
must not add any selected neighbors to a vertex from the scope,
as that could affect the existence of a solution
(in the positive but also in the negative).
See \cref{def:realization} for a formal definition of a realization.

Curticapean and Marx~\cite{CurticapeanM16}
show that just two types of relations suffice
to realize arbitrary relations.
Focke et al.\ prove that for \srDomSet
only \HWeq{1} relations are needed~\cite[Corollary~8.8]{focke_tight_2023_lb},
that is, once we can realize such \HWeq{1} relations,
then every relation can be realized.
The \HWeq{1} relation
accepts if exactly one vertex from the scope of the relation is selected,
that is, if the Hamming weight of the $\sigma$-vector is exactly one.
We strengthen this result further by using an observation from \cite{MarxSS21}
such that only
realizations of \HWeq{1} with arity one, two or three
are needed.

To realize these relations,
we use an auxiliary relation.
For some set $\tau$, the relation \HWin{\tau} accepts,
if and only if the number of selected vertices from the scope of the relation,
i.e., the Hamming weight of the $\sigma$-vector,
is contained in $\tau$.
Once we set $\tau-k \deff \{t -k \mid t \in \tau\}$
to simplify notation,
our main result for realizing relations reads as follows.

\begin{restatable*}{lemma}{relationsHWOne}
    \label{lem:relations:hwOne}
    Let $\sigma$ and $\rho$ be two difficult residue classes
    modulo $\mname$.
    Then, the relation \HWone can be realized.
\end{restatable*}
When $\sigma$ and $\rho$ are difficult we have $\mname \geq 3$, and hence
this gadget directly gives \HWeq{1}
when restricting to arity at most $3$
as $\min\rho+1, \min\rho+2 \notin \rho$.
Thus, together with the previous intermediate lower bound
this concludes the proof of the lower bound.

For $\mname = 2$
the decision version is easy, so we cannot expect to realize the \HWeq{1} relation.
So, we consider the \emph{minimization version} instead
and focus on the two non-trivial cases;
for $\rho = \ODD$, we consider
$\sigma = \ODD$ and $\sigma = \EVEN$.
For the lower bounds from \cref{thm:lightsOut:lowerPW},
we modify the known \NP-hardness reductions by Sutner~\cite{sutner1988additive}
to keep pathwidth low and to obtain the matching bounds.

\section{Preliminaries}
\label{sec:prelims}

In this section we introduce some notation and concepts
which we use in the following proofs.
As we are dealing with the same problem as the work in~\cite{FockeMINSSW23},
we reuse many of their concepts and notation
so that our results blend in more seamlessly.

\subsection{Basic Notation}
\label{sec:prelims:numbers_sets_strings_vectors}

We write $\Nat = \{ n \in \ZZ \mid n \ge 0\}$
to denote the set of non-negative integers.
For integers $i$ and $j$, we write \(\fragment{i}{j}\)
for the set $\{n \in \ZZ \mid i \leq n \leq j\}$.

When \(\Sigma\) is some alphabet,
we write \(\Sigma^{n}\) for the set (or \emph{language}) of all strings
of length \(n\) over $\Sigma$.
For a string $s \in \Sigma^n$,
we index the positions of $s$ by integers from $\numb{n}$
such that \(s =  s\position{1} s\position{2}\cdots  s\position{n}\).
For a set of positions \(P \subseteq \numb{n}\),
we write \(s\position{P} \deff \bigcirc_{p \in P}\, s\position{p} \)
for the string of length \(|P|\)
that contains only the symbols of $s$ whose indices are in \(P\).
We extend these notations in the standard way to vectors.

We use the notation \(\Sigma^X\)
for a finite set $X$ (for instance, a set of vertices of a graph),
to emphasize that we index strings from \(\Sigma^{\abs{X}}\)
by elements from \(X\).

\subsection{Graphs}
\label{sec:prelims:graphs}

Unless mentioned otherwise,
a \emph{graph} is a pair $G = (V(G),E(G))$
with a finite vertex set $V(G)$
and a finite edge set $E(G) \subseteq \binom{V(G)}{2}$.
When the graph is clear from the context, we also just write $G=(V,E)$.
In this paper, all considered graphs are undirected and simple,
that is, they have no loops or multiple edges,
unless explicitly stated otherwise.
For an edge $\{u,v\} \in E(G)$, we also write $uv$ to simplify notation.

For a vertex $v \in V(G)$,
we denote by \(N_G(v)\) the \emph{(open) neighborhood} of $v$
that is, $N_G(v) \deff \{w \in V(G) \mid vw \in E(G)\}$.
We denote the \emph{closed neighborhood} of $v$ by
$N_G\position{v} \deff N_G(v) \cup \{v\}$.
The \emph{degree} of $v$ is the size of its open neighborhood, that is,
$\deg_G(v) \deff |N_G(v)|$.
When $X \subseteq V(G)$ is a set of vertices,
we define the closed neighborhood of $X$ as
$N_G\position{X} \deff \bigcup_{v \in X} N_G\position{v}$
and the open neighborhood of $X$ as $N_G(X) \deff N_G\pos{X} \setminus X$.
We may drop the subscript $G$ in all settings
if the graph is clear from the context.
We define the size of a graph as $\abs{G} = \abs{V(G)}$.

For a vertex set $X \subseteq V(G)$,
we denote by $G\position{X}$ the \emph{induced subgraph} on the vertex set $X$.
By $G-X$ we denote the induced subgraph on the complement of $X$,
and formally define $G - X \deff G\position{V(G) \setminus X}$.
For an edge set $X \subseteq E(G)$, we define $G - X \deff (V(G),E(G)\setminus X)$.

\subsection{Treewidth}
\label{sec:prelims:treewidth}

Our algorithmic results are based on tree decompositions.
For completeness,
we restate the definition and the basic properties of such a decomposition.
We refer the reader to~\cite[Chapter~7]{cyganParameterizedAlgorithms}
for a more detailed introduction to the concept.

Consider a graph $G$.
A \emph{tree decomposition} of $G$ is a pair $(T,\beta)$ that consists
of a rooted tree $T$ and a function $\beta\from V(T) \to 2^{V(G)}$ such that
\begin{enumerate}
      \item $\bigcup_{t \in V(T)} \beta(t) = V(G)$,
      \item for every edge $vw \in E(G)$, there is some node $t \in V(T)$ such that $\{u,v\} \subseteq \beta(t)$, and
      \item for every $v \in V(G)$, the set $\{t \in V(T) \mid v \in \beta(t)\}$ induces a connected subtree of $T$.
\end{enumerate}
The \emph{width} of a tree decomposition $(T,\beta)$ is defined as $\max_{t \in V(T)}|\beta(t)|-1$.
The \emph{treewidth} of a graph $G$, denoted by $\tw(G)$, is the minimum width of a tree decomposition of $G$.

In order to design algorithm based on tree decompositions,
it is usually helpful to use \emph{nice tree decompositions}.
Let $(T,\beta)$ denote a tree decomposition
and write $X_t \deff \beta(t)$ for the bag of a node $t \in V(T)$.
We say that \((T,\beta)\) is a \emph{nice tree decomposition},
or \emph{nice} for short,
if the tree $T$ is a binary tree rooted at some node $r$
such that $X_r = \emptyset$
and every node $t \in V(T)$ of the decomposition has one of the following types:
\begin{description}[itemindent=!]
      \item[Leaf Node:]
            Node $t$ has no children and an empty bag,
            that is, $t$ is a \emph{leaf} of $T$ and $X_t = \emptyset$.
      \item[Introduce Node:]
            Node $t$ has exactly one child $t'$
            and $X_t = X_{t'} \cup \{v\}$ for some $v \notin X_{t'}$,
            we say that the vertex $v$ \emph{is introduced at $t$}.
      \item[Forget Node:]
            Node $t$ has exactly one child $t'$ and $X_t = X_{t'} \setminus \{v\}$
            for some $v \in X_{t'}$,
            we say that the vertex $v$ \emph{is forgotten at $t$}.
      \item[Join Node:]
            Node $t$ has exactly two children $t_1,t_2$ and $X_t = X_{t_1} = X_{t_2}$.
\end{description}

In time $\Oh(\tw^{2} \cdot \max\{|V(G),V(T)|\})$,
we can transform every given tree decomposition $(T, \beta)$
of width $\tw$ for a graph $G$
into a nice tree decomposition of size $\Oh(\tw \cdot V(T))$ of the same width
(see, for example, \cite[Lemma 7.4]{cyganParameterizedAlgorithms}).

The concepts \emph{path decomposition} and \emph{pathwidth}
follow analogously by additionally requiring that the underlying tree~$T$
is a path.

\subsection{Partial Solutions and States}
\label{sec:prelims:partial_solutions_states}

When formally describing the algorithm and designing the lower bounds,
we frequently consider subgraphs (of the final graph)
and argue about the intersection of solutions with this subgraph.
In order to describe the interaction between the subgraph
and the remaining graph,
we use the notion of a \emph{graph with portals}.
These portals are then separating the subgraph from the remaining graph.
For the algorithmic result the portals are the vertices of the bag,
and for the lower bounds the portals are vertices in the scope of some relation.

\begin{definition}[{Graph with Portals; compare~\cite[Section~3.2]{focke_tight_2023_lb}}]
      \label{def:graph_with_portals}
      A \emph{graph with portals} $G$ is a pair $(G', U)$, where $G'$ is a graph and $U \subseteq V(G')$.
      If $U = \{u_1,\dots,u_k\}$, then we also write $(G', u_1,\dots,u_k)$ instead of $(G', U)$.

      If it is clear from the context, we also refer to a graph with portals simply as a graph.
\end{definition}

With the formal notion of a graph with relations,
we can now also define a \emph{partial solution}.
These partial solutions capture the intersection of a (hypothetical) solution
with the vertices from a graph with portals.

\begin{definition}[{Partial Solution;~\cite[Definition~3.7]{focke_tight_2023_lb}}]
      \label{def-partial-sol}
      \label{def:partialsol}
      Fix a graph with portals $(G,U)$.
      A set \(S \subseteq V(G)\) is a \emph{partial solution} (with respect to \(U\)) if
      \begin{enumerate}
            \item \label{item:realizable-1s}\label{item:realizable-1}
                  for each $v \in S \setminus U$, we have $|N(v) \cap S| \in \sigma$,
                  and
            \item \label{item:realizable-3s}\label{item:realizable-3}
                  for each $v \in V(G) \setminus (S\cup U)$, we have $|N(v) \cap S| \in \rho$.
                  \qedhere
      \end{enumerate}
\end{definition}

When designing the algorithm or when constructing the gadgets,
we usually do not want to argue about every possible partial solution
but identify those solutions that behave equivalently
when considering their extension to the remaining graph.
Formally, we associate with each partial solution for a graph with relations
a state for each portal vertex.
This state describes whether a vertex is selected or not
and how many neighbors it gets from this partial solution.

In order to argue about these different states,
we define different sets of states.%
\footnote{Note that this type of notation is standard for the \srDomSet{} problem and, for example, also used in \cite{Telle94,tellePracticalAlgorithmsPartial1993,RooijBR09,focke_tight_2023_lb,focke_tight_2023_ub}.}
\begin{definition}[States]
      \label{def:vertex_states}
      For all $i \in \Nat$,
      we define a $\sigma$-state $\sigma_i$
      and a $\rho$-state $\rho_i$.
      We then define a set
      \begin{align}
            \sigStates \deff \begin{cases}
                                   \{\sigma_0,\dots,\sigma_{\mname_\sigma - 1}\} & \text{ if $\sigma$ is a residue class modulo } \mname_\sigma, \\
                                   \{\sigma_i \mid i \in \Nat\}                  & \text{otherwise},                                             \\
                             \end{cases}
      \end{align}
      of $\sigma$-states, and a set
      \begin{align}
            \rhoStates \deff \begin{cases}
                                   \{\rho_0,\dots,\rho_{\mname_\rho - 1}\} & \text{ if $\rho$ is a residue class modulo } \mname_\rho, \\
                                   \{\rho_i \mid i \in \Nat\}              & \text{otherwise},                                         \\
                             \end{cases}
      \end{align}
      of $\rho$-states.%
      \footnote{%
            We remark that in this work we do not use real numbers.
            Thus, the set~$\rhoStates$ always denotes a set of $\rho$-states.
      }
      We denote by $\allStates \deff \sigStates \cup \rhoStates$
      the set of all states.
\end{definition}

Even if the sets are residue classes,
these states capture all relevant states.
For example, consider the case in which $\sigma$ and $\rho$
are the sets of all even integers.
Then, it does not matter whether a vertex has $2$
or $42$ selected neighbors;
both states show exactly the same behavior
with regard to adding more neighbors to the respective vertex.

With the above notation about strings and vectors in mind,
we usually define strings over the alphabets
$\sigStates$, $\rhoStates$, or $\allStates$.
In this case we usually index the positions of such strings
by a vertex of a bag or by a vertex of the scope of a relation.
Then, we can relate the partial solutions
and the states of the portal vertices via \emph{compatible strings}.

\begin{definition}[{Compatible Strings; extension of~\cite[Definition~3.8]{focke_tight_2023_lb}}]
      \label{def-compatible-general}
      \label{def:compatibleString}
      Fix a graph with portals $(G,U)$.
      A string \( x \in \allStates^{U}\) is \emph{compatible with $(G,U)$}
      if there is a partial solution $S_{x} \subseteq V(G)$ such that
      \begin{enumerate}
            \item
                  for each $v \in U \cap S_{x}$, we have \( x\position{v} = \sigma_{s}\), where
                  \begin{itemize}
                        \item $s = |N(v) \cap S_{x}| \bmod \sigMod$ when $\sigma$ is a residue class modulo $\sigMod$,
                        \item $s = |N(v) \cap S_{x}|$
                              otherwise,
                  \end{itemize}
            \item for each $v \in U \setminus S_{x}$, we have
                  \( x\position{v} = \rho_{r}\), where
                  \begin{itemize}
                        \item $r = |N(v) \cap S_{x}| \bmod \rhoMod$ when $\rho$ is a residue class modulo $\rhoMod$,
                        \item $r = |N(v) \cap S_{x}|$
                              otherwise.
                  \end{itemize}
      \end{enumerate}
      We also refer to the vertices in $S_{x}$ as being \emph{selected} and say that $S_{x}$ is a \emph{(partial) solution}, \emph{selection}, or \emph{witness} that \emph{witnesses} $x$.
\end{definition}

Then, given a graph with portals $(G, U)$,
the set of all strings compatible with this graph is of utmost importance,
we call this set the \emph{realized language} of the graph.

\begin{definition}[{Realized Language and $L$-provider;~\cite[Definition~3.9]{focke_tight_2023_lb}}]\label{def:provider}
      For a graph with portals \((G, U)\), we define its realized language as
      \[L(G,U) \coloneqq \{x \in \allStates^U \mid x \text{ is compatible with } (G,U)\}.\]

      For a language \(L \subseteq \allStates^U\), we say that \emph{$(G,U)$}
      is an \(L\)-realizer if \(L = L(G,U)\).

      For a language \(L \subseteq \allStates^U\), we say that \emph{$(G,U)$}
      is an \(L\)-provider if \(L \subseteq L(G,U)\).
\end{definition}

\section{Upper Bound}
\label{sec:upperTw}

In this section we prove the algorithmic result for \srDomSet
when the sets $\sigma$ and $\rho$ are residue classes modulo $\mname \ge 2$
when parameterizing by treewidth.
This closely matches our lower bound from \cref{thm:pwLowerBound}.
Note that when $\mname = 1$, the sets contain all numbers
and thus, the problem is trivially solvable
(any subset of the vertices is a solution).
Formally, we prove \cref{thm:algorithm_runtime}.

\thmTwUpperBound*

We use the known results from \cite{focke_tight_2023_ub}
as a starting point and extend these results to the case
when we have residue classes as sets.
This requires us to adjust several concepts and introduce new results
where the results from the finite case are not applicable.
However, as we are dealing with residue classes only,
this allows a simpler presentation of the results.

As already mentioned earlier,
the final algorithm is based on dynamic programming
on the tree decomposition of the given graph.
In order to obtain an upper bound,
which matches the stated lower bound from \cref{thm:pwLowerBound},
there are three main problems that we have to overcome:
\begin{itemize}
    \item
          We need to prove a tight bound on the \emph{number of different types} of
          partial solutions for a certain graph.
          We store these states for each node
          of the tree decomposition
          in order to correctly solve the problem.
    \item
          Since the running time of the join operation (from the next step)
          depends on the size of the space,
          we additionally need to \emph{compress} the space of state-vectors.
          A simple bound on the number of states is actually not sufficient.
    \item
          To compute the join nodes efficiently
          without dominating the running time of the problem,
          we need to employ a fast convolution technique to compute the join nodes.
\end{itemize}

We handle these problems in the stated order and
start in \cref{sec:upperTW:boundSingle} by bounding the
number of different types of partial solutions for a graph with portals.
Then, in \cref{sec:upperTw:compression} we formally introduce
the concept of compressions,
which we then need in \cref{sec:upperTW:join} to state the procedure
which computes the join nodes efficiently.
As a last step we state the final algorithm in \cref{sec:upperTW:dp}.

Before moving to the first step,
we formally define how a state vector can be decomposed
into a $\sigma$-vector and weight-vector.
The definition closely mirrors~\cite[Definition~4.2]{focke_tight_2023_ub}
adjusted to our setting.

\begin{definition}[Decomposing States]
    \label{def:vectors}
    For a string \(x \in \allStates^n\) representing a state-vector,
    we define
    \begin{itemize}
        \item
              the \emph{\(\sigma\)-vector of \(x\)} as \(\sigvec{x} \in \{0, 1\}^n\)
              with
              \[
                  \sigvec{x}\vposition{i} \deff \begin{cases}
                      1, & \text{if } x\position{i} \in \sigStates, \\
                      0, & \text{if } x\position{i} \in \rhoStates,
                  \end{cases}
              \]
        \item
              and the \emph{weight-vector of \(x\)} as \(\degvec{x} \in \ZZ_{\geq 0}^n\)
              with
              \[
                  \degvec{x}\vposition{i} \deff c,
                  \quad\text{where } x\position{i} \in \{\sigma_c, \rho_c \}
                  .
              \]
    \end{itemize}
    We extend this notion to languages \(L \subseteq \allStates^n\)
    in the natural way.
    Formally,
    we write \(\sigvec{L} \deff \{ \sigvec{x} \mid x \in L\}\)
    for the set of all \(\sigma\)-vectors of~\(L\),
    and we write \(\degvec{L} \deff \{ \degvec{x} \mid x \in L\}\)
    for the set of all weight-vectors of \(L\).
\end{definition}

\subsection{Bounding the Size of A Single Language}
\label{sec:upperTW:boundSingle}

We start our theoretical considerations by considering
the realized language of a graph with portals.
The goal of this section is to show that for any graph with portals $(G,U)$,
the size of the realized language is small.

To formally state the result,
we first introduce the concept of a \emph{sparse} language.

\begin{definition}[{Sparse Language; compare \cite[Section~4.1]{focke_tight_2023_ub}}]
    \label{def:sparse_language}
    Let $\sigma$ and $\rho$ denote two residue classes
    modulo $\mname \ge 2$.
    We say that a language $L \subseteq \mathbb{A}^n$ is \emph{sparse}
    if for all $x,y \in L$
    the following holds:
    $\sigvec{x} \cdot \degvec{y} \equiv_\mname \sigvec{y} \cdot \degvec{x}$.
\end{definition}

Even though the definition of a sparse language does not
say anything about the size of the realized language,
we show that this is the correct notation by proving the following result.
For the case when both sets are finite,
the result is covered by~\cite[Theorem~4.4]{focke_tight_2023_ub}
which we extend to residue classes.

\begin{restatable*}{lemma}{thmUpperBoundOnLanguage}
    \label{thm:upper:boundOnLanguage}
    Let $\sigma$ and $\rho$ denote two residue classes
    modulo $\mname \geq 2$.
    Every sparse language $L \subseteq \mathbb{A}^n$ satisfies
    $|L| \leq \mname^n$.
\end{restatable*}

Instead of later directly proving that a language is sparse,
we usually use the following sufficient condition,
which intuitively reads as,
``if the number of selected non-portal vertices for every pair of solutions
is equal modulo $\mname$, then the language is sparse''.
From an intuitive perspective this can be understood as proving that
every language is the union of $\mname$ sparse languages.

\lemStructuralProperty
\begin{proof}
    We observe that the proof of~\cite[Lemma~4.3]{focke_tight_2023_ub}
    does not use the finiteness of $\sigma$ or $\rho$.
    Although this means that the proof works for our setting as well,
    we provide it here for completeness.%
    \footnote{%
        Actually their proof and our proof only require
        that $\sigma$ and~$\rho$ are $\mname$-strucutured.
    }
    We denote by $s \in \sigma$ and $r \in \rho$
    two elements of the sets
    in the following.

    We prove the claim by counting edges between $S_x$ to $S_y$
    in two different ways.
    First we count the edges as going from $S_x$ to $S_y$.
    Let $E(X \to Y) \deff \{(u,v) \in E(G) \mid u \in X, v \in Y\}$
    denote the set of edges from~$X$ to~$Y$.

    We partition the vertices of $S_x$ into
    (1) vertices that are neither in $U$ nor $S_y$,
    (2) the vertices that are in $S_y$,
    and (3) the vertices that are in $U$.
    Every vertex in $S_x$ that is neither in $S_y$ nor in $U$,
    is unselected in $S_y$ and thus,
    the number of neighbors it has in $S_y$ must be in $\rho$
    because $S_y$ is a partial solution.
    Especially, this number must be congruent $r\bmod \mname$
    because $\rho$ is $\mname$-structured.
    If a vertex is in $S_x$ and in $S_y$ but not in $U$,
    then the number of neighbors it has in $S_y$
    must be congruent $s \bmod \mname$ for the analogous reason.
    It remains to elaborate on the edges leaving from vertices in $U$.
    Such a vertex only contributes to the count if it is selected in $S_x$,
    i.e., if the $\sigma$-vector is $1$ at the considered position.
    The actual number of neighbors such a vertex receives in $S_y$
    is determined by the entry of the weight-vector for $S_y$
    at the corresponding coordinate.
    Combining these observations yields,
    \begin{align*}
        |{E(S_x \to S_y)}|
        \equiv_\mname~
         & r \cdot |S_x \setminus (S_y \cup U)|        \\
         & + s \cdot |(S_x \cap S_y) \setminus U|      \\
         & + \sigvec{x} \cdot \degvec{y}               \\
        \equiv_\mname~
         & r \cdot (|S_x \setminus U| - |S_x \cap S_y|
        + |S_x \cap S_y \cap U|)                       \\
         & + s \cdot |(S_x \cap S_y) \setminus U|      \\
         & + \sigvec{x} \cdot \degvec{y}
        .
    \end{align*}

    Flipping the roles of $S_x$ and $S_y$ gives
    \begin{align*}
        |{E(S_y \to S_x)}|
        \equiv_\mname & ~
        r \cdot (|S_y \setminus U| - |S_x \cap S_y|
        + |S_x \cap S_y \cap U|)                                   \\
                      & + s \cdot \abs{(S_y \cap S_x) \setminus U} \\
                      & + \sigvec{y} \cdot \degvec{x}
        .
    \end{align*}

    Observe that we have $|E(S_x \to S_y)| = |E(S_y \to S_x)|$
    since both sets count every edge from solution $S_x$
    to solution $S_y$ exactly once and the edges are undirected.
    Using these properties and the assumption that
    $|S_x \setminus U| \equiv_\mname |S_y \setminus U|$,
    we get from the combination of the above equations that
    \[
        \sigvec{x} \cdot \degvec{y} \equiv_\mname \sigvec{y} \cdot \degvec{x}
        .
        \qedhere
    \]
\end{proof}

Toward showing that sparse languages are indeed of small size,
we establish one auxiliary property that
immediately follows from the definition of a sparse language.

\begin{lemma}[{Compare~\cite[Lemma~4.6]{focke_tight_2023_ub}}]
    \label{thm:sparse_language_same_sigma}
    Let $\sigma$ and $\rho$ denote two residue classes
    modulo $\mname \ge 2$.
    Let
    $L \subseteq \mathbb{A}^n$ be a sparse language.
    For any three strings $x, y, z \in L$ with $\sigvec{x} = \sigvec{y}$,
    we have
    \[
        \left (\degvec{x} - \degvec{y} \right ) \cdot \sigvec{z}
        \equiv_\mname 0
        .
    \]
\end{lemma}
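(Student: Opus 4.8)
The plan is to use the sparseness hypothesis twice, pairing the ``free'' string $z$ once with $x$ and once with $y$, and then subtract the two resulting congruences. Since $L$ is sparse, by \cref{def:sparse_language} we have both $\sigvec{x}\cdot\degvec{z} \equiv_\mname \sigvec{z}\cdot\degvec{x}$ (applying the defining property to the pair $x,z \in L$) and $\sigvec{y}\cdot\degvec{z} \equiv_\mname \sigvec{z}\cdot\degvec{y}$ (applying it to the pair $y,z \in L$). These are the only two instances of sparseness we will invoke.

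Next I would subtract the second congruence from the first, obtaining
\[
    \sigvec{x}\cdot\degvec{z} - \sigvec{y}\cdot\degvec{z}
    \equiv_\mname
    \sigvec{z}\cdot\degvec{x} - \sigvec{z}\cdot\degvec{y}.
\]
On the left-hand side I factor out $\degvec{z}$ to get $(\sigvec{x}-\sigvec{y})\cdot\degvec{z}$, which vanishes because the hypothesis $\sigvec{x}=\sigvec{y}$ makes $\sigvec{x}-\sigvec{y}$ the zero vector. On the right-hand side I factor out $\sigvec{z}$ to get $\sigvec{z}\cdot(\degvec{x}-\degvec{y})$. Combining the two sides yields $\sigvec{z}\cdot(\degvec{x}-\degvec{y}) \equiv_\mname 0$, which is exactly the claimed identity (the dot product is commutative, so $\sigvec{z}\cdot(\degvec{x}-\degvec{y}) = (\degvec{x}-\degvec{y})\cdot\sigvec{z}$).

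Honestly, there is no real obstacle here: the only mild point of care is making sure the bilinearity of the dot product over $\ZZ$ is used correctly so that the factorizations $(\sigvec{x}-\sigvec{y})\cdot\degvec{z}$ and $\sigvec{z}\cdot(\degvec{x}-\degvec{y})$ are legitimate before reducing modulo $\mname$, and to note that $\mname \ge 2$ is not actually needed for this particular step (it is inherited from the standing assumptions on the sets). The lemma is essentially an immediate algebraic consequence of the sparseness definition, and I expect the write-up to be just a few lines.
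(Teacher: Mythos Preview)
Your proposal is correct and essentially identical to the paper's proof: both apply sparseness to the pairs $(x,z)$ and $(y,z)$ and then use $\sigvec{x}=\sigvec{y}$ to collapse the difference, with only cosmetic differences in how the congruences are combined (you subtract both at once, the paper chains them).
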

\begin{proof}
    We reproduce the proof of the original result in~\cite{focke_tight_2023_ub}.
    Let $L$ be a sparse language,
    and consider state-vectors $x, y, z \in L$
    with $\sigvec{x} = \sigvec{y}$.
    Because $L$ is sparse we know that
    \[
        \sigvec{z} \cdot \degvec{x} \equiv_\mname \sigvec{x} \cdot \degvec{z}
    \]
    and by $\sigvec{x} = \sigvec{y}$, this gives
    \[
        \sigvec{z} \cdot \degvec{x} \equiv_\mname \sigvec{y} \cdot \degvec{z}.
    \]
    When using the properties of sparsity for $y$ and $z$
    we have
    \[
        \sigvec{y} \cdot \degvec{z} \equiv_\mname \sigvec{z} \cdot \degvec{y}
        .
    \]
    Combining the last two results, yields
    \[
        \sigvec{z} \cdot \degvec{x} \equiv_\mname \sigvec{z} \cdot \degvec{y}
    \]
    and rearranging this equivalence concludes the proof.
\end{proof}

Intuitively, the condition of the previous lemma
significantly restricts the number of possible weight-vectors
of two different strings with the same $\sigma$-vector.
This restriction is the reason why sparse languages
turn out to be of small size.
To formalize how restrictive this condition is exactly,
we introduce the notion of $\sigma$-defining sets.
Such a set depends only on a set of binary vectors,
for example, the $\sigma$-vectors of a sparse language,
and provides us $\sigma$-vectors of the language that are very similar.

\begin{definition}[{$\sigma$-defining set; compare~\cite[Definition~4.7]{focke_tight_2023_ub}}]\label{def:sig-defining-set}
    \label{rem:partition-depends-on-sigma-vectors}
    A set \(S \subseteq \numb{n}\) is
    \emph{\(\sigma\)-defining for \(X \subseteq \{0,1\}^n\)}
    if \(S\) is an inclusion-minimal set of positions that
    uniquely characterize the vectors of $X$, that is,
    for all \(u, v \in X\),
    we have
    \begin{equation*}
        u\vposition{S} = v\vposition{S}
        \quad\text{implies}\quad
        u = v.
    \end{equation*}
    For a $\sigma$-defining set $S$ of a set $X \subseteq \{0,1\}^n$,
    we additionally define the complement of $S$
    as $\compl S \deff \numb{n} \setminus S$.
\end{definition}

The name of the $\sigma$-defining sets comes from the fact
that we usually use a set of $\sigma$-vectors for the set $X$.

\begin{fact}[{\cite[Remark~4.8]{focke_tight_2023_ub}}]
    \label{rem:witness}
    As a \(\sigma\)-defining \(S\) of a set $X \subseteq \{0,1\}^n$
    is (inclusion-wise) minimal,
    observe that, for each position \(i \in S\),
    there are pairs of witness vectors
    $\witnessvec[i]{1}, \witnessvec[i]{0}, \in X$
    that differ (on \(S\)) only at position \(i\),
    with \(\witnessvec[i]{1}\vposition{i} = 1\), that is,
    \begin{itemize}
        \item $\witnessvec[i]{1}\vposition{S \setminus i}
                  = w_{0,i}\vposition{S \setminus i}$,
        \item $\witnessvec[i]{1}\vposition{i} = 1$, and
        \item $\witnessvec[i]{0}\vposition{i} = 0$.
    \end{itemize}
    We write
    \(\CW_{S} \deff \{ \witnessvec[i]{1}, \witnessvec[i]{0} \mid i \in S \}\)
    for a set of witness vectors of \(X\).
\end{fact}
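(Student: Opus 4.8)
The plan is to derive the claim directly from the inclusion-minimality of a $\sigma$-defining set, so the argument is short and entirely combinatorial. First I would fix an arbitrary position $i \in S$. Since $S$ is, by \cref{def:sig-defining-set}, an \emph{inclusion-minimal} set of positions that uniquely characterizes the vectors of $X$, the proper subset $S \setminus \{i\}$ cannot have this property; that is, there must exist two \emph{distinct} vectors $u, v \in X$ with $u\vposition{S \setminus i} = v\vposition{S \setminus i}$. This is the only place where minimality is used, and it is the whole point of the statement.

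Next I would pin down \emph{where} $u$ and $v$ disagree. Because $S$ itself is $\sigma$-defining, $u\vposition{S} = v\vposition{S}$ would force $u = v$, contradicting $u \ne v$; hence $u\vposition{S} \ne v\vposition{S}$. Combined with $u\vposition{S \setminus i} = v\vposition{S \setminus i}$, the unique coordinate of $S$ at which they can differ is $i$, so $u\vposition{i} \ne v\vposition{i}$. Since $X \subseteq \{0,1\}^n$, one of these two entries is $1$ and the other is $0$, so I would relabel the two vectors, calling $\witnessvec[i]{1}$ the one with a $1$ at position $i$ and $\witnessvec[i]{0}$ the one with a $0$ there. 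Then $\witnessvec[i]{1}\vposition{S \setminus i} = \witnessvec[i]{0}\vposition{S \setminus i}$, $\witnessvec[i]{1}\vposition{i} = 1$, and $\witnessvec[i]{0}\vposition{i} = 0$, which are exactly the three asserted bullet points; ranging over all $i \in S$ then produces the set $\CW_S$ of witness vectors.

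I do not expect a genuine obstacle here: the statement is a routine unwinding of minimality, and the only things that need a word of care are bookkeeping. One should note that a $\sigma$-defining set always exists, since the full index set $\numb{n}$ trivially characterizes $X$ and hence some inclusion-minimal subset does too, and that when $\abs{X} \le 1$ one has $S = \emptyset$, making the claim vacuous. It is also worth emphasizing that the binary alphabet is precisely what lets us attach the labels $\witnessvec[i]{1}$ and $\witnessvec[i]{0}$ from a single differing coordinate; over a larger alphabet the same argument would only yield ``two vectors of $X$ that differ on $S$ solely at position $i$'', without the clean $0/1$ split. This mirrors the argument behind \cite[Remark~4.8]{focke_tight_2023_ub} and uses no property of $\sigma$ or $\rho$ beyond the definition of a $\sigma$-defining set.
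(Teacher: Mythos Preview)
Your proposal is correct and is precisely the argument the paper has in mind; the paper itself states this as a \texttt{fact} without a dedicated proof, treating it as an immediate observation from inclusion-minimality, so your write-up is a faithful fleshing-out of what is left implicit there. The only place the paper actually argues about witness vectors is inside the algorithmic proof of \cref{thm:sig-def-set}, where they are produced constructively, but the underlying reason they exist is exactly the minimality argument you give.
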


Before we start using these $\sigma$-defining sets,
we first show how to compute them efficiently
along with the witness vectors.
The result is similar to~\cite[Lemma~4.28]{focke_tight_2023_ub}.
However, we improve the running time of the algorithm
by using a more efficient approach to process the data.

\begin{lemma}
    \label{thm:sig-def-set}
    Given a set \(X \subseteq \{0,1\}^n\),
    we can compute a \(\sigma\)-defining set \(S\) for $X$,
    as well as a set of witness vectors \(\CW_{S}\) for \(S\),
    in time \(\Oh( |X| \cdot n^3)\).
\end{lemma}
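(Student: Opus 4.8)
The plan is to obtain $S$ by greedily deleting positions, and then to extract the witness vectors $\CW_S$ from the minimality certificates that this deletion produces. The single subroutine we need is the following: given a position set $P \subseteq \numb{n}$, decide whether the projection $x \mapsto x\vposition{P}$ is injective on $X$ and, if it is not, return a pair of distinct vectors $u, v \in X$ with $u\vposition{P} = v\vposition{P}$. I would implement this by inserting the length-$|P|$ strings $x\vposition{P}$ into a binary trie: since every trie node has at most two children, all projected strings have the same length $|P| \le n$, and (after precomputing the sorted list of positions in $P$) extracting $x\vposition{P}$ takes $\Oh(n)$ time, each insertion costs $\Oh(n)$, and the first time two distinct members of $X$ reach the same leaf we have found the desired colliding pair. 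Hence one call runs in time $\Oh(|X| \cdot n)$. (Radix-sorting the pairs $(x\vposition{P}, x)$ is an equally good alternative; it is this kind of batched processing that improves on the approach used for the analogous statement in~\cite{focke_tight_2023_ub}.)

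To compute $S$, I would initialize $S \deff \numb{n}$, for which the projection is the identity and hence trivially injective, and then examine the positions $i = 1, 2, \dots, n$ one at a time, replacing $S$ by $S \setminus \{i\}$ whenever the subroutine reports that the projection to $S \setminus \{i\}$ is still injective on $X$. Injectivity of the projection to the current $S$ is a loop invariant, so the final $S$ uniquely characterizes $X$ in the sense of \cref{def:sig-defining-set}; it remains to establish inclusion-minimality. Fix any $i$ in the final set $S$ and consider the iteration in which $i$ was examined: the then-current set was some $S' \supseteq S$ with $i \in S'$, and since $i$ was not deleted the projection to $S' \setminus \{i\}$ was not injective, witnessed by some $u \neq v$ in $X$ that agree on all positions of $S' \setminus \{i\}$. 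As $S \setminus \{i\} \subseteq S' \setminus \{i\}$, the same $u, v$ agree on all positions of $S \setminus \{i\}$, so the projection to $S \setminus \{i\}$ is not injective; thus no position may be removed from $S$, i.e.\ $S$ is inclusion-minimal. This phase makes $n$ calls to the subroutine, costing $\Oh(|X| \cdot n^2)$ time.

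To compute $\CW_S$, I would, for each $i \in S$, invoke the subroutine once more on $P \deff S \setminus \{i\}$. By the inclusion-minimality just shown, the projection to $P$ is not injective, so the call returns distinct $u, v \in X$ with $u\vposition{S \setminus \{i\}} = v\vposition{S \setminus \{i\}}$; since the projection to $S$ \emph{is} injective and $u \neq v$, the vectors $u$ and $v$ disagree at position $i$, so exactly one of them---say $u$---has entry $1$ there. Taking $\witnessvec[i]{1} \deff u$ and $\witnessvec[i]{0} \deff v$ yields exactly the three conditions listed in \cref{rem:witness}. This phase makes $|S| \le n$ further calls, so in total the algorithm runs in time $\Oh(|X| \cdot n^2)$, which is well within the claimed $\Oh(|X| \cdot n^3)$ bound.

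The argument is essentially bookkeeping, and the only step that requires a little care is the monotonicity observation used to certify that the greedily obtained $S$ is genuinely inclusion-minimal---equivalently, that the collisions harvested in the last phase really do produce the witness vectors of \cref{rem:witness}. Beyond getting this point and the subroutine's running time right, there is no substantive obstacle.
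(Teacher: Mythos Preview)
Your proposal is correct and follows essentially the same approach as the paper: greedy deletion starting from $S=\numb{n}$, with the monotonicity observation $S_{\text{final}}\subseteq S'$ to certify minimality, and collisions as witness pairs. The only cosmetic differences are that the paper extracts each witness pair inline during the deletion loop (rather than in a second pass over the final $S$) and uses a balanced-tree map instead of a trie, which is why its per-iteration bound is $\Oh(|X|\cdot n^2)$ rather than your $\Oh(|X|\cdot n)$; your tighter analysis is fine and well within the stated $\Oh(|X|\cdot n^3)$.
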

\begin{proof}
    Our algorithm maintains a set $S \subseteq \numb{n}$
    as the candidate for the $\sigma$-defining set for $X$
    and iteratively checks if a position can be removed from $S$
    or whether it is part of the $\sigma$-defining set.
    Formally, the algorithm is as follows.
    \begin{itemize}
        \item
              Initialize the candidate set as $S \deff \numb{n}$.
        \item
              For all $i$ from $1$ to $n$, repeat the following steps:
              \begin{itemize}
                  \item
                        Initialize an empty map data structure,
                        for example, a map based on binary trees.
                  \item
                        Iterate over all $v \in X$:

                        If there is no entry with key $v\pos{S \setminus \{i\}}$ in the map,
                        we add the value $v$ with the key $v\pos{S \setminus \{i\}}$
                        to the map.

                        If there is already an entry with key $v\pos{S \setminus \{i\}}$
                        and value $v'$ in the map.
                        Then, define the two witness-vectors
                        $\witnessvec[i]{0}$ and $\witnessvec[i]{1}$
                        for position $i$
                        as $v$ and $v'$ (depending on $i$th bit of $v$ and $v'$),
                        and continue with the (outer) for-loop.

                  \item
                        Remove position $i$ from $S$
                        and continue with the next iteration of the for-loop.
              \end{itemize}
        \item
              Output $S$ as the $\sigma$-defining set
              and $\CW_{S} \deff \{ \witnessvec[i]{0}, \witnessvec[i]{1} \mid i \in S\}$
              as the set of witness vectors.
    \end{itemize}
    From the description of the algorithm,
    each operation of the for-loop takes time
    $\Oh(|X| \cdot \log(|X|) \cdot n) \subseteq \Oh(|X| \cdot n^2)$ when utilizing an efficient map data structure.
    It follows that $S$ can be computed in time $\Oh(|X| \cdot n^3)$.

    In order to prove the correctness,
    we denote by $S_i$ the set $S$ after the $i$th iteration of the for-loop
    and denote by $S_0 = \numb{n}$ the initial set.
    Hence, the algorithm returns $S_n$.
    We prove by induction that, for all $i \in \numbZ{n}$,
    if there are $u,v \in X$ with $x\pos{S_i} = y\pos{S_i}$,
    then $u = v$.
    This is clearly true for $i = 0$ by the choice of $S_0$.
    For the induction step, assume that the claim holds
    for an arbitrary but fixed $i-1 \in \numbZ{n-1}$
    where $S_i \neq S_{i-1}$
    as otherwise the claim follows by the induction hypothesis.
    By the design of the algorithm,
    we know that $S_i = S_{i-1} \setminus \{i\}$.
    Moreover, as $i$ was removed from $S_{i-1}$,
    there are no two vectors $u,v \in X$
    such that $u\pos{S_{i-1}\setminus \{i\}} = v\pos{S_{i-1}\setminus\{i\}}$.
    This directly implies that, for all $u,v \in X$,
    the vectors $u$ and $v$ differ on $S_i \setminus\{i\}$
    and thus, are not equal.

    It remains to prove that $S_n$ is minimal.
    Assume otherwise and let $i \in S_n$ be some index
    such that, for all $u,v \in X$, we still have $u = v$
    whenever
    \begin{equation}
        u \pos{S_n \setminus \{i\}} = v\pos{S_n \setminus \{i\}}
        \label{eqn:sigDef:first}
        .
    \end{equation}
    As the algorithm did not remove $i$ from $S_{i-1}$,
    we know that there must be at least two vectors $\hat u, \hat v \in X$
    with $\hat u \neq \hat v$ such that
    \begin{equation}
        \hat u\pos{S_{i-1} \setminus \{i\}} = \hat v \pos{S_{i-1}\setminus\{i\}}
        \label{eqn:sigDef:second}
        .
    \end{equation}
    Since the algorithm satisfies
    $S_0 \supseteq S_1 \supseteq \dots \supseteq S_n$,
    we especially have $S_n \subseteq S_{i-1}$
    and thus,
    whenever \cref{eqn:sigDef:second} holds,
    then also \cref{eqn:sigDef:first} holds
    which implies that $\hat u = \hat v$.
    As this contradicts $\hat u \neq \hat v$,
    we know that $S_n$ is a $\sigma$-defining set.

    To see that the vectors in $\CW_S$ are indeed witness-vectors,
    recall that the algorithm guarantees that
    the distinct vectors $\witnessvec[i]{0}$ and $\witnessvec[i]{1}$ satisfy $\witnessvec[i]{0} \pos{S_{i-1}\setminus\{i\}}
        = \witnessvec[i]{1} \pos{S_{i-1}\setminus\{i\}}$.
    Moreover, the inductive proof above shows that $\witnessvec[i]{0} \pos{S_{i-1}}
        \not = \witnessvec[i]{1} \pos{S_{i-1}}$, and hence,
    $\witnessvec[i]{0} \pos i \not = \witnessvec[i]{1} \pos i$.
    Finally, since $S_{i-1} \subseteq S_n$,
    the vectors agree on $S_n \setminus \{i\}$
    which concludes the proof.
\end{proof}

Now we can concretize how \cref{thm:sparse_language_same_sigma}
restricts the possible weight-vectors of two strings
with the same $\sigma$-vector.
Intuitively, what the $\sigma$-defining set is for the vector $\sigvec{v}$,
this is the set $\compl S$ for the weight-vector.
We could say that $\compl S$ is a \emph{weight-defining} set.
This extends~\cite[Lemma~4.9]{focke_tight_2023_ub}
to the case of residue classes.

\lemSComplDeterminesWeight
\begin{proof}
    We adjust the proof in~\cite{focke_tight_2023_ub} to our setting.
    Let $L$ be a sparse language
    with $S$ as a $\sigma$-defining set for $\sigvec{L}$.
    Furthermore, consider two state-vectors $x$ and $y$ of $L$
    with the same $\sigma$-vector, i.e., $\sigvec{x} = \sigvec{y}$,
    such that $\degvec{x}\vposition{\compl{S}} = \degvec{y}\vposition{\compl{S}}$.

    Fix a position $\ell \in S$.
    We have binary witness vectors $\witnessvec{1}$ and $\witnessvec{0}$,
    that agree on all positions of $S$ except for position $\ell$
    and satisfy $\witnessvec{1}\pos{\ell}= 1$ and $\witnessvec{0}\pos{\ell}= 0$.

    Using \cref{thm:sparse_language_same_sigma} twice we obtain
    \begin{align*}
        (\degvec{x} - \degvec{y}) \cdot (\witnessvec{1}- \witnessvec{0})\equiv_\mname 0.
    \end{align*}
    By the assumption about $x$ and $y$,
    their weight-vectors agree on all positions in $\compl S$.
    But since $\witnessvec{1}$ and $\witnessvec{0}$
    are identical on all positions of $S$
    except for position $\ell$,
    this means that
    \begin{align*}
        0 \equiv_\mname~
                       & (\degvec{x} - \degvec{y}) \cdot (\witnessvec{1} - \witnessvec{0}) \\
        \equiv_\mname~ & \degvec{x}\pos{\ell} - \degvec{y}\pos{\ell}
        .
    \end{align*}
    Since all values of the weight-vector are less than $\mname$,
    the consequence $\degvec{x}\pos{\ell} = \degvec{y}\pos{\ell}$
    concludes the proof.
\end{proof}

The previous result from \cref{thm:s_complement_determines_weight_vector}
gives rise to a strategy for counting all the strings of a sparse language.
Namely, we can fix a $\sigma$-vector
and count the number of strings with this fixed vector,
which is the same as the number of weight-vectors of strings
with this fixed $\sigma$-vector.
When counting the weight-vectors,
we can now use the property we have just seen
to show that not every weight-vector can actually occur.

\thmUpperBoundOnLanguage
\begin{proof}
    Let $L \subseteq \mathbb{A}^n$ be a sparse language.
    Compute a $\sigma$-defining set $S$ for $\sigvec{L}$.

    The size of $\sigvec{L}$ is bounded by $2^{|S|}$ per definition.
    Fix an arbitrary $\sigma$-vector $\vec s$ of $\sigvec{L}$.
    We now count the number of possible weight-vectors with $\sigma$-vector $\vec s$.
    By \cref{thm:s_complement_determines_weight_vector}, the positions $\compl S$ of the weight-vector uniquely determine the positions $S$ of the weight-vector.
    Hence, we obtain a bound on the size of the language of
    \[
        |L| \leq 2^{|S|} \cdot \mname^{|\compl S|} \leq \mname^n
        .
        \qedhere
    \]
\end{proof}

\subsection{Compressing Weight-Vectors}
\label{sec:upperTw:compression}

In the previous section we have seen that the size of a language
cannot be too large.
However, even though this already provides a bound
on the size of these languages,
the space where these languages come from is still large.
That is, even when considering a sparse language $L \subseteq \allStates^n$,
we would have to consider all states in $\allStates^n$
when doing convolutions as we do not know
which vectors might actually appear in the solution.
The goal of this section is to provide the concept of \emph{compression}
where we do not only compress the vectors of a language
but actually compress the space where these vectors are from.

In previous sections we have seen that certain positions of the weight-vectors
actually uniquely determine the remaining positions.
Hence, a natural idea for the compression of strings
is to store the values at important positions only,
while ensuring that the values at the other positions can be reconstructed.

We first introduce the notion of the \emph{remainder}.
Intuitively, the remainder can be used to reconstruct the values of the weight-vector
which have been compressed.

\begin{definition}[{Compare~\cite[Definition~4.21]{focke_tight_2023_ub}}]
    \label{def:remainder_vector}
    Let $\sigma$ and $\rho$ denote two residue classes
    modulo $\mname \ge 2$.
    Let $L \subseteq \allStates^n$ denote a (non-empty) sparse language.
    Consider a set $X$
    with $X \subseteq \sigvec{L}$ and $X \not = \emptyset$,%
    \footnote{Think of $X = \sigvec{L}$ for now.}
    and let \(S\) denote a \(\sigma\)-defining set for $X$
    with a corresponding set of witness vectors
    \(\CW_{S} \subseteq \sigvec{L}\).
    For two vectors \(u, o \in \fragment0{\mname-1}^n\)
    and a position \(\ell\in S\),
    we define the \emph{remainder}
    \(\remvec[\CW_{S}]{u}{o}\) at position \(\ell\) as
    \[
        \remvec[\CW_{S}]{u}{o}\vposition{\ell}
        \coloneqq \sum_{i \in \compl{S}} \big(u\vposition{i} -
        o\vposition{i}\big)\cdot\big(w_{1,\ell}\vposition{i} -
        w_{0,\ell}\vposition{i}\big).
        \qedhere
    \]
\end{definition}

As a next step we show that
this remainder is chosen precisely in such a way
that we can easily reconstruct the values of the weight-vector on positions $S$.

\begin{lemma}[{Compare~\cite[Remark~4.22]{focke_tight_2023_ub}}]
    \label{thm:remainder_shift}
    Let $\sigma$ and $\rho$ denote two residue classes
    modulo $\mname$.
    Let $L \subseteq \allStates^n$ denote a (non-empty) sparse language.
    Consider a set $X$
    with $X \subseteq \sigvec{L}$ and $X \not = \emptyset$,
    and let \(S\) denote a \(\sigma\)-defining set for $X$
    with a corresponding set of witness vectors
    \(\CW_{S} \subseteq \sigvec{L}\).
    Consider arbitrary weight-vectors $u$ and $o$
    of two strings from $L$ with a common $\sigma$-vector.
    Then,
    \begin{align*}
        u\pos{\ell} \equiv_\mname o\pos{\ell} - \remvec[\mathcal{W}_S]{u}{o}\pos{\ell}
    \end{align*}
    holds for all $\ell \in S$.
\end{lemma}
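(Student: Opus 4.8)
The plan is to unfold the definition of the remainder from \cref{def:remainder_vector} together with the defining properties of the witness vectors, and then to reduce the congruence to two applications of \cref{thm:sparse_language_same_sigma}. Fix a position $\ell \in S$. Let $x, y \in L$ be the two strings whose weight-vectors are $u = \degvec{x}$ and $o = \degvec{y}$; by hypothesis $\sigvec{x} = \sigvec{y}$. Recall from \cref{rem:witness} that the witness vectors $\witnessvec{1}, \witnessvec{0} \in \CW_{S} \subseteq \sigvec{L}$ agree on every position of $S$ except $\ell$, where $\witnessvec{1}\vposition{\ell} = 1$ and $\witnessvec{0}\vposition{\ell} = 0$. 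Consequently the integer vector $\witnessvec{1} - \witnessvec{0}$ is supported on $\{\ell\} \cup \compl{S}$, and it equals $1$ at position $\ell$.

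First I would split the inner product $(u - o)\cdot(\witnessvec{1} - \witnessvec{0})$ along the partition $\numb{n} = S \cup \compl{S}$. On $S$, only the coordinate $\ell$ contributes, and it contributes $u\vposition{\ell} - o\vposition{\ell}$; on $\compl{S}$, the contribution is by definition exactly $\remvec[\CW_{S}]{u}{o}\vposition{\ell}$. Thus we obtain the integer identity
\[
  (u - o)\cdot(\witnessvec{1} - \witnessvec{0})
  = \big(u\vposition{\ell} - o\vposition{\ell}\big) + \remvec[\CW_{S}]{u}{o}\vposition{\ell}.
\]
Second, I would invoke \cref{thm:sparse_language_same_sigma} twice: once with the triple $x, y, z$ where $z \in L$ is any string with $\sigvec{z} = \witnessvec{1}$ (such a $z$ exists precisely because $\witnessvec{1} \in \sigvec{L}$), and once with a string $z' \in L$ satisfying $\sigvec{z'} = \witnessvec{0}$. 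Each application yields $(\degvec{x} - \degvec{y})\cdot\sigvec{z} \equiv_\mname 0$; subtracting the two congruences gives $(u - o)\cdot(\witnessvec{1} - \witnessvec{0}) \equiv_\mname 0$. Combining this with the displayed identity yields $u\vposition{\ell} - o\vposition{\ell} + \remvec[\CW_{S}]{u}{o}\vposition{\ell} \equiv_\mname 0$, which rearranges to $u\vposition{\ell} \equiv_\mname o\vposition{\ell} - \remvec[\CW_{S}]{u}{o}\vposition{\ell}$. Since $\ell \in S$ was arbitrary, the claim follows.

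There is no genuinely hard step here; the only thing to be careful about is the bookkeeping—making sure the split of the inner product matches exactly the index set $\compl{S}$ used in \cref{def:remainder_vector}, and confirming that the witness vectors honestly lie in $\sigvec{L}$ so that \cref{thm:sparse_language_same_sigma} applies with a legitimate third string $z \in L$. This mirrors the computation already performed inside the proof of \cref{thm:s_complement_determines_weight_vector}, where the congruence $(\degvec{x} - \degvec{y})\cdot(\witnessvec{1} - \witnessvec{0}) \equiv_\mname 0$ is derived; here we merely retain the $\compl{S}$-part of the inner product rather than discarding it.
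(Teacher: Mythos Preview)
Your proposal is correct and follows essentially the same approach as the paper's own proof: both split the inner product $(u-o)\cdot(\witnessvec{1}-\witnessvec{0})$ into the single $S$-contribution at position $\ell$ plus the $\compl{S}$-part (which is the remainder), and both kill this inner product modulo~$\mname$ via two applications of \cref{thm:sparse_language_same_sigma} using strings in $L$ whose $\sigma$-vectors are $\witnessvec{1}$ and $\witnessvec{0}$. Your naming is in fact slightly cleaner than the paper's, which reuses the letters $x,y$ for the witness-carrying strings and thereby introduces a mild notational clash with the strings underlying $u$ and $o$.
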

\begin{proof}
    We follow the sketch provided in~\cite{focke_tight_2023_ub}.
    Let $u$ and $o$ be two weight-vectors of two strings of $L$
    that share a common $\sigma$-vector.
    By the definition of the $\sigma$-defining set,
    there are, for all $\ell \in S$,
    two associated witness vectors
    $\witnessvec{0}, \witnessvec{1} \in \sigvec{L}$.
    Hence, there exist two strings $x, y \in L$ such that
    $\sigvec{x} = \witnessvec{0}$ and $\sigvec{y} = \witnessvec{1}$.
    From \cref{thm:sparse_language_same_sigma} we obtain
    \[
        (u - o) \cdot w_{1,\ell} \equiv_\mname 0
        \qquad\text{and similarly}\qquad
        (u - o) \cdot w_{0,\ell} \equiv_ \mname 0.
    \]
    Recall that $\witnessvec{0}$ and $\witnessvec{1}$
    agree on all positions of $S$ except for position $\ell$,
    which implies that $\witnessvec{1}\pos i - \witnessvec{0}\pos i = 0$
    for all $i \in S \setminus\{\ell\}$.
    Combining these observations we get
    \begin{align*}
        u\pos{\ell} - o\pos{\ell} + \remvec[\mathcal{W}_S]{u}{o}\pos{\ell}
         & = u\pos{\ell} - o\pos{\ell}
        + \sum_{i \in \compl S} (u\pos{i} - o\pos{i})
        \cdot (\witnessvec{1}\pos{i} - \witnessvec{0}\pos{i})    \\
         & = (u - o) \cdot (w_{1,\ell} - w_{0,\ell})             \\
         & = (u - o) \cdot w_{1,\ell} - (u - o) \cdot w_{0,\ell} \\
         & \equiv_\mname 0 - 0 = 0
    \end{align*}
    which yields the claim after rearranging.
\end{proof}

Once we have fixed a $\sigma$-defining set $S$,
\cref{thm:remainder_shift} provides a recipe
for compressing the weight-vectors:
Positions of $\compl S$ are kept as they are, and positions on $S$ are completely omitted.
Essentially the compression is just a projection of the weight-vector
to the positions from $\compl S$.
However, in order to be consistent
with~\cite[Definition~4.23]{focke_tight_2023_ub},
we refer to it as compression.
When one requires access to the omitted positions, they can be easily reconstructed using the remainder vector.
Now we have everything ready to formally define the compression of a weight-vector.

\begin{definition}[Compression of Weight-vectors; {compare~\cite[Definition~4.23]{focke_tight_2023_ub}}]
    \label{def:compressed_vector}
    Let $\sigma$ and $\rho$ denote two residue classes
    modulo $\mname \geq 2$.
    Let $L \subseteq \allStates^n$ denote a (non-empty) sparse language.
    Consider a set $X$
    with $X \subseteq \sigvec{L}$ and $X \not = \emptyset$,
    and let \(S\) denote a \(\sigma\)-defining set for $X$.

    For a weight-vector
    \(u \in \{ \degvec{x} \mid x \in L\}\),
    we define the
    \emph{\(\sigma\)-compression}
    as the following \(|\compl S|\)-dimensional vector \(\comp{}{u}\)
    where we set
    \begin{alignat*}{5}
        \comp{}u\vposition{\ell}
         & \deff {u}\vposition{\ell}                   &
         & \mod \mname,                                &
         & \qquad \text{for all } \ell \in  \compl{S}.
    \end{alignat*}
    Further, we write \(\CZ_{S}\) for the \(|\compl S|\)-dimensional space
    of all possible vectors
    where for each dimension the entries are computed modulo $\mname$.
\end{definition}

From the definition of the compression it already follows
that there cannot be too many compressed vectors.
This is quantized by the following simple observation.

\begin{fact}[{Compare~\cite[Remark~4.24]{focke_tight_2023_ub}}]
    \label{thm:compressio_space_small}
    With the same definitions as in \cref{def:compressed_vector},
    we have $|\CZ_{S}|\leq \mname^{|\compl S|}.$
\end{fact}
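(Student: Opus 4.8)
The plan is to read off the bound directly from the definition of the compression space, since no structural input is required. By \cref{def:compressed_vector}, the set $\CZ_{S}$ is by definition the $|\compl S|$-dimensional space in which every coordinate is computed modulo $\mname$; that is, each of the $|\compl S|$ coordinates ranges over the set $\fragment{0}{\mname-1}$, which has exactly $\mname$ elements. First I would note that the choices in the distinct coordinates are made independently of one another, so the number of such vectors is the product over coordinates of the number of admissible values, namely $\mname^{|\compl S|}$. Hence $|\CZ_{S}| = \mname^{|\compl S|}$, and in particular the claimed inequality $|\CZ_{S}| \le \mname^{|\compl S|}$ holds.

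There is no real obstacle here: the statement is an immediate consequence of the shape of the ambient space, and none of the properties of $\sigma$, $\rho$, sparsity, the $\sigma$-defining set, or the witness vectors are used in the argument. The only reason to record the fact separately is that $|\CZ_{S}|$ is precisely the quantity that will govern the running time of the convolution-based join operation in \cref{sec:upperTW:join}, so having the bound $\mname^{|\compl S|}$ isolated makes the later running-time analysis clean; stating it as an inequality (rather than the exact equality) suffices for, and matches, those later uses.
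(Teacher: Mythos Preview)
Your proposal is correct and matches the paper's approach: the paper states this as a \emph{Fact} without proof, since it follows immediately from \cref{def:compressed_vector}, which defines $\CZ_S$ as the $|\compl S|$-dimensional space with each coordinate taken modulo $\mname$. Your additional commentary about why the fact is recorded separately is accurate and helpful context.
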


\subsection{Faster Join Operations}
\label{sec:upperTW:join}
From the previous result in~\cref{thm:upper:boundOnLanguage},
which bounds the size of a realized language,
we know that in our dynamic programming algorithm
the number of solutions tracked at each node is relatively small
if the language of the node is sparse.
With some additional bookkeeping, we can easily ensure that the languages
we keep track of in the dynamic programming algorithm are all sparse.
However, as the join operation
is usually the most expensive operation,
we must also be able to combine two different languages efficiently.

In order to be able to combine two languages formally,
we define the combination of two languages.
In contrast to the underlying definition from~\cite{focke_tight_2023_ub},
we must take into account (but can also exploit)
that we are now dealing with residue classes.
We start by defining the combination of two state-vectors.
Such a combination is, of course,
the central task of the join-operation in the dynamic programming algorithm.
Intuitively, it is clear that two partial solutions
for the same set of portal vertices can only be joined
when their $\sigma$-vectors agree.
In contrast to finite sets,
for residue classes the exact number of selected neighbors
of each vertex is not relevant, only the remainder when dividing by $\mname$ matters.
We capture exactly these intuitive properties in our following definition.

\begin{definition}[Combination of Strings and Languages; {compare~\cite[Definition~4.14]{focke_tight_2023_ub}}]
    \label{def:string_combination}
    Let $\sigma$ and $\rho$ denote two residue classes
    modulo $\mname \ge 2$.
    For two string $x,y \in \allStates^n$,
    the combination $x \oplus y \in (\mathbb{A} \cup \{\bot\})^n$ of $x$ and $y$
    is defined as
    \begin{equation*}
        (x \oplus y) \pos{\ell} \deff \begin{cases}
            \sigma_{(a+b) \bmod \mname} &
            \text{if } x\pos{\ell} = \sigma_a
            \land y\pos{\ell} = \sigma_b,
            \\
            \rho_{(a+b) \bmod \mname}   &
            \text{if } x\pos{\ell} = \rho_a
            \land y\pos{\ell} = \rho_b,
            \\
            \bot                        &
            \text{ otherwise},
        \end{cases}
    \end{equation*}
    for all $\ell \in \numb{n}$.
    When $x \oplus y \in \allStates^n$, we say that $x$ and $y$ can be combined.

    We extend this combination of two strings in the natural way to sets,
    that is, to languages.
    Consider two languages $L_1, L_2 \subseteq \allStates^n$.
    We define $L_1 \oplus L_2 \deff \{ x \oplus y \mid x \in L_1, y \in L_2\} \cap \allStates^n$
    as the combination of $L_1$ and $L_2$.
\end{definition}

As we can only combine two strings with the same $\sigma$-vector,
we have
$\sigvec{L_1 \oplus L_2} \subseteq \sigvec{L_1} \cap \sigvec{L_2}$.
Our next goal is to efficiently compute the combination of two sparse languages.

\begin{restatable*}[{Compare~\cite[Theorem~4.18]{focke_tight_2023_ub}}]{lemma}{quickcombinations}
    \label{thm:quick_combinations}
    Let $\sigma$ and $\rho$ denote two residue classes
    modulo $\mname \geq 2$.
    Let $L_1, L_2 \subseteq \allStates^n$ be two sparse languages
    such that $L_1 \oplus L_2$ is also sparse.
    Then, we can compute $L_1 \oplus L_2$ in time
    \(
    \mname^n \cdot n^{\Oh{(1)}}
    .
    \)
\end{restatable*}

The basic idea to achieve this bound
is to use a fast convolution technique
introduced by van Rooij~\cite{Rooij_FastJoinOperations}
for the general algorithm for \srDomSet when the sets are finite or cofinite.

We intend to compress two vectors, combine their compressions, and decompress them afterwards.
To ensure the safety of this operation, we need an auxiliary result that allows us to decompress these vectors.
Indeed, the compression is designed such that this reversal can be done.

\begin{definition}[Decompression]
    \label{def:decompressing}
    Let $\sigma$ and $\rho$ denote two residue classes
    modulo $\mname \geq 2$.
    Let $L\subseteq \allStates^n$ denote a (non-empty) sparse language,
    and let \(X \subseteq \sigvec{L}\) be a set
    with a corresponding set of witness vectors
    \(\CW_{S} \subseteq X\).

    Further, fix a vector \(\vec{s} \in \sigvec{L}\) and an origin vector
    \(o \in \numbZ{\mname-1}^n\) such that
    there exists a $y \in L$
    with $\degvec{y} = o$ and $\sigvec{y} = \vec s$.

    Then, for any vector $a \in \CZ_{S}$,
    we define the decompression $\decomp[o]{a}$ of $a$
    relative to $o$ as the following vector of length $n$:
    \[
        \decomp[o]{a}\pos \ell \deff
        \begin{cases}
            a \pos \ell
             & \text{ if } \ell \in \compl S, \\
            o \pos \ell
            - \remvec[\mathcal{W}_S]{a}{o}\pos{\ell} \mod \mname
             & \text{ if } \ell \in S.
        \end{cases}
        \qedhere
    \]
\end{definition}

With this definition we can move to the main result
leading to the correctness of the convolution algorithm.

\begin{lemma}
    \label{thm:decompressing}
    Let $\sigma$ and $\rho$ denote two residue classes
    modulo $\mname \geq 2$.
    Let $L\subseteq \allStates^n$ denote a (non-empty) sparse language.
    And let \(X \subseteq \sigvec{L}\) be a set
    with a corresponding set of witness vectors
    \(\CW_{S} \subseteq X\).

    Further, fix a vector \(\vec{s} \in \sigvec{L}\) and an origin vector
    \(o \in \numbZ{\mname-1}^n\) such that
    there exists a $y \in L$
    with $\degvec{y} = o$ and $\sigvec{y} = \vec s$.

    Then, for any vector $x \in L$ with $\sigvec{x} = \vec s$
    it holds
    that
    $
        \decomp[o]{(\comp{}{\degvec{x}})} = {\degvec{x}}
        .
    $
\end{lemma}
\begin{proof}
    To show that $\decomp[o]{(\comp{}{\degvec{x}})} = \degvec{x}$
    we consider two cases depending on the position of the vectors.

    For a position $\ell \in \compl S$,
    equality holds because the compression from \cref{def:compressed_vector}
    does not drop an entry on such a position
    and the decompression from \cref{def:decompressing} does not change it either.

    For a position $\ell \in S$, first observe that $\remvec[\mathcal{W}_S]{\comp{}{\degvec{x}}}{o}\pos{\ell} = \remvec[\mathcal{W}_S]{\degvec{x}}{o}\pos{\ell}$ as the remainder only depends on the positions from~$\compl S$ by \cref{def:remainder_vector} and by the definition of the compression.
    Then, the statement directly follows from \cref{thm:remainder_shift} and the definition of the decompression.
\end{proof}

Hence, when we are given a set of compressed weight-vector of some sparse language (all with the same $\sigma$-vector),
we can decompress these vectors
as long as we store \emph{a single} weight-vector (of a string with the same $\sigma$-vector) that is not compressed.
Therefore,
as we can easily decompress the compressed weight vectors,
we can work with the compressed vectors without any issues.

As a last ingredient we formally introduce the fast convolution technique
originally formalized by van Rooij~\cite{Rooij_FastJoinOperations}
and later restated in~\cite[Theorem~4.19]{focke_tight_2023_ub}.

\begin{fact}[{\cite[Theorem~4.19]{focke_tight_2023_ub}}]
    \label{thm:alg-convolution}
    For integers $d_1,\dots,d_n$ and
    $D \deff \prod_{i=1}^nd_i$,
    let~$p$ denote a prime such that in the field \(\FF_p\),
    a $d_i$-th root of unity%
    \footnote{As we always deal with \emph{primitive} roots of unity only,
        we just write \emph{roots of unity} to simplify notation henceforth.
    }
    exists
    for each $i \in \numb{n}$.
    Further, for two functions
    $f,g\from \ZZ_{d_1} \times \dots \times \ZZ_{d_n} \to \FF_p$,
    let $h\from \ZZ_{d_1} \times \dots \times \ZZ_{d_n} \to \FF_p$ denote
    the convolution
    \[h(a) \deff \sum_{a_1 + a_2 = a} f(a_1) \cdot g(a_2).\]
    Then, we can compute the function $h$ in $\Oh(D \log D)$ arithmetic operations
    (assuming a $d_i$-th root of unity $\omega_i$ is given for all $i \in \nset{n}$).
\end{fact}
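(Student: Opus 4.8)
The plan is to view $h$ as the convolution of $f$ and $g$ over the finite abelian group $G \deff \ZZ_{d_1}\times\dots\times\ZZ_{d_n}$ and to evaluate it through its discrete Fourier transform over $\FF_p$. For $b \in G$ let $\chi_b\colon G \to \FF_p$ be the map $a \mapsto \prod_{i=1}^n \omega_i^{a_i b_i}$; since each $\omega_i$ is a primitive $d_i$-th root of unity (primitivity being exactly what makes the transform invertible), the $\chi_b$ are precisely the $D$ characters of $G$, they are pairwise distinct, and they satisfy the orthogonality relation $\sum_{b \in G}\chi_b(c) = D\cdot[c = 0]$. Writing $\widehat f(b) \deff \sum_{a \in G} f(a)\chi_b(a)$ and likewise $\widehat g$, the homomorphism property $\chi_b(a_1 + a_2) = \chi_b(a_1)\chi_b(a_2)$ immediately yields the convolution theorem $\widehat h = \widehat f \cdot \widehat g$ (a pointwise product in $\FF_p$, computable in $\Oh(D)$ operations). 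Orthogonality gives the inversion formula $h(a) = D^{-1}\sum_{b \in G}\widehat h(b)\,\chi_b(a)^{-1}$, where $\chi_b(a)^{-1} = \prod_i \omega_i^{(d_i - 1)a_i b_i}$ and $D^{-1}$ exists in $\FF_p$ because $p \nmid d_i$ for every $i$ (a primitive $d_i$-th root of unity lives in $\FF_p$ only when $d_i \mid p - 1$, which forces $p \nmid d_i$, hence $p \nmid D$). Thus the whole computation reduces to one forward and one inverse $G$-DFT, and the inverse is just the forward transform for the roots $\omega_i^{-1}$ followed by a scaling; so it suffices to compute a forward $G$-DFT in $\Oh(D\log D)$ arithmetic operations.

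\textbf{Separability.}
The central step is that the $D\times D$ matrix $W$ of the $G$-DFT factors as a Kronecker product $W = W_1 \otimes \dots \otimes W_n$, where $W_i$ is the ordinary $d_i\times d_i$ cyclic DFT matrix with entries $(W_i)_{jk} = \omega_i^{jk}$; this is immediate from the product shape of $\chi_b(a)$. Consequently $W$ is applied to a vector indexed by $G$ by applying the one-dimensional transforms $W_1,\dots,W_n$ in turn, each along its own axis: to apply $W_i$ one ranges over the $D/d_i$ choices of the coordinates $a_{i'}$ with $i'\neq i$ (the ``fibers''), and for each fiber replaces the length-$d_i$ subvector indexed by coordinate $i$ with its image under $W_i$.

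\textbf{One-dimensional transforms and the count.}
A single $d_i$-point cyclic DFT over $\FF_p$, given the root of unity $\omega_i$, can be computed in $\Oh(d_i \log d_i)$ arithmetic operations by a standard fast Fourier transform: a Cooley--Tukey recursion along the prime factorization of $d_i$, with Bluestein's chirp-$z$ reduction of a length-$d_i$ DFT to a cyclic convolution of power-of-two length $N = \Oh(d_i)$ handling the general case (this needs an $N$-th root of unity, which is available after passing to a small extension of $\FF_p$ if necessary, and costs nothing since $N$ is polynomially bounded). Hence applying $W_i$ to all $D/d_i$ fibers costs $(D/d_i)\cdot\Oh(d_i\log d_i) = \Oh(D\log d_i)$ operations, and summing over $i = 1,\dots,n$ gives $\Oh\!\bigl(D\sum_{i=1}^n \log d_i\bigr) = \Oh(D\log D)$ for the forward DFT; the inverse DFT is the same up to scaling and the pointwise multiplication is $\Oh(D)$, so the total is $\Oh(D\log D)$. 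The one delicate point — and hence the main obstacle — is exactly this general one-dimensional FFT when $d_i$ has no convenient factorization, where one must track which roots of unity genuinely live in $\FF_p$ versus a small extension; if one is willing to settle for $\Oh\!\bigl(D\sum_i d_i\bigr)$ operations, which already suffices whenever the $d_i$ are bounded (as in our application, where $d_i = \mname$ for all $i$), then the naive $\Oh(d_i^2)$ evaluation of $W_i$ on each fiber replaces the FFT and the argument becomes entirely elementary.
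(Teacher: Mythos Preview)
The paper does not prove this statement at all: it is stated as a \emph{Fact} imported verbatim from \cite[Fact~4.19]{focke_tight_2023_ub} (itself a restatement of van~Rooij's convolution technique), with no accompanying argument. So there is no ``paper's own proof'' to compare against.

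Your proposal is the standard and correct way to establish the bound: reduce to the multidimensional DFT over the abelian group $G=\ZZ_{d_1}\times\dots\times\ZZ_{d_n}$, use the Kronecker-product factorization $W=W_1\otimes\dots\otimes W_n$ to separate into one-dimensional transforms along each axis, and sum the costs $\sum_i (D/d_i)\cdot\Oh(d_i\log d_i)=\Oh(D\log D)$. The convolution theorem and inversion via orthogonality are handled correctly, and you rightly observe that $D$ is invertible in $\FF_p$ because each $d_i\mid p-1$. Your closing remark is also apt: in the actual application all $d_i$ equal the constant $\mname$, so the naive $\Oh(d_i^2)$ one-dimensional DFT already yields $\Oh(D\cdot n\cdot \mname)=\Oh(D\log D)$ without any Bluestein or Cooley--Tukey machinery, which sidesteps the only genuinely delicate point (availability of auxiliary roots of unity for the chirp-$z$ transform).
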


Since the application of this technique relies on the existence
of certain roots of unity, we must be able to compute them efficiently.
We use \cite[Remark~4.20]{focke_tight_2023_ub} to find these roots
in our setting.

\begin{remark}[{\cite[Remark~4.20]{focke_tight_2023_ub}}]
    \label{rem:find-prime-alg}
    Suppose $M$ is a sufficiently large integer such that all images of the functions $f,g,h$ are in the range $\numbZ{M}$.
    In particular, suppose that $M \geq D$.
    Suppose $d_1',\dots,d_\ell'$ is the list of integers obtained from $d_1,\dots,d_n$ by
    removing duplicates.
    Let $D' \deff \prod_{i=1}^\ell d_i'$.
    We consider candidate numbers $m_j \coloneqq 1 + D'j$ for all $j \geq 1$.
    By the Prime Number Theorem for Arithmetic Progressions \cite[Theorem 1.3]{BennettMOR18}, there is a prime $p$ such that
    \begin{enumerate}
        \item $p = m_j$ for some $j \geq 1$,
        \item $p > M$, and
        \item $p = \Oh\Big(\max\big\{\varphi(D')M \log M,\exp(D')\big\}\Big)$,
    \end{enumerate}
    where $\varphi$ denotes Euler's totient function.
    Such a number can be found in time
    \[\Oh\Big(p \big(\log p\big)^c\Big)\]
    for some absolute constant $c$ exploiting that prime testing can be done in polynomial time.

    Now, fix $i \in \nset{n}$ and fix $k_i \deff {D' \cdot j}/{d_i}$.
    For every $x \in \FF_p^*$,
    we have that $x^{p-1} = 1$, and hence,
    $x^{k_i}$ is a $d_i$-th root of unity
    if and only $(x^{k_i})^i \neq 1$ for all $i < d_i$.
    Hence, given an element $x \in \FF_p^*$, it can be checked in time
    \[\Oh\Big( d_i \cdot (\log p)^c\Big)\]
    whether $x^{k_i}$ is a $d_i$-th root of unity.
    Due to our choice of $p$, this test succeeds for at least one $x \in \FF_p^*$.
    Thus, a $d_i$-th root of unity $\omega_i$ for every $i \in \nset{n}$ can be found in time
    \[
        \Oh\Big(
        n \cdot p \cdot \max_{i \in \nset{n}}d_i \cdot (\log p)^c
        \Big).
        \qedhere
    \]
\end{remark}

We are now ready to proof the main result of this section,
that is, to prove that the combination of two sparse languages
can be computed efficiently.
We follow the ideas in~\cite{focke_tight_2023_ub}
but for the case of residue classes
and state the procedure more explicitly.
Formally, we prove \cref{thm:quick_combinations},
which we restate here for convenience.

\quickcombinations
We first sketch the rough idea of the algorithm.
The idea is to treat each possible $\sigma$-vector
of the combined language separately.
Then, for each such vector $\vec s$,
we can use the previously introduced machinery to
(1) compress the corresponding weight-vectors,
(2) use the fast convolution techniques
to efficiently combine these compressed vectors,
and (3) decompress these combined vectors.
As a last step we then combine all these partial results
to obtain the combined language.

\begin{proof}
    Let $L_1, L_2 \subseteq \allStates^n$ be sparse
    such that $L_1 \oplus L_2$ is also sparse.
    We start by computing $\sigvec{L_1} \cap \sigvec{L_2}$.
    If this intersection is empty, the combined language is clearly also empty.
    Otherwise, we use \cref{thm:sig-def-set}
    to compute a $\sigma$-defining set $S$ for  $\sigvec{L_1} \cap \sigvec{L_2}$
    along with a corresponding set of witness vectors $\CW_S$.
    Next, we repeat the following three steps for each
    $\vec s \in \sigvec{L_1} \cap \sigvec{L_2}$:

    \begin{enumerate}
        \item
              Compute the two sets
              \begin{align*}
                  L_1^{\vec s} \deff \{x_1 \in L_1 \mid \sigvec{x_1} = \vec s\}
                  \quad \text{and} \quad
                  L_2^{\vec s} \deff \{x_2 \in L_2 \mid \sigvec{x_2} = \vec s\}
              \end{align*}
              by iterating over each string of $L_1$ and $L_2$.
              Use these sets to compute the corresponding sets of weight-vectors as
              \begin{align*}
                  \degvec{L_1^{\vec s}} \deff\{\degvec{x_1}\mid x_1 \in L_1^{\vec s}\}
                  \quad \text{and} \quad
                  \degvec{L_2^{\vec s}} \deff\{\degvec{x_2}\mid x_2 \in L_2^{\vec s}\}
                  .
              \end{align*}

        \item
              Define two functions
              $f,g \from \CZ_{S} \to \ZZ$
              where, for all $u_1 \in \degvec{L_1^{\vec s}}$
              and all $u_2 \in \degvec{L_2^{\vec s}}$,
              we set
              \begin{align*}
                  f(\comp{}{u_1}) \deff 1
                  \quad\text{and}\quad
                  g(\comp{}{u_2}) \deff 1
              \end{align*}
              and whenever there is no
              $u_1 \in \degvec{L_1^{\vec s}}$ with $\comp{}{u_1} = a$
              we set $f(a)\deff 0$,
              and likewise for $g$.

        \item
              Now, we can use the fast convolution technique
              of \cref{thm:alg-convolution} to compute
              the function $h \from \CZ_{S}\to \ZZ$ defined as
              \begin{align*}
                  h(a) \deff \sum_{a_1 + a_2 = a} f(a_1) \cdot g(a_2)
                  \qquad\text{for all } a \in \CZ_{S}
                  .
              \end{align*}
        \item
              Pick two arbitrary origin vectors $o_1 \in \degvec{L_1^{\vec s}}$
              and $o_2 \in \degvec{L_2^{\vec s}}$.
              We iterate over all entries $a \in \CZ_{S}$
              and check whether $h(a) > 0$.
              If not, then we discard the vector.
              Otherwise, we use \cref{thm:decompressing},
              to decompress the vector.
              Formally, we compute
              \begin{align*}
                  W_{1,2}^{\vec s} \deff \{\decomp{a} \mid a \in \CZ_S \land
                  h(a) > 0\}.
              \end{align*}
              Then, it is easy to compute the language
              \begin{align*}
                  L_{1,2}^{\vec{s}}  \deff \{ z \in \allStates^n \mid
                  \sigvec{z} = \vec{s}
                  \land \degvec{z} \in W_{1,2}^{\vec{s}} \}
              \end{align*}
              by combining each weight-vector of $W_{1,2}^{\vec s}$
              with the $\sigma$-vector $\vec{s}$.
    \end{enumerate}
    Finally, we obtain $L_{1,2}$ by simply taking the union
    of the ``partial languages'' over all $\sigma$-vectors, that is,
    \begin{align*}
        L_{1,2} \deff \bigcup_{\vec s \in \sigvec{L_1} \cap \sigvec{L_2}} L_{1,2}^{\vec{s}}
        .
    \end{align*}
    The algorithm outputs $L_{1,2}$ as the set of the combined languages.

    We now argue that this algorithm correctly computes $L_1 \oplus L_2$.
    \begin{claim}
        \label{clm:upper:quickcombinations:correctness}
        The algorithm is correct, that is, $L_{1,2} = L_1 \oplus L_2$.
    \end{claim}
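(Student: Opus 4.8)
The plan is to prove the two inclusions $L_{1,2}\subseteq L_1\oplus L_2$ and $L_1\oplus L_2\subseteq L_{1,2}$, after first reducing everything to a single $\sigma$-vector. The first thing I would record are the elementary consequences of \cref{def:string_combination}: two strings $x_1,x_2\in\allStates^n$ satisfy $x_1\oplus x_2\in\allStates^n$ if and only if $\sigvec{x_1}=\sigvec{x_2}$ (otherwise some coordinate produces $\bot$), and in that case $\sigvec{x_1\oplus x_2}=\sigvec{x_1}$ while $\degvec{x_1\oplus x_2}=(\degvec{x_1}+\degvec{x_2})\bmod\mname$ in each coordinate. Hence $\sigvec{L_1\oplus L_2}\subseteq\sigvec{L_1}\cap\sigvec{L_2}$, and for each $\vec s$ in this intersection the ``$\vec s$-slice'' of the combined language equals $\{x_1\oplus x_2\mid x_1\in L_1^{\vec s},\,x_2\in L_2^{\vec s}\}$. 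If $\sigvec{L_1}\cap\sigvec{L_2}=\emptyset$ both sides are empty and the algorithm is trivially correct, so it suffices to show $L_{1,2}^{\vec s}=\{x_1\oplus x_2\mid x_1\in L_1^{\vec s},\,x_2\in L_2^{\vec s}\}$ for every $\vec s\in\sigvec{L_1}\cap\sigvec{L_2}$; the claim then follows by taking the union over all such $\vec s$.

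Next I would fix such a $\vec s$. Since $\vec s\in\sigvec{L_1}\cap\sigvec{L_2}$, both $L_1^{\vec s}$ and $L_2^{\vec s}$ are non-empty, so the origin vectors $o_1\in\degvec{L_1^{\vec s}}$ and $o_2\in\degvec{L_2^{\vec s}}$ picked by the algorithm exist, and together with the $\sigma$-defining set $S$ and its witness vectors $\CW_S$ (obtained from \cref{thm:sig-def-set} applied to $\sigvec{L_1}\cap\sigvec{L_2}$, which also guarantees $\CW_S\subseteq\sigvec{L_1}\cap\sigvec{L_2}$) they satisfy all hypotheses of \cref{def:decompressing} and \cref{thm:decompressing}. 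Both slices are determined by $\vec s$ together with their sets of weight-vectors, so it remains to prove $W_{1,2}^{\vec s}=\{\degvec{x_1\oplus x_2}\mid x_1\in L_1^{\vec s},\,x_2\in L_2^{\vec s}\}$; by the identity above this right-hand side is exactly $\{(\degvec{x_1}+\degvec{x_2})\bmod\mname\mid x_1\in L_1^{\vec s},\,x_2\in L_2^{\vec s}\}$. For ``$\supseteq$'', given $x_1\in L_1^{\vec s}$ and $x_2\in L_2^{\vec s}$, put $a\deff\comp{}{\degvec{x_1}}+\comp{}{\degvec{x_2}}\in\CZ_S$; then $f(\comp{}{\degvec{x_1}})=g(\comp{}{\degvec{x_2}})=1$ by construction, so $h(a)\ge1>0$, whence $\decomp{a}\in W_{1,2}^{\vec s}$, and \cref{thm:decompressing} gives $\decomp{a}=\degvec{x_1\oplus x_2}$. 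For ``$\subseteq$'', any $a\in\CZ_S$ with $h(a)>0$ splits (by the definition of $h$) as $a=a_1+a_2$ with $f(a_1)=g(a_2)=1$, i.e.\ $a_1=\comp{}{\degvec{x_1}}$ and $a_2=\comp{}{\degvec{x_2}}$ for some $x_1\in L_1^{\vec s}$, $x_2\in L_2^{\vec s}$, and \cref{thm:decompressing} again yields $\decomp{a}=\degvec{x_1\oplus x_2}$. This matches the two weight-vector sets and settles the fixed-$\vec s$ step.

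The main obstacle is not the combinatorics itself—the heavy lifting is already packaged in \cref{thm:decompressing}—but rather checking that its hypotheses genuinely hold for \emph{every} $\vec s$ the algorithm processes; in particular that the origin vectors $o_1,o_2$ exist, which is precisely why the outer loop ranges over $\sigvec{L_1}\cap\sigvec{L_2}$ rather than over all of $\{0,1\}^n$. A second delicate point is matching the abstract convolution $h(a)=\sum_{a_1+a_2=a}f(a_1)g(a_2)$ over $\CZ_S$ with the combinatorial statement ``$a$ is the sum, modulo $\mname$, of a compressed weight-vector of $L_1^{\vec s}$ and one of $L_2^{\vec s}$'', since that is exactly the form in which \cref{thm:decompressing} certifies $\decomp{a}=\degvec{x_1\oplus x_2}$. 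Finally I would note explicitly that $\decomp{a}$ does not depend on which witnesses $x_1,x_2$ realize the summands $a_1,a_2$—but this is again immediate from \cref{thm:decompressing}, which fixes $o_1,o_2$ once and then asserts the identity for all $x_1\in L_1$, $x_2\in L_2$ with $\sigma$-vector $\vec s$.
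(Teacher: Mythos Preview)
Your proposal is correct and follows essentially the same approach as the paper: reduce to a fixed $\sigma$-vector $\vec s$, identify the slice with its set of weight-vectors, and prove the two inclusions for $W_{1,2}^{\vec s}$ by unpacking $h(a)>0$ into $f(a_1)=g(a_2)=1$ and invoking \cref{thm:decompressing} in each direction. If anything, you are slightly more careful than the paper about checking that the origin vectors exist and that the hypotheses of \cref{thm:decompressing} are met, but the argument is the same.
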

    \begin{claimproof}
        For ease of notation,
        we write $L = L_1 \oplus L_2$ in the following.
        It is clear that
        $\sigvec{L} \subseteq \sigvec{L_1} \cap \sigvec{L_2}$
        by the way the combination of strings is defined.
        Hence, we directly get
        \begin{align*}
            L_{1} \oplus L_2
            = \bigcup_{\vec s \in \sigvec{L_1} \cap \sigvec{L_2}} L^{\vec{s}}
        \end{align*}
        where ${L^{\vec s}} \deff \{z \in L \mid \sigvec{z} = \vec s\}$.
        Therefore, it suffices to show that $L^{\vec s} = L_{1,2}^{\vec s}$.
        Since all string in both these sets share the same $\sigma$-vector,
        it is sufficient to prove that
        $\degvec{L^{\vec s}} = W_{1,2}^{\vec s}$
        which is equivalent to proving that
        \begin{align}
            \{\decomp{a} \mid a \in \CZ_S \land
            h(a) > 0\}
            = \{\degvec{z} \mid z \in L \land \sigvec{z} = \vec s\}
            \label{eqn:upper:correctnessOfJoinProcedure}
            .
        \end{align}
        Also observe that $o_1$ is the weight-vector of some string $v_1$ of $L_1$ with $\sigvec{v_1} = \vec s$, and $o_2$ is the weight-vector of some string $v_2$ of $L_2$ with $\sigvec{v_2} = \vec s$.
        Thus, $v_1$ and $v_2$ can be combined, $v_1 \oplus v_2 \in L$, and $o_1 + o_2 = \degvec{v_1 \oplus v_2}$.
        This allows us to use \cref{thm:decompressing} with $o_1 + o_2$ as the origin vector.

        We now prove \cref{eqn:upper:correctnessOfJoinProcedure} by considering both directions separately.

        \subparagraph*{\boldmath Case ``$\subseteq$''
            in \Cref{eqn:upper:correctnessOfJoinProcedure}.}
        Intuitively, this direction proves that the algorithm
        does not compute incorrect results.
        Consider an arbitrary vector
        $u \in \{\decomp{a} \mid a \in \CZ_S \land
            h(a) > 0\}$.
        Then, there exists $a \in \CZ_S$ such that $h(a) > 0$ and $u = \decomp{a}$.
        From $h(a) > 0$ we know there exist $a_1$ and $a_2$
        with $f(a_1) = 1$ and $g(a_2) = 1$ such that $a_1 + a_2 = a$.
        By the definition of $f$ and $g$,
        this means that there is some $x_1 \in L_1$
        with $a_1 = \comp{}{\degvec{x_1}}$
        and there is some $x_2 \in L_2$
        with $a_2 = \comp{}{\degvec{x_2}}$
        such that $\sigvec{x_1} = \sigvec{x_2} = \vec s$.
        Hence, $x_1$ and $x_2$ can be combined, and we have
        $a = \comp{}{\degvec {x_1}} + \comp{}{\degvec {x_2}} = \comp{}{(\degvec {x_1} \oplus \degvec {x_2})}$.
        By \cref{thm:decompressing}, we now know that $u \in \{\degvec{z} \mid z \in L \land \sigvec{z} = \vec s\}$.

        \subparagraph*{\boldmath Case ``$\supseteq$''
            in \Cref{eqn:upper:correctnessOfJoinProcedure}.}
        Intuitively, this direction proves that the algorithm
        computes all strings of the combined language.
        Now, consider an arbitrary $z \in L_1 \oplus L_2$
        with $\sigvec{z} = \vec s$.
        This means that there are $x_1 \in L_1$ and $x_2 \in L_2$
        such that $x_1 \oplus x_2 = z$.
        When setting $a_1 \deff \comp{}{\degvec{x_1}}$,
        setting $a_2 \deff \comp{}{\degvec{x_2}}$,
        and setting $a \deff a_1 + a_2$,
        we trivially have $a \in \CZ_{S}$.
        Moreover, we know that $f(a_1) = g(a_2) = 1$
        and thus, $h(a) > 0$
        as the convolution procedure from \cref{thm:alg-convolution}
        correctly computes the function $h$.
        Hence, $\decomp{a} = \decomp{\big(\comp{}{\degvec{x_1}} + \comp{}{\degvec{x_2}} \big)}$ is in the set $\{\decomp{a} \mid a \in \CZ_S \land
            h(a) > 0\}$.
        Moreover, $\comp{}{\degvec{z}} = \comp{}{\degvec{x_1}} + \comp{}{\degvec{x_2}}$, and hence $\decomp{\big(\comp{}{\degvec{x_1}} + \comp{}{\degvec{x_2}} \big)} = \decomp{(\comp{}{\degvec{z}})} = \degvec{z}$ by \cref{thm:decompressing}.

        This concludes the proof of the correctness of the combine operation.
    \end{claimproof}

    Finally, we argue that the described algorithm has the claimed runtime.
    \begin{claim}
        \label{clm:upper:quickcombinations:runtime}
        The running time of the above algorithm is
        $\mname^n \cdot n^{\Oh(1)}$.
    \end{claim}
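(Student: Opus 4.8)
The plan is to bound the cost of every step of the algorithm --- the set-up before the loop, and the four steps inside the loop over $\vec s$ --- and add them up. The one quantitative observation that does all the work is that, for the $\sigma$-defining set $S$ computed for $\sigvec{L_1}\cap\sigvec{L_2}$, one has $2^{\abs{S}}\cdot\mname^{\abs{\compl S}}\le\mname^{\abs{S}}\cdot\mname^{\abs{\compl S}}=\mname^n$ because $\mname\ge 2$. Since the loop ranges over at most $2^{\abs{S}}$ vectors $\vec s$ (as $S$ characterises $\sigvec{L_1}\cap\sigvec{L_2}$) and, by \cref{thm:compressio_space_small}, the compressed space $\CZ_S$ has size at most $\mname^{\abs{\compl S}}$, I can afford $\mname^{\abs{\compl S}}\cdot n^{\Oh(1)}$ work per iteration of the loop and still finish within $\mname^n\cdot n^{\Oh(1)}$.

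Concretely, I would argue as follows. For the set-up: by \cref{thm:upper:boundOnLanguage} both $L_1$ and $L_2$ have size at most $\mname^n$, so forming $\sigvec{L_1}\cap\sigvec{L_2}$ and grouping the strings of $L_1,L_2$ by their $\sigma$-vectors (so that $L_i^{\vec s}$ and $\degvec{L_i^{\vec s}}$ are available after one pass over each language) costs $\mname^n\cdot n^{\Oh(1)}$, and computing $S,\CW_S$ via \cref{thm:sig-def-set} costs $\Oh(\abs{\sigvec{L_1}\cap\sigvec{L_2}}\cdot n^3)\le\Oh(2^{\abs S}\cdot n^3)\le\mname^n\cdot n^{\Oh(1)}$. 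For the per-$\vec s$ work: setting up $f,g\colon\CZ_S\to\ZZ$ costs $\Oh(\abs{\CZ_S}+\abs{L_1^{\vec s}}+\abs{L_2^{\vec s}})$, whose sum over all $\vec s$ is at most $2^{\abs S}\mname^{\abs{\compl S}}+\abs{L_1}+\abs{L_2}\le 3\mname^n$. The convolution of \cref{thm:alg-convolution} is invoked with all $d_i=\mname$, so $D=\mname^{\abs{\compl S}}$; since $h(a)\le D\le\mname^n$ everywhere I take $M=\mname^n$ in \cref{rem:find-prime-alg}, and because the list of distinct $d_i$ is the single value $\mname$ one gets $D'=\mname$ and a prime $p=\Oh(\max\{\varphi(\mname)M\log M,\exp(\mname)\})=\Oh(\mname^n\cdot n)$, treating the fixed $\mname$ as a constant; hence $p$ and an $\mname$-th root of unity are found once (the dimensions $\compl S$ do not depend on $\vec s$) in time $\mname^n\cdot n^{\Oh(1)}$, each of the $\Oh(D\log D)$ field operations costs $n^{\Oh(1)}$ since $\log p=\Oh(n)$, so one convolution costs $\mname^{\abs{\compl S}}\cdot n^{\Oh(1)}$ and the sum over $\vec s$ is $\mname^n\cdot n^{\Oh(1)}$. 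Finally, for each $a\in\CZ_S$ with $h(a)>0$, computing $\decomp{a}$ via \cref{thm:decompressing} (which evaluates the remainder of \cref{def:remainder_vector}, an $\Oh(n^2)$ task) and turning it into a state-vector with $\sigma$-vector $\vec s$ costs $\Oh(n^2)$ per vector, so summed over $\vec s$ this is at most $2^{\abs S}\mname^{\abs{\compl S}}\cdot n^{\Oh(1)}\le\mname^n\cdot n^{\Oh(1)}$, and the final union over all $\vec s$ adds one more such term. Adding up all contributions gives the claimed bound.

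The hard part is not any individual estimate but making sure the two exponential factors --- the $2^{\abs S}$ choices of $\sigma$-vector and the $\mname^{\abs{\compl S}}$ size of the convolution domain --- multiply to exactly $\mname^n$ rather than to something like $(2\mname)^n$. This is precisely why the convolution must run over the compressed space $\CZ_S$ of dimension $\abs{\compl S}$ and not over the full weight-vector space, and why the compression of \cref{sec:upperTw:compression} was designed to commute with the combination operation in the first place. A secondary subtlety worth double-checking is that the prime from \cref{rem:find-prime-alg} really stays polynomial in $\mname^n$: this relies on all convolution moduli coinciding, so that $D'=\mname$ is a constant and the $\exp(D')$ term in the bound on $p$ is harmless.
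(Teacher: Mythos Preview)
Your proposal is correct and follows essentially the same approach as the paper: bound each phase of the algorithm separately, use \cref{thm:upper:boundOnLanguage} to control $\abs{L_1},\abs{L_2}$, and exploit that the loop over $\vec s\in\sigvec{L_1}\cap\sigvec{L_2}$ has at most $2^{\abs S}$ iterations while the convolution domain has size $\abs{\CZ_S}\le\mname^{\abs{\compl S}}$, so the product is at most $\mname^n$. The only cosmetic differences are that the paper takes $M=\abs{\CZ_S}$ rather than your (also valid) $M=\mname^n$ in \cref{rem:find-prime-alg}, and it does not explicitly single out that the prime can be computed once; neither changes the argument.
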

    \begin{claimproof}
        We can compute the sets $\sigvec{L}$ for any $L \subseteq \allStates^n$
        in time
        \begin{align*}
            \Oh\left(|L| \cdot \log(|\sigvec{L}|) \cdot n\right) \subseteq \Oh\left(|L| \cdot n^2\right)
        \end{align*}
        by iterating over each string of the languages and by utilizing a suitable set data structure. Hence, we can compute $\sigvec{L_1} \cap \sigvec{L_2}$ in time
        \(
        \Oh\left( \max(|L_1|,|L_2|) \cdot n^2 \right)
        \)
        by first computing $\sigvec{L_1}$ and $\sigvec{L_2}$ separately,
        and then applying a standard algorithm for the intersection of sets.

        By \cref{thm:sig-def-set},
        the computation of the $\sigma$-defining set $S$
        for $\sigvec{L_1} \cap \sigvec{L_2}$ takes time
        \begin{align*}
            \Oh\left( \left |\sigvec{L_1} \cap \sigvec{L_2} \right | \cdot n^3\right)
            .
        \end{align*}

        The computation of the four sets $L_1^{\vec s}$, $L_2^{\vec s}$,
        $\degvec{L_1^{\vec s}}$, and $\degvec{L_2^{\vec s}}$
        takes time $\Oh(\max(|L_1|,|L_2|) \cdot n^2)$
        by iterating over each of $L_1$ and $L_2$ once,
        and assigning the vectors to the correct sets (to efficiently find the appropriate set, one can, for example, use balanced search trees).
        This can be done even before the vector $\vec s$ is fixed.

        Next, the algorithm fixes an arbitrary
        $\vec s \in \sigvec{L_1} \cap \sigvec{L_2}$.
        By iterating over all elements of $\degvec{L_1^{\vec s}}$
        and $\degvec{L_2^{\vec s}}$, and computing their compressions,
        we can compute $f$ and $g$ in time
        \begin{align*}
            |\CZ_{S}| \cdot n^{\Oh(1)}
            + \max\Bigl(\bigl| \degvec{L_1^{\vec s}} \bigr|,
            \bigl| \degvec{L_2^{\vec s}} \bigr|\Bigr)
            \cdot n^{\Oh(1)}
            = |\CZ_{S}| \cdot n ^{\Oh(1)}
        \end{align*}
        when considering that $|\degvec{L_1^{\vec s}}| \leq |\CZ_{S}|$
        and $|\degvec{L_2^{\vec s}}| \leq |\CZ_{S}|$.

        For the application of \cref{thm:alg-convolution}
        to compute the function $h$,
        we require that certain roots of unity and a suitable prime $p$ exist.
        We use \cref{rem:find-prime-alg} for this
        and observe that all our entries of the compressed weight-vector
        are reduced modulo $\mname$.
        Therefore,
        the number of moduli over which we do the computations
        (these are the variables $d_1', \dots, d_\ell'$
        from \cref{rem:find-prime-alg})
        is constant and actually just $1$ and thus, $D' = \mname$.
        Now, using \cref{rem:find-prime-alg}, this prime $p$
        and the root of unity (as we only have one single module)
        can be computed in time
        \begin{align*}
            p \cdot (\log p)^{\Oh(1)}
            .
        \end{align*}
        Hence, when setting $M \deff |\CZ_{S}|$, we see that
        \begin{align*}
            p
            = \Oh \left(\max(\varphi(\mname) M \log M, \exp(\mname)) \right)
            = \Oh (|\CZ_{S}| \cdot \log (|\CZ_{S}|)),
        \end{align*}
        since $\mname$ is a constant.
        Thus, by \cref{thm:alg-convolution},
        we can compute $p$, the root of unity, and $h$ in time
        \begin{align*}
            |\CZ_{S}| \cdot (\log |\CZ_{S}|)^{\Oh(1)} \cdot n^{\Oh(1)}
            .
        \end{align*}

        The computation of $W_{1,2}^{\vec s}$ can then be done in time
        \(
        |\CZ_{S}| \cdot n^{\Oh(1)}
        \)
        by iterating over all elements of $\CZ_{S}$
        and decompressing them whenever possible
        by \cref{thm:decompressing}.
        We conclude that for each $\vec s$,
        the running time of the algorithm is bounded by
        \begin{align*}
            |\CZ_{S}| \cdot \log(|\CZ_{S}|)^{\Oh (1)}
            \cdot n^{\Oh (1)}
            .
        \end{align*}
        Computing $L_{1,2}^{\vec s}$ then takes time $\Oh(|L_{1,2}^{\vec s}| \cdot n)$
        and the final computation of $L_{1,2}$ takes $\Oh(|L_{1,2}|)$ time.

        Hence, the algorithm needs time
        \begin{align*}
            \max(|L_1|,|L_2|,|L_1 \oplus L_2|) \cdot n^3
            + \sum_{\vec s \in \sigvec{L_1} \cap \sigvec{L_2}}
            |\CZ_{S}|
            \cdot \log(|\CZ_{S}|)^{\Oh (1)}
            \cdot n^{\Oh (1)}
        \end{align*}
        time in total.
        Now, we use \cref{thm:compressio_space_small} and obtain
        \begin{align*}
            \log(|\CZ_{S}|)
            \leq \log(\mname^n)
            = n^{\Oh(1)}.
        \end{align*}
        Using \cref{thm:compressio_space_small} once more
        and plugging in the running time of the algorithm,
        we obtain the final bound of
        \begin{align*}
             & \left(
            \max(|L_1|,|L_2|,|L_1 \oplus L_2|)
            + \sum_{\vec s \in \sigvec{L_1} \cap \sigvec{L_2}}
            |\CZ_{S}|
            \cdot n^{\Oh (1)}
            \right) \cdot n^{\Oh(1)} \\
             & \leq \left(
            \max(|L_1|,|L_2|,|L_1 \oplus L_2|)
            + 2^{|S|} \cdot \mname^{|\compl S|} \right)
            \cdot n^{\Oh(1)}.
        \end{align*}
        Finally, we apply~\cref{thm:upper:boundOnLanguage}
        to bound the size of $L_1$, $L_2$, and $L_1\oplus L_2$ as they are sparse,
        to obtain the bound of
        \[
            \mname^n \cdot n^{\Oh (1)}
        \]
        for the running time of the algorithm.
    \end{claimproof}

    This concludes the proof of \cref{thm:quick_combinations}
    by combining \cref{clm:upper:quickcombinations:correctness,%
        clm:upper:quickcombinations:runtime}.
\end{proof}

\subsection{Dynamic Programming Algorithm}
\label{sec:upperTW:dp}
With the results from the previous section,
we now have everything ready
to state the fast algorithm for \srDomSet
when the sets are residue classes modulo $\mname$.
We follow the standard dynamic programming approach on tree decompositions,
that was also used in~\cite{focke_tight_2023_ub} in a similar manner.
With this result we finally prove our algorithmic main theorem,
which we restate here for convenience.

\setcounter{mtheorem}{0}
\thmTwUpperBound
\setcounter{mtheorem}{2}
\begin{proof}
    Let $G$ be an instance of \srDomSet with $n = \abs{V(G)}$ vertices
    that is given along with a tree decomposition of width $\tw$.
    As we can transform this decomposition into a nice tree decomposition
    in time $n^{\Oh(1)}$,
    we assume without loss of generality
    that the provided decomposition is already nice
    and that the bags of the root node and the leaf nodes are empty.
    See, for example, \cite[Section~7.2]{cyganParameterizedAlgorithms}
    for further details about this standard procedure.

    Recall that
    for a node $t \in T$, we denote by $X_t$ the associated bag
    and define $V_t$ to be the set of all vertices
    that have been introduced in the subtree rooted at $t$ of $T$ (including $t$).

    To goal of the algorithm is to compute,
    for each node $t$ of the nice tree decomposition
    and each index $i \in \numbZ{n}$,
    a language $L\pos{t,i}$
    which contains all strings $x \in \allStates^{X_t}$
    such that there exists a partial solution $S_x$
    for the graph with portals $(G \pos{V_t}, X_t)$
    with $|S_x \setminus X_t| = i$.
    Using \cref{thm:structural_property},
    one can immediately observe that, for all $t$ and $i$,
    the language $L\pos{t,i}$ is sparse.
    Recall that from the definition of a partial solution,
    all vertices in $G\pos{V_t \setminus X_t}$
    must have the correct number of neighbors in $S_x$
    whereas this might not be the case for the vertices of $X_t$.

    The algorithm now reads as follows.
    We traverse the tree decomposition in post-order and
    depending on the type of the current node $t$,
    we apply one of the following procedures.
    \begin{description}[itemindent=!]
        \item[Leaf Node.]
              For a leaf node $t$ of the tree decomposition,
              we have $X_t = \emptyset$.
              Hence, we set $L\pos{t,0} \deff \{\epsilon\}$
              where $\epsilon$ denotes the empty vector.
              For all $i \in \numb{n}$, we set $L\pos{t,i} \deff \emptyset $.

        \item[Introduce Node.]
              Let $t$ be the introduce-node and $t'$ its child.
              Furthermore, let $v$ be the introduced vertex.
              Each solution in $L[t',i]$ provides two solutions for $t$
              depending on whether we select $v$ or not.

              Before moving to the definition of the table entry,
              we first introduce some notation.
              Let $y \in \allStates^{X_{t'}}$ be a state-vector
              for the child node $t'$
              and let $c \deff |\{u \in N(v) \mid y\pos{u} \in \sigStates\}$
              be the number of neighbors of $v$
              that are selected according to string $y$.
              The $\rho$-extension of $y$ is defined as the unique string $x$
              such that
              \begin{itemize}
                  \item $x\pos{v} \deff \rho_{c \bmod \mname}$
                        and
                  \item
                        $x\pos{X_{t'}} \deff y\pos{X_{t'}}$.
              \end{itemize}

              Similarly, we define the $\sigma$-extension of $y$
              as the unique string $x$ such that
              \begin{itemize}
                  \item $x\pos{v} \deff \sigma_{c \bmod \mname}$
                  \item
                        $x \pos{X_{t'} \setminus N(v)}
                            \deff y\pos{X_{t'} \setminus N(v)}$,
                  \item
                        for all $u \in X_{t'} \cap N(v)$
                        with $y\pos{u} = \tau_k$,
                        we set $x\pos{u} \deff \tau_{k+1 \bmod \mname}$.
              \end{itemize}

              Then, we define the resulting language as
              \begin{align*}
                  L\pos{t,i}  \deff
                   & \{x \mid x  \text{ is the $\rho$-extension of } y \in L\pos{t',i}\} \\
                   & \cup
                  \{x \mid x \text{ is the $\sigma$-extension of } y \in L\pos{t',i}\}
              \end{align*}
              for all $i \in \numbZ{n}$.

        \item[Forget Node.]
              Let $t$ be a forget node with child $t'$
              and let $v$ be the vertex that is forgotten.
              The language $L\pos{t,i}$ should contain those strings
              of languages of $t'$
              in which $v$ has a neighbor count that is contained in $\sigma$
              if $v$ is selected,
              and a neighbor count contained in $\rho$ if $v$ is not selected.
              Furthermore, if we forget a selected vertex,
              then we must take this into account with our second index $i$.

              Hence, we set
              \begin{align}
                  L\pos{t,0} \deff~ &
                  \{ x\pos{X_t} \mid x \in L\pos{t',0}
                  \land x\pos{v} = \rho_c
                  \land c \in \rho \}
                  \label{eqn:forget:caseOne}
                  \intertext{and, for all \(i \in \numb{n}\), we set}
                  L\pos{t,i} \deff~ &
                  \{ x\pos{X_t} \mid x \in L\pos{t',i}
                  \land x\pos{v} = \rho_c
                  \land c \in \rho \}
                  \nonumber
                  \\
                                    & \cup \{x\pos{X_t} \mid x \in L\pos{t',i-1}
                  \land x\pos{v} = \sigma_c \land c \in \sigma \}
                  \label{eqn:forget:caseTwo}
                  .
              \end{align}

        \item[Join Node.]
              Let $t$ be a join node with children $t_1$ and $t_2$.

              We clearly want to use the previous algorithm
              to join languages efficiently.
              However, when doing this naively with the languages from the child nodes,
              we count the selected neighbors in the bag twice
              by simply summing up the number of neighbors
              as the selected vertices are selected in \emph{both} partial solutions.
              Thus, we define an auxiliary language, that can be computed efficiently.
              For a language $L$, we define the auxiliary language $\hat L$ as
              \begin{align*}
                  \hat L \deff~       & \{\hat x \mid x \in L\}
                  \intertext{where, for each \(v \in X_t\), we set}
                  \hat x\pos v \deff~ & \begin{cases}
                                            \sigma_{\hat c} &
                                            \text{if }
                                            x\pos v = \sigma_c
                                            \text{, and }
                                            \hat c = (c - |\{w \in N(v) \cap X_t \mid x\pos w \in \mathbb{S}\}|) \bmod \mname, \\
                                            \rho_{\hat c}   &
                                            \text{if }
                                            x\pos v = \rho_c
                                            \text{, and }
                                            \hat c = (c - |\{w \in N(v) \cap X_t \mid x\pos w \in \mathbb{S}\}|) \bmod \mname
                                            .
                                        \end{cases}
              \end{align*}

              Now, we can use the procedure from \cref{thm:quick_combinations}
              to join the languages of the two children efficiently and set
              \begin{equation*}
                  L\pos{t,i} \deff
                  \bigcup_{j \in \numbZ{i}} \hat L\pos{t_1,j} \oplus L\pos{t_2,i-j}
              \end{equation*}
    \end{description}

    Finally, a solution of size exactly $i$ exists if and only if,
    for the root node $r$,
    the table entry $L\pos{r, i}$ contains the empty string $\epsilon$
    as the bag of the root node is empty.

    \begin{claim}
        The algorithm is correct.
    \end{claim}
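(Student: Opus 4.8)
The plan is to prove, by bottom-up induction over the nice tree decomposition, the invariant that for every node $t$ and every $i \in \numbZ{n}$ the set $L\pos{t,i}$ is exactly the set of strings $x \in \allStates^{X_t}$ that are compatible with $(G\pos{V_t},X_t)$ via some witness $S_x$ with $\abs{S_x \setminus X_t} = i$. Granting the invariant for the root $r$, the theorem follows immediately: since $X_r = \emptyset$, the only candidate string is $\epsilon$, and by \cref{def:partialsol} a partial solution for $(G,\emptyset)$ is precisely a $(\sigma,\rho)$-set; hence $\epsilon \in L\pos{r,i}$ iff $G$ has a $(\sigma,\rho)$-set of size $i$, and as the table is filled for all $i$ at once, this gives the claim. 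The base case (leaf node) is trivial: $V_t = X_t = \emptyset$, the empty set is the unique partial solution, and it has non-portal size $0$, matching $L\pos{t,0} = \{\epsilon\}$ and $L\pos{t,i} = \emptyset$ for $i \ge 1$.

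For the introduce node I would use the tree-decomposition axiom that the introduced vertex $v$ has no neighbor in $V_{t'}\setminus X_{t'}$. Hence the partial solutions for $(G\pos{V_t},X_t)$ are precisely those obtained from a partial solution $S_y$ for $(G\pos{V_{t'}},X_{t'})$ either by leaving $v$ unselected (no constraint is imposed on $v$, since $v \in X_t$, and no vertex of $V_{t'}\setminus X_{t'}$ gains a selected neighbor) or by adding $v$ (which raises the selected-neighbor count of each bag-neighbor of $v$ by exactly one, and still imposes no constraint on $v$); in both cases the non-portal size is unchanged. These two operations are exactly the $\rho$-extension and the $\sigma$-extension of the definition, where the value $c$ records the selected-neighbor count of $v$ modulo $\mname$; together with the induction hypothesis for $t'$ this yields the invariant. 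The forget node is similar: $V_t = V_{t'}$, and $v$ passes from the bag into $V_t \setminus X_t$, so the partial-solution conditions for $(G\pos{V_t},X_t)$ are exactly those for $(G\pos{V_{t'}},X_{t'})$ plus the requirement that $v$ itself satisfy its own degree constraint. Because $\sigma$ and $\rho$ are periodic with period $\mname$, the recorded residue $c$ lies in $\sigma$ (resp.\ $\rho$) iff the true selected-neighbor count of $v$ does, so the filters ``$c \in \sigma$'' and ``$c \in \rho$'' in \cref{eqn:forget:caseOne,eqn:forget:caseTwo} capture this precisely; the index shift from $L\pos{t',i-1}$ to $L\pos{t,i}$ in the selected branch accounts for $v$ now being counted in the non-portal part. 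Projecting a witness onto $X_t$ changes no selected-neighbor count of a remaining portal vertex, so compatibility is preserved in both directions.

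The main obstacle is the join node. Here $V_t = V_{t_1} \cup V_{t_2}$, $V_{t_1} \cap V_{t_2} = X_t = X_{t_1} = X_{t_2}$, and there are no edges between $V_{t_1}\setminus X_t$ and $V_{t_2}\setminus X_t$. I would first prove the decomposition statement: $S$ is a partial solution for $(G\pos{V_t},X_t)$ with $\abs{S\setminus X_t} = i$ if and only if $S = S_1 \cup S_2$ with $S_\iota$ a partial solution for $(G\pos{V_{t_\iota}},X_{t_\iota})$, $S_1 \cap X_t = S_2 \cap X_t$, and $\abs{S_1\setminus X_t} + \abs{S_2\setminus X_t} = i$; the ``if'' direction uses that every vertex of $V_{t_\iota}\setminus X_t$ has all its neighbors inside $V_{t_\iota}$, so its constraint is decided already inside $S_\iota$, and the ``only if'' direction takes $S_\iota = S\cap V_{t_\iota}$ and uses that $V_{t_1}\setminus X_t$ and $V_{t_2}\setminus X_t$ partition $V_t\setminus X_t$. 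The delicate point is the state of a portal $v \in X_t$: if $v$ has $c_1$ and $c_2$ selected neighbors in the $S_1$- and $S_2$-pictures, then in $S$ it has $c_1 + c_2 - d$ selected neighbors, where $d = \abs{\{w \in N(v)\cap X_t \mid w \in S\}}$ is the number of selected bag-neighbors, counted once in each $c_\iota$. This is exactly why the construction first sends $L\pos{t_1,\cdot}$ through the auxiliary map defining $\hat L$, which replaces $c_1$ by $c_1 - d \bmod \mname$, and only then applies $\oplus$, which adds $c_2$ and, crucially, forces the two $\sigma$-vectors to agree, i.e.\ $S_1\cap X_t = S_2\cap X_t$; by periodicity $c_1 + c_2 - d \bmod \mname$ is the correct residue. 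To invoke \cref{thm:quick_combinations} for the actual computation I must also verify its hypotheses: $L\pos{t_1,j}$ and $L\pos{t_2,i-j}$ are sparse by \cref{thm:structural_property} (all their witnesses have the same non-portal size modulo $\mname$); $\hat L\pos{t_1,j}$ is sparse because $\degvec{\hat x} = \degvec{x} - A\sigvec{x} \bmod \mname$ for the symmetric $0/1$ adjacency matrix $A$ of $G\pos{X_t}$ while $\sigvec{\hat x} = \sigvec{x}$, so the bilinear sparseness identity is preserved using $A = A^{\top}$; and $\hat L\pos{t_1,j} \oplus L\pos{t_2,i-j}$ is a subset of the realized language $L\pos{t,i}$ by the decomposition just proved, hence sparse again by \cref{thm:structural_property}. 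Taking the union over all splits $i = j + (i-j)$ then yields exactly $L\pos{t,i}$, which closes the induction.
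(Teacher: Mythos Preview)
Your proof is correct and considerably more detailed than the paper's, which dispatches the leaf, introduce, and forget nodes in one sentence and addresses only the join node in any depth. The one substantive difference is how you establish that the auxiliary language $\hat L\pos{t_1,j}$ is sparse. The paper argues semantically: each $\hat x$ is witnessed by the same partial solution $S_x$ but in the modified graph $(G\pos{V_{t_1}} - E(X_{t_1},X_{t_1}),\, X_{t_1})$ where the bag-internal edges have been deleted, so $\hat L\pos{t_1,j}$ is (a subset of) a realized language and \cref{thm:structural_property} applies directly. You instead argue algebraically, writing $\degvec{\hat x} \equiv_{\mname} \degvec{x} - A\,\sigvec{x}$ with $A$ the symmetric adjacency matrix of $G\pos{X_t}$, and observing that the correction term $\sigvec{x}^{\top} A\, \sigvec{y}$ is symmetric in $x,y$, so the sparseness identity is preserved. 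Both arguments are valid; yours is self-contained and avoids introducing an auxiliary graph, while the paper's makes clearer why the ``hat'' transformation is the natural one (it literally removes the double-counted edges). Your verification that $\hat L\pos{t_1,j} \oplus L\pos{t_2,i-j} \subseteq L\pos{t,i}$, and hence is sparse as a sublanguage of a sparse language, is also correct and is what the paper asserts (in the runtime claim rather than the correctness claim) when it says ``the resulting language is also sparse by \cref{thm:structural_property}.''
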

    \begin{claimproof}
        The correctness of the algorithm for the leaf nodes,
        introduce nodes, and forget nodes
        follows directly from the above description of the algorithm.

        For the join node it remains to show that
        the language $\hat L$ is also sparse when $L\pos{t,i}$ is a sparse language.
        For every $x \in L\pos{t,i}$,
        we have that there exists a compatible partial solution $S_x$
        for the graph with portals $(G\pos{V_t}, X_t)$ with $|S_x \setminus U| = i$.
        If we now consider the graph with portals
        $G' = (G\pos{V_t} - E(X_t,X_t), X_t)$,
        where $E(X_t, X_t)$ are the edges
        with both endpoints in $X_t$,
        then we can observe that $S_x$ is a compatible partial solution
        for this graph that witnesses the string $\hat x$.
        In other words, for every $x \in \hat L\pos{t,i}$,
        there is a compatible partial solution for $G'$
        with $|S_x \setminus U| = i$
        and thus, $\hat L\pos{t,i}$ is sparse by \cref{thm:structural_property}.
    \end{claimproof}

    It remains to analyze the running time of the algorithm

    \begin{claim}
        The running time of the algorithm is $\mname^\tw \cdot n^{\Oh(1)}$.
    \end{claim}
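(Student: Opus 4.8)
The plan is to charge time separately at every node $t$ of the nice tree decomposition and every index $i\in\numbZ{n}$ for building the table entry $L\pos{t,i}$ from the children's tables, and then to sum over the $\Oh(\tw\cdot n)$ nodes of the (already niceified) decomposition and the $n+1$ indices. The backbone of the estimate is the size bound: by the correctness claim each $L\pos{t,i}$ is exactly the language of partial solutions of $(G\pos{V_t},X_t)$ with precisely $i$ selected non-portal vertices, so it is sparse by \cref{thm:structural_property}, and being a sublanguage of $\allStates^{X_t}$ with $\abs{X_t}\le\tw+1$, \cref{thm:upper:boundOnLanguage} yields $\abs{L\pos{t,i}}\le\mname^{\tw+1}$. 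Since $\allStates$ has constant size, each entry is also stored in $\mname^{\tw+1}\cdot n^{\Oh(1)}$ space.

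Next I would run through the four node types. Leaf nodes are immediate. For an introduce node I would argue that $L\pos{t,i}$ is produced by scanning the at most $\mname^{\tw+1}$ strings of $L\pos{t',i}$ and emitting the $\rho$- and $\sigma$-extension of each, where computing the neighbour count $c$ and performing the extension costs $n^{\Oh(1)}$ per string; summed over all indices this is $\mname^{\tw+1}\cdot n^{\Oh(1)}$. Forget nodes are the same kind of scan-filter-project over $L\pos{t',i}$ and $L\pos{t',i-1}$, again within $\mname^{\tw+1}\cdot n^{\Oh(1)}$.

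The hard part is the join node, and that is where I would spend most of the effort. For a join node $t$ with children $t_1,t_2$ the plan is: first build all auxiliary languages $\hat L\pos{t_1,j}$ (a per-position bag-neighbour subtraction, costing $\mname^{\tw+1}\cdot n^{\Oh(1)}$ in total over all $j$); then, for each $i$ and each $j\in\numbZ{i}$, invoke \cref{thm:quick_combinations} on $\hat L\pos{t_1,j}$ and $L\pos{t_2,i-j}$. Using that lemma instead of the naive $(2\mname)^{\abs{X_t}}$ enumeration is precisely what keeps the join within budget, and the one point that needs care is its hypothesis that all three participating languages be sparse: $\hat L\pos{t_1,j}$ is sparse (shown in the correctness claim), $L\pos{t_2,i-j}$ is sparse as above, and $\hat L\pos{t_1,j}\oplus L\pos{t_2,i-j}\subseteq L\pos{t,i}$ is a sublanguage of a sparse language, hence sparse. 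Each such combination then costs $\mname^{\abs{X_t}}\cdot n^{\Oh(1)}\le\mname^{\tw+1}\cdot n^{\Oh(1)}$, and there are only $\Oh(n^2)$ pairs $(i,j)$ plus $\Oh(n)$ unions to form, so the join node costs $\mname^{\tw+1}\cdot n^{\Oh(1)}$ as well.

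Finally I would assemble the pieces: $\Oh(\tw\cdot n)$ nodes, each contributing $\mname^{\tw+1}\cdot n^{\Oh(1)}$ over all indices, plus the negligible root readout (testing $\epsilon\in L\pos{r,i}$ for every $i$, which simultaneously answers the question for every target size), gives total time $\mname^{\tw+1}\cdot n^{\Oh(1)}=\mname^{\tw}\cdot n^{\Oh(1)}$, the last step using that $\mname$ is a constant fixed by $\sigma$ and $\rho$. I expect no genuinely new difficulty beyond keeping this sparseness bookkeeping consistent with the join procedure of \cref{thm:quick_combinations}.
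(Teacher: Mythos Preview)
Your proposal is correct and follows essentially the same approach as the paper: bound each $L\pos{t,i}$ by $\mname^{\tw+1}$ via sparseness and \cref{thm:upper:boundOnLanguage}, handle leaf/introduce/forget nodes by linear scans over these small languages, and handle join nodes by first forming the auxiliary languages $\hat L$ and then invoking \cref{thm:quick_combinations}, after verifying the sparseness hypotheses. The only cosmetic difference is that you justify sparseness of $\hat L\pos{t_1,j}\oplus L\pos{t_2,i-j}$ via the sublanguage-of-$L\pos{t,i}$ argument, whereas the paper appeals directly to \cref{thm:structural_property}; both are valid.
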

    \begin{claimproof}
        Regarding the runtime, it is easy to see that introduce nodes and forget nodes
        can be computed in time $\mname^{\tw} \cdot n^{\Oh(1)}$
        because this bounds the size of the sparse languages of each tree node,
        and processing each string of each language
        can be done in time polynomial in $n$.
        Furthermore, there are exactly $n$ such languages,
        which only contributes to the polynomial factor.

        For join nodes, we first have to compute $\hat L$ in time
        $|L| \cdot n^{\Oh(1)} = \mname^{\tw} \cdot n^{\Oh(1)}$
        since we must only iterate over $L$ once
        and processing each string is possible in time polynomial in its length.
        Observe that, when we combine two sparse languages in the join operation,
        by our definitions,
        the resulting language is also sparse by \cref{thm:structural_property}.
        Therefore, computing the combination of the languages takes time
        $\mname^{\tw} \cdot n^{\Oh(1)}$ by \cref{thm:quick_combinations}.
        Thus, each node can be processed in time
        $\mname^\tw \cdot n^{\Oh(1)}$.

        For the leaf nodes
        it is easy to see that they can be processed in constant time
        when first initializing each possible table entry with the empty set.

        As the number of nodes of the tree decomposition is polynomial in $n$,
        the dynamic programming algorithm runs in time
        $\mname^\tw \cdot n^{\Oh(1)}$.
    \end{claimproof}

    This concludes the proof of the algorithmic result for solving \srDomSet
    from \cref{thm:algorithm_runtime}.
\end{proof}

We conclude this section by introducing a generalization
of the classical \srDomSet problem
where we provide a shift-vector for every vertex.

\begin{definition}[\srDomSetShift]
    Fix two non-empty sets $\sigma$ and $\rho$ of non-negative integers.
    For a graph $G$,
    and a vector $\pi \from V(G) \to \numbZ{\abs{V}}$,
    a set $S \subseteq V(G)$ is a $(\sigma,\rho)$-$\pi$-set for $G$,
    if and only if
    (1) for all $v \in S$,
    we have $\abs{N(v) \cap S} + \pi(v) \in \sigma$,
    and (2), for all $v \in V(G) \setminus S$,
    we have $\abs{N(v) \cap S} + \pi(v) \in \rho$.

    The problem \srDomSetShift asks for a given graph $G$
    and a shift vector $\pi$
    whether there is a $(\sigma,\rho)$-$\pi$-set $S$ or not.
\end{definition}
This covers the classical \srDomSet by setting $\pi(v)=0$
for all vertices $v \in V(G)$.

When taking a closer look at our dynamic program,
one can see that there are only two places
where we actually refer to the sets $\sigma$ and $\rho$.
These two positions are the definitions
of the table entries for the forget nodes
in \cref{eqn:forget:caseOne,eqn:forget:caseTwo}.
If we modify these definitions as follows
and set
\begin{align*}
    L\pos{t,0} \deff~ &
    \{ x\pos{X_t} \mid x \in L\pos{t',0}
    \land x\pos{v} = \rho_c
    \land c + \pi(v) \in \rho \}
    \intertext{and, for all \(i \in \numb{n}\), we set}
    L\pos{t,i} \deff~ &
    \{ x\pos{X_t} \mid x \in L\pos{t',i}
    \land x\pos{v} = \rho_c
    \land c + \pi(v) \in \rho \}                                   \\
                      & \cup \{x\pos{X_t} \mid x \in L\pos{t',i-1}
    \land x\pos{v} = \sigma_c \land c + \pi(v) \in \sigma \},
\end{align*}
then we directly get the following result
by using the same proof as for \cref{thm:algorithm_runtime}.

\begin{theorem}
    Let $\sigma, \rho \subseteq \Nat$ be two residue classes
    modulo $\mname \ge 2$.
    Then, in time $\mname^\tw \cdot \abs{G}^{\Oh(1)}$ we can decide
    simultaneously for all $s$
    if a given graph $G$
    and a shift-vector $\pi \from V(G) \to \numbZ{\mname-1}$,
    whether there is a $(\sigma,\rho)$-$\pi$-set of size $s$
    when a tree decomposition of width $\tw$ is given with the input.
\end{theorem}

We immediately get the following corollary for solving \LightsOut
in the general setting.

\begin{corollary}
    We can find in time
    $2^\tw \cdot \abs{G}^{\Oh(1)}$
    an optimal solution for \LightsOut
    for an arbitrary starting position
    assuming the input graph $G$
    comes with a tree decomposition of width $\tw$.
\end{corollary}

\section{Lower Bound for the Problem with Relations}
\label{sec:lowerRel}

As explained earlier,
the first step of the lower bound is to show the hardness
of \srDomSetRel which is the generalization of \srDomSet
where we additionally allow relations.
See \cref{def:srDomSetRel} for the formal definition.
In a second step, which is presented in \cref{sec:relations},
we then remove the relations.
Concretely, in the following we focus on the proof
of \cref{thm:lower_bound_relations}.

\lowerBoundsRelations*

We establish this lower bound by a reduction
from the constraint satisfaction problem \qcsp to \srDomSetRel.
We first explain the high-level idea in the following
and then describe the formal construction of the graph
in \cref{sec:lowerRel:construction}.
We prove the correctness of the construction
and analyze the pathwidth of the graph in \cref{sec:lowerRel:properties},
before we finish the proof in \cref{sec:lowerRel:finalizing}.

\subparagraph*{High-level Idea.}

The high-level idea behind the construction follows the ideas presented
by Curticapean and Marx in \cite{CurticapeanM16}
which have later been extended
to \textsc{Generalized Matching} \cite{MarxSS21,MarxSS22}
and \srDomSet \cite{focke_tight_2023_lb}.

In contrast to these known lower bounds,
our hardness result is \emph{not} based on a direct reduction from SAT,
instead
we use \qcsp as a basis.
This problem was introduced by Lampis~\cite{Lampis20}
to serve as a general starting point to rule out algorithms with running times
$(B-\epsilon)^k \cdot n^{\Oh (1)}$ for an appropriate parameter $k$.

\defCSP*

By reducing from \qcsp,
most technicalities resulting from the change of the basis in the running time
to achieve improved lower bounds are hidden.
As an immediate consequence the correspondence
between the vertex states and the variable assignments
becomes more transparent.

To prove the required lower bounds,
our construction must overcome certain obstacles
and must fulfill certain properties.
In the following we name the four most important ones.
\begin{enumerate}
    \item
          The constructed graph must have \emph{low pathwidth},
          that is, ideally the pathwidth of the constructed graph
          is the number of variables.
    \item
          We need to \emph{encode the assignments}
          for the variables of the CSP instance
          by establishing a correspondence between these values
          and the states of so-called information vertices
          of the constructed graph.
    \item
          The assignment to the variables must be \emph{consistent}
          across all constraints,
          that is, all information vertices corresponding to the same variable
          must have the same state.
    \item
          We need to embed the \emph{constraints} of the \qcsp instance
          by the relations of the \srDomSetRel instance.
\end{enumerate}

We outline the general concepts to address these points
in the following.

\subparagraph*{Low Pathwidth.}
Assume that we are given an instance of \qcspa
containing $\variableCount$ variables and $\constraintCount$ constraints.
The general structure of the graph of the \srDomSetRel
instance is a $\variableCount \times \constraintCount$ \emph{grid},
which provides reasonably low pathwidth, i.e, pathwidth $\variableCount$.
Concretely, the graph contains $\variableCount$ rows, one row for each variable,
and $\constraintCount$ columns, one for each constraint.

\subparagraph*{Encoding the Assignment.}
For each row-column pair of the grid, there is an \emph{information vertex}
which we connect to specific gadgets
that we later refer to as \emph{managers}.
The purpose of these information vertices is to encode
the assignment to the variables of the \qcspa instance.
More precisely,
each information vertex has a specific state in a $(\sigma, \rho)$-set,
depending on the number of selected neighbors
and whether it is selected itself
(although we later fix the vertices to be unselected).
Since each information vertex can have at least $\mname$ states
when fixing its selection status,
we can directly associate a state of the information vertex
to a specific value from $1$ to $\mname$ the variables can have.

\subparagraph*{Consistency of the Assignment.}
For the correctness of the reduction,
we need to ensure that the assignment for the variables
is encoded consistently across the construction.
By the above idea,
each row of the construction corresponds to one variable of the input instance.
Moreover, every such row contains multiple information vertices
which are not directly connected to each other.
The construction must guarantee that each vertex of a certain row
receives the same state
and thus, the CSP solution induced by a $(\sigma,\rho)$-set
assigns exactly one value to each variable.
It turns out that, by using the right type of additional relations,
for each fixed variable, the consistency across the row can be achieved.

\subparagraph*{Encoding the Constraints.}
As a last step, we ensure that, for every $(\sigma,\rho)$-set,
the respective states of the information vertices
correspond to a satisfying assignment to the CSP instance.
This is easily achieved by introducing a new relation
for each constraint of the CSP instance
and then adding it to specific vertices of the construction.
Combined with the previous ideas, the constraints of the CSP instance
are in direct correspondence to relations of the \srDomSetRel instance.

\subsection{Construction of the Graph}
\label{sec:lowerRel:construction}

Since our lower bound construction covers an infinite class of problems,
we construct the graph using several gadgets.
This allows us to reuse some intermediate results
from~\cite{focke_tight_2023_lb} which have been used to establish
the lower bound for \srDomSetRel when $\sigma$ and $\rho$ are finite.

We first focus on the definition of the managers.
These gadgets allow us to associate
the values of the variables of the CSP instance
with the states of the information vertices.

\begin{restatable}[{\manager{A}~\cite[Definition~4.7]{focke_tight_2023_lb}}]{definition}{defManager}
    \label{def:manager}
    Consider two sets $\sigma$ and $\rho$ that are both finite.
    For a set $A \subseteq \allStates$, an \emph{\manager{A}}
    is an infinite family $((G_\ell, \Port_\ell))_{\ell \geq 1}$ of pairs $(G_\ell, \Port_\ell)$ such that
    \begin{itemize}
        \item $G_\ell$ is a graph with relations and
        \item $\Port_\ell=\{\port_1,\dots,\port_{\ell}\} \subseteq V(G_\ell)$
              is a set of $\ell$ distinguished vertices.
    \end{itemize}
    Moreover, there is a non-negative integer $b$
    (that depends only on $\max\sigma$ and $\max\rho$)
    such that the following holds for every $\ell \geq 1$:
    \begin{itemize}
        \item
              The vertices from $V(G_\ell) \setminus \Port_\ell$
              can be partitioned into $2\ell$ vertex-disjoint sets
              $\Bl_1,\dots,\Bl_\ell$ and $\Br_1,\dots,\Br_\ell$
              (called \emph{blocks}),
              such that
              \begin{itemize}
                  \item $|\Bl_i| \leq b$ and $|\Br_i| \leq b$ for all $i \in \numb{\ell}$,
                  \item $N(\port_i) \subseteq \Bl_i \cup \Br_i$ for all $i \in \numb{\ell}$,
                  \item there are edges only between the following pairs of blocks:
                        $\Bl_i$ and $\Bl_{i+1}$,
                        $\Br_i$ and $\Br_{i+1}$, for each $i \in \numb{\ell-1}$,
                        and
                        $\Bl_\ell$ and $\Br_\ell$,
                        and
                  \item for each relation $R$ of $G_\ell$,
                        there is some $i \in \numb{\ell}$
                        such that either $\scope(R) \subseteq N \pos{\Bl_i}$
                        or $\scope(R) \subseteq N\pos{\Br_i}$.
              \end{itemize}
        \item
              Each $x \in A^\ell \subseteq \allStates^{\ell}$ is \emph{managed} in the sense that
              there is a unique $(\sigma,\rho)$-set $S_x$ of $G_\ell$
              such that for all $i\in\numb{\ell}$:
              \begin{itemize}
                  \item
                        If $x\position{i} = \sigma_s$,
                        then $\port_{i} \in S_x$.
                        Moreover, $\port_i$ has exactly $s$ neighbors in $\Bl_i\cap S_x$
                        and exactly $\max \sigma - s$ neighbors in $\Br_i\cap S_x$.
                  \item
                        If $x\position{i} = \rho_r$,
                        then $\port_{i} \notin S_x$.
                        Moreover, $\port_i$ has exactly $r$ neighbors in $\Bl_i\cap S_x$
                        and exactly $\max \rho - r$ neighbors in $\Br_i\cap S_x$.
              \end{itemize}
    \end{itemize}
    We refer to $G_\ell$ as the \manager{A} of \rank $\ell$.
\end{restatable}

For certain cases, especially when $\sigma$ and $\rho$ are finite,
the existence of such managers is already known~\cite{focke_tight_2023_lb}.

\begin{lemma}[Implication of~{\cite[Lemma~6.1]{focke_tight_2023_lb}}]
    \label{thm:rho_manager}
    For non-empty finite sets $\sigma$ and $\rho$,
    there is an \manager{R}
    with $R = \{\rho_i \mid i \in \numbZ{\max \rho}\}$.
\end{lemma}

Although these managers require finite sets,
we can use them in a black-box fashion for our construction
even when dealing with residue classes as sets.
Hence,
we simply use a finite subset of $\rho$ to obtain the managers.
We formalize this by the notion of a \emph{cut set},
which is a natural restriction of a residue class
to a finite subset of it.
We then extend this to the \emph{cut states}.
\begin{definition}[Cut Sets and States]
    For a residue class \(\tau\) modulo $\mname$,
    we set \[
        \cutset(\tau) \deff \{\min(\tau),\min(\tau) + \mname\}
        \quad\text{and}\quad \rhoStates_\cutset \deff
        \{\rho_i \mid i \in \numbZ{\max(\cutset(\rho))}\}.
        \tag*{\qedhere}
    \]
\end{definition}
For example, consider the residue class~$\tau = \{2,5,8,\dots\}$ of two modulo three.
With the above definition we get $\cutset(\tau) = \{2,5\}$.
Clearly, $\cutset(\tau)$ is finite for any residue class~$\tau$.

Before we move to the construction of the graph,
we introduce the inverse of a number.
This definition is inspired by the definition of an inverse of a state
from~\cite[Definition~3.10]{focke_tight_2023_lb}.

\begin{definition}[Inverse of a Number]
    \label{def:inverseOfNumber}
    Let $\sigma$ and $\rho$ be two residue classes
    modulo $\mname \ge 2$.

    For every number $n \in \Nat$,
    the \emph{inverse of $n$ relative to $\rho$} is defined as
    $\inverse[\rho]{n} \deff (\min \rho - n) \bmod \mname$.

    We extend this in the natural way to $\sigma$
    and denote the inverse of $n$ relative to $\sigma$ by $\inverse[\sigma]{n}$.
\end{definition}
Observe that for every $n \in \numbZ{\mname-1}$, we have
\[
    \inverse[\rho]{\inverse[\rho]{n}}
    = n
    \quad\text{and}\quad
    n + \inverse[\rho]{n} \in \rho
    .
\]

\subparagraph*{Construction of the Graph.}
Consider a \qcspa{} instance $I = (X, \CC)$ with
$\variableCount$ variables $X = \{\variable_1,\dots,\variable_\variableCount\}$
and $\constraintCount$ constraints
$\CC = \{\constraint_1,\dots,\constraint_{\constraintCount}\}$.
We construct an equivalent instance $\reductionGraph$ of \srDomSetRel
with low pathwidth.

We first require a suitable manager.
For this purpose, we use \cref{thm:rho_manager}
with the sets $\cutset(\sigma)$ and $\cutset(\rho)$
to construct an \manager{\cutStateSet} of rank $\variableCount$
(as an instance of \DomSetRel{\cutset(\sigma)}{\cutset(\rho)}).
Note that $\cutStateSet$ contains states
that are indistinguishable from other states, and in particular, $\rhoStates \subseteq \cutStateSet$.
Moreover, although this is technically not explicitly stated in \cref{thm:rho_manager}, a quick look at the proof of \cite[Lemma~6.1]{focke_tight_2023_lb} confirms that the required manager of rank $\variableCount$ can be computed in polynomial-time.

Since the constructed graph $\reductionGraph$ has a grid-like structure,
we often refer to objects by their row and column.
We use the convention that the row is written as a subscript index
while the columns are denoted by superscript indices.

Now we have everything ready to finally define the $\reductionGraph$
as follows:
\begin{itemize}
    \item
          For all $i \in \rowSet$ and all $j \in \colSet$,
          the graph $\reductionGraph$ contains a vertex $w^j_i$,
          which we refer to as \emph{information vertex}.
    \item
          For all $j \in \colSet$,
          there is a copy $M^j$ of the $\cutStateSet$-manager of rank $\variableCount$
          with $\{w^j_{i} \mid i \in \rowSet\}$ as the distinguished vertices.

          We denote by $\Bl^j_{i}$ and $\Br^j_{i}$ the vertices of
          the corresponding blocks of $M^j$.
          Moreover,
          we denote by $N_\Bl(w^j_{i})$ the neighborhood of $w^j_{i}$ in $\Bl^j_{i}$,
          and denote by $N_{\Br}(w^j_{i})$ the neighborhood of $w^j_{i}$ in $\Br^j_{i}$.
    \item
          For all $i \in \rowSet$ and all $j \in \fragment{2}{\colCount}$,
          we create a \emph{consistency relation} $\consistencyRelation^{j}_i$
          with scope $\Br^{j-1}_i \cup \Bl^{j}_i \cup \{w^{j}_i\}$
          which is defined later.

    \item
          For all $i \in \rowSet$,
          we create a consistency relation $\consistencyRelation^1_i$
          with scope $\Bl^1_i \cup \{w^1_i\}$
          which is defined later.

    \item
          For all $j \in \colSet$,
          we create a \emph{constraint relation} $\constraintRelation{j}$
          with scope
          $\bigcup_{x_i \in \scope(C_j)} B^j_{i}$
          which is defined later.
\end{itemize}

As a last step of the construction we define
the consistency relations $\consistencyRelation^j_i$
and the constraint relations $\constraintRelation{j}$.
For this we first introduce some notation.

Given a selection $S$ of vertices from $\reductionGraph$,
we define,
for all $i \in \rowSet$ and all $j \in \colSet$,
two numbers $b^j_{i}$ and $\inv b^j_{i}$.
We denote by $b^j_i \deff \abs{S \cap N_ \Bl(w^j_{i})}$
the number of selected neighbors of $w^j_{i}$ in the block $\Bl^j_{i}$,
and denote by $\inv b^j_i \deff \abs{S \cap N_{\Br}(w^j_{i})}$
the number of selected neighbors of $w^j_{i}$ in the block $\Br^j_{i}$.

It remains to elaborate on the relations, whose definition we postponed so far.
We start with the \emph{consistency relations}.
Let $S$ be a subset of the vertices of $\reductionGraph$.
For all $i \in \numb{n}$ and $j \in \numb{\colCount}$,
we denote by $S^j_i \deff S \cap \scope{(\consistencyRelation^j_i)}$
the selected vertices from the scope of $\consistencyRelation^j_i$.
Relation $\consistencyRelation^j_i$ accepts the set $S^j_i$
if and only if
\begin{itemize}
    \item
          vertex~$w_i^j$ is unselected,
    \item
          $b^{j}_i \in \numbZ{\mname-1}$,
          and
    \item
          $b^{j}_{i} = \inverse[\rho]{\inv b^{j-1}_{i}}$
          if~$j \ge 1$ (as $\inv b^0_i$ is undefined).
\end{itemize}

Finally, we define the \emph{constraint relations}
that realize the constraints of the \qcspa instance.
For each $j \in \colSet$,
denote by $\scope(\constraint_j)
    = (\variable_{\tupleindex{1}},\dots,\variable_{\tupleindex{q}})$
the variables appearing in this constraint.
Then, the set $S \cap \scope{(\constraintRelation{j})}$
is accepted by $\constraintRelation{j}$ if and only if
\[
    (b^j_{\tupleindex{1}}+1,
    \dots,b^j_{\tupleindex{q}}+1)
    \in \acc(\constraint_j)
    .
\]
That is, the states of the information vertices
corresponding to the variables of the constraint $\constraint_j$
must represent a satisfying assignment of $\constraint_j$.
Note that the variables takes values from $\numb{\mname}$
while the states $b_i^j$ take values from $\numbZ{\mname-1}$.

This concludes the construction of the instance $\reductionGraph$.

\subsection{Properties of the Constructed Graph}
\label{sec:lowerRel:properties}

In the following part we first prove the correctness of the reduction
and then provide the formal bound on the pathwidth.

We show the two directions of the correctness independently.
\begin{lemma}
    \label{lem:lowerRel:correctness1}
    If $I$ is a satisfiable instance of \qcspa,
    then the \srDomSetRel instance $\reductionGraph$ has a solution.
\end{lemma}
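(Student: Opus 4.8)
The plan is to convert a satisfying assignment of $I$ directly into a $(\sigma,\rho)$-set $S$ of $\reductionGraph$ (i.e.\ a solution in the sense of \cref{def:srDomSetRel}). First I would fix a satisfying assignment $\satAssignment\from X\to\numb{\mname}$ of $I$ and, for every variable $\variable_i$, put $a_i\deff\satAssignment(\variable_i)-1$, which lies in $\numbZ{\mname-1}$. The idea is to make every information vertex in row $i$ unselected and to let it receive exactly $a_i$ selected neighbors from its left block, so that its state encodes $\satAssignment(\variable_i)$. Concretely, for each column $j\in\colSet$ I would feed the manager $M^j$ the state-vector $x^j\in\cutStateSet^{\variableCount}$ that assigns $\rho_{a_i}$ to the distinguished vertex $w^j_i$ for every $i\in\rowSet$; this is legitimate since $a_i\le\mname-1<\max(\cutset(\rho))$, so $\rho_{a_i}\in\rhoStates\subseteq\cutStateSet$. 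Invoking the manager property of \cref{thm:rho_manager} (see \cref{def:manager}) then yields a $(\cutset(\sigma),\cutset(\rho))$-set $S_{x^j}$ of $M^j$ in which $w^j_i$ is unselected and has exactly $a_i$ selected neighbors in $N_{\Bl}(w^j_i)$ and exactly $\max(\cutset(\rho))-a_i$ selected neighbors in $N_{\Br}(w^j_i)$. Finally I would take $S\deff\bigcup_{j\in\colSet}S_{x^j}$.

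The next step is to verify that $S$ is a $(\sigma,\rho)$-set of $\reductionGraph$. The structural fact to record first is that the copies $M^1,\dots,M^{\constraintCount}$ are pairwise vertex-disjoint and that relations contribute no edges, so the underlying graph of $\reductionGraph$ is the disjoint union of the underlying graphs of the $M^j$, the union defining $S$ is disjoint, and the neighborhood of every vertex lies inside a single copy $M^j$. With this, a non-information vertex $v$ of $M^j$ has $|N(v)\cap S|=|N(v)\cap S_{x^j}|$, which lies in $\cutset(\sigma)\subseteq\sigma$ if $v\in S$ and in $\cutset(\rho)\subseteq\rho$ otherwise (using that $\cutset(\tau)\subseteq\tau$ for every periodic $\tau$). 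An information vertex $w^j_i$ is unselected and satisfies $|N(w^j_i)\cap S|=a_i+(\max(\cutset(\rho))-a_i)=\max(\cutset(\rho))=\min\rho+\mname$, which lies in $\rho$ by periodicity. Then I would check the relations: every relation internal to some $M^j$ has scope inside $M^j$, on which $S$ agrees with $S_{x^j}$, hence is satisfied; $\consistencyRelation^0_i$ holds since $w^1_i$ is unselected and $b^1_i=a_i\in\numbZ{\mname-1}$; and the constraint relation $\constraintRelation{j}$ with $\scope(\constraint_j)=(\variable_{\tupleindex1},\dots,\variable_{\tupleindex q})$ holds because $(b^j_{\tupleindex1}+1,\dots,b^j_{\tupleindex q}+1)=(\satAssignment(\variable_{\tupleindex1}),\dots,\satAssignment(\variable_{\tupleindex q}))\in\acc(\constraint_j)$.

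The one step I expect to need actual care is the consistency relation $\consistencyRelation^j_i$ for $j\in\numb{\constraintCount-1}$. Beyond the easy conditions ($w^j_i,w^{j+1}_i$ unselected and $b^{j+1}_i=a_i\in\numbZ{\mname-1}$), one must check $b^{j+1}_i=\inverse[\rho]{\inv b^j_i}$. Here the manager gives $\inv b^j_i=\max(\cutset(\rho))-a_i=\min\rho+\mname-a_i$, so by \cref{def:inverseOfNumber} the requirement becomes $(\min\rho-(\min\rho+\mname-a_i))\bmod\mname=a_i$, which holds because $0\le a_i\le\mname-1$. The subtlety is exactly this interplay between the managers' guarantees — phrased via the finite cut set, i.e.\ with $\max(\cutset(\rho))=\min\rho+\mname$ — and the relation's guarantee, phrased modulo $\mname$ via the inverse relative to $\rho$; once that alignment is settled, the remaining verification is routine and lets me conclude that $\reductionGraph$ has a solution.
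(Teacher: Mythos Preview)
Your proposal is correct and follows essentially the same approach as the paper: use the satisfying assignment to pick the state $\rho_{\pi(x_i)-1}$ for each information vertex, invoke the manager to obtain the selection, and then verify the $(\sigma,\rho)$-constraints via $\cutset(\tau)\subseteq\tau$ and the relations by the same modular computation $\inverse[\rho]{\min\rho+\mname-a_i}\equiv a_i$. You are somewhat more explicit than the paper (separating out the information vertices and the managers' internal relations), but the argument is the same.
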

\begin{proof}
    Recall that $I = (X, \CC)$ is a \qcspa instance with $\variableCount$
    variables $X = \{\variable_1,\dots,\variable_\variableCount\}$
    and $\constraintCount$ constraints
    $\CC = \{\constraint_1,\dots,\constraint_{\constraintCount}\}$.
    Consider an assignment $\satAssignment \from X \to \numb{\mname}$
    that satisfies all constraints of $I$.

    For each $j \in \colSet$, we select the vertices of the \manager{\cutStateSet} $M^j$
    such that $b^j_{i} = \satAssignment(\variable_i)-1$ for all $i \in \rowSet$.
    We select no other vertices.
    Such a solution exists by the definition of a manager in \cref{def:manager}
    as $\rho_{\satAssignment(\variable_i)-1} \in \cutStateSet$.

    Let $S$ denote this set of selected vertices.
    Because of the definition of managers,
    this set $S$ is a solution for $\reductionGraph$
    when considering it as a \DomSet{\cutset(\sigma)}{\cutset(\rho)} instance
    where we ignore all relations.
    Since $\cutset(\sigma)$ and $\cutset(\rho)$
    are subsets of $\sigma$ and $\rho$,
    the set $S$ is actually a solution for the \srDomSet instance.

    Hence, all that remains is to argue that
    $S$ also satisfies all relations of $\reductionGraph$.
    Per construction, all information vertices are unselected.
    As the proof follows directly for $\consistencyRelation^1_i$
    where $i \in \rowSet$,
    we consider a relation $\consistencyRelation^j_{i}$
    for an arbitrary $i \in \rowSet$ and $j \in \fragment{2}{\colCount}$.
    We now prove that $b^{j}_{i} = \inverse[\rho]{\inv b^{j-1}_{i}}$.

    It directly follows from the definition of the managers
    and the choice of $S$ (which implies $b^{j-1}_{i} = b^{j}_{i}$)
    that $\inv b^{j-1}_{i} = (\min\rho + \mname) - b^{j-1}_{i}
        = \min\rho + \mname - b^{j}_{i}$.
    We furthermore know that $b^{j}_{i} \leq \mname - 1$ as $b^{j}_{i} = \satAssignment(\variable_i) -1$, and $\satAssignment(\variable_i) \in \numb{\mname}$.
    This in turn implies that $b^{j}_{i} \mod \mname = b^{j}_{i}$.
    Overall, we have
    \[
        \inverse[\rho]{\inv b^{j-1}_{i}}
        = \min\rho - (\min\rho + \mname - b^{j}_{i}) \mod \mname
        = b^{j}_{i} \mod \mname = b^{j}_{i}
        .
    \]

    Finally, we need to confirm that the constraint relations
    $\constraintRelation{j}$ are satisfied for all $j \in \colSet$.
    This directly follows from our selection of the vertices
    such that $b^j_i = \satAssignment(\variable_i) -1$
    for all $i \in \rowSet$ and $j \in \colSet$.
    Hence, all relations accept
    because $\pi$ is an assignment satisfying all constraints of $I$.
\end{proof}

As a next step we prove the reverse direction of the correctness.

\begin{lemma}
    \label{lem:lowerRel:correctness2}
    If the \srDomSetRel instance $\reductionGraph$ has a solution,
    then the \qcspa instance $I$ is satisfiable.
\end{lemma}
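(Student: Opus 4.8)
The plan is to prove the reverse direction by extracting, from a solution $S$ of the \srDomSetRel instance $\reductionGraph$, a satisfying assignment $\satAssignment$ for the \qcspa instance $I$. First I would observe that every relational constraint in $\reductionGraph$---in particular every consistency relation $\consistencyRelation^j_i$ and $\consistencyRelation^0_i$---forces each information vertex $w^j_i$ to be \emph{unselected}. Thus for each $i \in \rowSet$ and $j \in \colSet$, the vertex $w^j_i \notin S$, and since $w^j_i$ has neighbors only in the two blocks $\Bl^j_i$ and $\Br^j_i$ of the manager $M^j$, the number of selected neighbors of $w^j_i$ is exactly $b^j_i + \inv b^j_i$. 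Because $S$ restricted to each manager $M^j$ must be a $(\sigma,\rho)$-set of $M^j$ (the relations of $M^j$ are part of $\reductionGraph$), the definition of the \manager{\cutStateSet} tells us that the state of each $w^j_i$ is one of the $\rho$-states in $\cutStateSet$; in particular $b^j_i \in \numbZ{\mname-1}$ (this is also explicitly enforced by the consistency relations), and $b^j_i + \inv b^j_i \in \cutset(\rho) \subseteq \rho$, so $\inv b^j_i \equiv_\mname \min\rho - b^j_i$.

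Next I would establish \emph{consistency across a row}: for every fixed $i$, all the values $b^1_i, b^2_i, \dots, b^\colCount_i$ coincide. The argument mirrors the high-level sketch in the paper. The consistency relation $\consistencyRelation^j_i$ being satisfied gives $b^{j+1}_i = \inverse[\rho]{\inv b^j_i} = (\min\rho - \inv b^j_i) \bmod \mname$. Combining this with $\inv b^j_i \equiv_\mname \min\rho - b^j_i$ from the previous paragraph yields $b^{j+1}_i \equiv_\mname \min\rho - (\min\rho - b^j_i) = b^j_i \bmod \mname$, and since both $b^j_i$ and $b^{j+1}_i$ lie in $\numbZ{\mname-1}$ we get the genuine equality $b^j_i = b^{j+1}_i$. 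Inducting on $j$ (with the base case $j=1$ handled trivially, as $\consistencyRelation^0_i$ only enforces $b^1_i \in \numbZ{\mname-1}$) shows all $b^j_i$ in row $i$ are equal; call this common value $\beta_i \in \numbZ{\mname-1}$.

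Then I would \emph{define the assignment} $\satAssignment \from X \to \numb{\mname}$ by $\satAssignment(\variable_i) \deff \beta_i + 1$, which is well-defined because $\beta_i \in \numbZ{\mname-1}$ so $\beta_i + 1 \in \numb{\mname}$. Finally I would verify that $\satAssignment$ satisfies every constraint $\constraint_j$ of $I$: since $S$ satisfies the constraint relation $\constraintRelation{j}$, whose scope includes the information vertices $\{w^j_{\tupleindex{p}}\}$ and their left-block neighbors, we have $(b^j_{\tupleindex{1}}+1, \dots, b^j_{\tupleindex{q}}+1) \in \acc(\constraint_j)$; by row-consistency $b^j_{\tupleindex{p}} = \beta_{\tupleindex{p}}$, so this tuple equals $(\satAssignment(\variable_{\tupleindex{1}}), \dots, \satAssignment(\variable_{\tupleindex{q}}))$, and hence $\constraint_j$ is satisfied. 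As every constraint is satisfied, $I$ is satisfiable, completing the proof.

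The main obstacle I anticipate is the careful bookkeeping around \emph{which states are actually attainable in the manager}: one must be sure that a solution of the full \srDomSetRel instance, when restricted to a copy $M^j$, is forced to be one of the ``managed'' $(\sigma,\rho)$-sets of $M^j$ with an information-vertex state lying in $\cutStateSet$---so that the arithmetic $b^j_i + \inv b^j_i \in \rho$ and $b^j_i \le \mname - 1$ is legitimate---rather than some spurious selection pattern. This relies on the uniqueness clause in \cref{def:manager} together with the fact that $\cutset(\sigma), \cutset(\rho)$ were chosen as finite subsets of $\sigma, \rho$, so every $(\sigma,\rho)$-set of $M^j$ is in particular a $(\cutset(\sigma),\cutset(\rho))$-set and the manager's guarantees apply. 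Once this point is pinned down, the rest is the modular-arithmetic chase above, which is routine.
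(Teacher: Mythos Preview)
Your overall approach is the same as the paper's, and the modular-arithmetic chase in your second and third paragraphs matches the paper's proof essentially verbatim. However, your appeals to the manager's structure in the first and last paragraphs are both unnecessary and incorrect. The claim ``every $(\sigma,\rho)$-set of $M^j$ is in particular a $(\cutset(\sigma),\cutset(\rho))$-set'' goes the wrong way: since $\cutset(\sigma) \subsetneq \sigma$ and $\cutset(\rho) \subsetneq \rho$, a vertex whose selected-neighbor count lies in $\sigma$ need not have a count in $\cutset(\sigma)$. Moreover, the manager definition (\cref{def:manager}) only asserts that each string in $\cutStateSet^\ell$ is witnessed by a unique $(\cutset(\sigma),\cutset(\rho))$-set; it does \emph{not} say that every $(\sigma,\rho)$-set of the manager arises this way, so the uniqueness clause gives you nothing in this direction.

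Fortunately, none of this is needed. The paper's argument is simpler: once the consistency relations force $w^j_i \notin S$ and $b^j_i, b^{j+1}_i \in \numbZ{\mname-1}$ (which you also note parenthetically), the mere fact that $S$ is a $(\sigma,\rho)$-set of $\reductionGraph$ and that $N(w^j_i) \subseteq \Bl^j_i \cup \Br^j_i$ gives $b^j_i + \inv b^j_i \in \rho$ directly---the periodic $\rho$, not $\cutset(\rho)$---and hence $b^j_i + \inv b^j_i \equiv_\mname \min\rho$. That is all you need; drop the appeals to the manager's managed-set guarantee and your proof is correct.
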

\begin{proof}
    Let $S$ be a $(\sigma,\rho)$-set for $\reductionGraph$
    that also satisfies all relations.
    We start by proving that
    \begin{equation}
        b^{j-1}_{i} = b^{j}_{i}
        \quad
        \text{ for all \(i \in \rowSet\) and \(j \in \fragment{2}{\colCount}\)}.
        \label{eq:lower:monotonicity}
    \end{equation}
    For this fix an arbitrary pair
    $i \in \rowSet$ and $j \in \fragment{2}{\colCount}$.
    The relation $\consistencyRelation_i^j$ ensures that
    $b^{j}_{i} = \inverse[\rho]{\inv b^{j-1}_i}$.
    Moreover, the relation guarantees that $w^j_{i}$ is not selected.

    The definition of $\consistencyRelation^j_i$ implies
    that $b^{j}_{i} = \inverse[\rho]{\inv b^{j-1}_{i}}
        = (\min\rho - \inv b^{j-1}_{i}) \bmod \mname$
    and moreover, since $w^{j}_i$ has a neighbor count in $\rho$,
    we must also have $b^{j-1}_{i} + \inv b^{j-1}_{i} \equiv_{\mname} \min\rho$.
    By rearranging the terms we get
    $\inv b^{j-1}_{i} \equiv_{\mname} \min\rho - b^{j-1}_{i}$,
    and when combining with the previous result, it follows that
    $b^{j}_{i} \equiv_{\mname} \min\rho - (\min\rho - b^{j-1}_{i})
        \equiv_{\mname} b^{j-1}_{i}$.
    Since $S$ is a solution, for all $k \in \numb{\colCount}$,
    the relation $\consistencyRelation_i^k$ ensures that
    $b^k_i \in \numbZ{\mname-1}$.
    Using $b^{j-1}_{i} \leq {\mname - 1}$ and $b^{j}_{i} \leq {\mname - 1}$,
    we conclude that $b^{j-1}_{i} = b^{j}_{i}$.
    Hence, the states of all information vertices for a single variable are the same.

    We can now define a satisfying assignment for $I$.
    We define the variable assignment
    $\satAssignment \from X \to \numb{\mname}$ by setting
    $\satAssignment(\variable_i) \deff b^{1}_{i}+1$
    for all $i \in \rowSet$.

    To see that $\pi$ is an assignment satisfying all constraints of $I$,
    consider an arbitrary constraint $\constraint_j$ of $I$.
    The relation $\constraintRelation{j}$ then ensures that
    $(b^{j}_{\lambda_1}+1, \dots,
        b^{j}_{\lambda_q}+1) \in \acc(\constraint_j)$,
    where $\scope(C_j) = (\variable_{\tupleindex{1}},\dots,\variable_{\tupleindex{q}})$ are the variables of the constraint.
    Hence, the definition of $\satAssignment$
    implies that $\constraint_j$ is satisfied.
\end{proof}

This concludes the proof of the correctness of the reduction.
It remains to analyze the size and the pathwidth
such that the lower bound of \qcspa transfers to \srDomSetRel.
However, for this we first have to formally introduce
the notion of pathwidth of a graph with relations.

\begin{definition}[{Width Measures for Graphs with Relations;~\cite[Definition~4.4]{focke_tight_2023_lb}}]
    \label{def:graphWithRelationsWidthMeasures}
    Let $G=(V,E,\CC)$ be a graph with relations.
    Let $\hat G$ be the graph we obtain from $(V,E)$
    when, for all $C \in \CC$,
    we additionally introduce a complete set of edges on the scope $\scope(C)$.
    The \emph{treewidth of a graph with relations} $G$,
    is the treewidth of the graph $\hat G$.
    Analogously, we define tree decompositions, path decompositions, and pathwidth of $G$
    as the corresponding concepts in the graph $\hat G$.
\end{definition}

Next, we show that after this transformation,
the pathwidth of the constructed graph is not too large.

\begin{lemma}
    \label{thm:reduced_instance_size_pathwidth}
    There is a function $f$
    depending only on $\sigma$ and $\rho$
    such that $\reductionGraph$ has size at most
    $\variableCount \cdot \constraintCount \cdot f(q)$,
    and arity at most $f(q)$.
    Moreover, a path decomposition of $\reductionGraph$ of width at most $\variableCount + f(q)$
    can be computed in polynomial time.
\end{lemma}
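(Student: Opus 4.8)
The plan is to establish the three bounds more or less independently, the pathwidth estimate being the technical core. Write $b$ for the block-size constant of the \manager{\cutStateSet} obtained from \cref{thm:rho_manager}; by \cref{def:manager} it depends only on $\max\cutset(\sigma)$ and $\max\cutset(\rho)$, hence only on $\sigma$ and $\rho$. Recall further that the rank-$\variableCount$ manager $M^j$ has $\variableCount$ distinguished vertices together with $2\variableCount$ blocks of size at most $b$, that every internal relation of $M^j$ has scope inside some $N\pos{\Bl^j_i}$ or $N\pos{\Br^j_i}$ (hence on $\Oh(b)$ vertices, so of constant arity), and that the construction behind \cref{thm:rho_manager} produces $M^j$ with $\Oh(\variableCount)$ vertices and $\Oh(\variableCount)$ internal relations.

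For \emph{size and arity}: $\reductionGraph$ has $\variableCount\constraintCount$ information vertices plus $\Oh(b\variableCount\constraintCount)$ block vertices from the $\constraintCount$ managers; its relations are the $\Oh(\variableCount\constraintCount)$ manager-internal relations of constant size, the $\variableCount\constraintCount$ consistency relations (including the $\consistencyRelation^0_i$) with scopes of size at most $2b+2$, and the $\constraintCount$ constraint relations, where $\constraintRelation{j}$ has scope of size at most $q+qb$. Summing, and using $\variableCount\ge 1$, the total size is at most $\variableCount\constraintCount\cdot f(q)$ with $f(q)=\Oh_{\sigma,\rho}(1)+2^{q(1+b)}$, and the maximum arity of a relation is at most $\max\{\Oh_{\sigma,\rho}(1),\,2b+2,\,q(1+b)\}\le f(q)$.

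For \emph{pathwidth}, recall that in the associated graph $\widehat{\reductionGraph}$ every relation becomes a clique on its scope, so that whole scope must sit in a common bag. The structural fact to exploit is that each manager $M^j$ together with the relations $\consistencyRelation^j_i$ linking it to $M^{j+1}$ is ladder-like: the left blocks form the path $\Bl^j_1-\cdots-\Bl^j_\variableCount$, the right blocks the path $\Br^j_\variableCount-\cdots-\Br^j_1$, these are joined by the single edge $\Bl^j_\variableCount\Br^j_\variableCount$, each $w^j_i$ is adjacent only to $\Bl^j_i$ and $\Br^j_i$, and $\consistencyRelation^j_i$ couples only $\Br^j_i,\Bl^{j+1}_i,w^j_i,w^{j+1}_i$. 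I would build the path decomposition by a column sweep whose key feature is to \emph{interleave} the bottom-up traversal of the right blocks of $M^j$ with the bottom-up introduction of the left blocks of $M^{j+1}$: at row $i$ one introduces $\Br^j_i$ and $\Bl^{j+1}_i$ (and $w^{j+1}_i$ the first time), realizes the clique of $\consistencyRelation^j_i$ while $\Br^j_i,\Bl^{j+1}_i,w^j_i,w^{j+1}_i$ are all present, and then drops $\Br^j_i$ and $w^j_i$. One additionally keeps the block $\Bl_\variableCount$ of the current manager alive until its turn-edge $\Bl_\variableCount\Br_\variableCount$ is realized, and, for each column $j$, keeps the at most $q$ left blocks occurring in $\scope(\constraintRelation{j})$ alive until $\constraintRelation{j}$ is discharged, which happens right after all left blocks of $M^j$ have been introduced. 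Along such a sweep every bag contains the $\variableCount$ information vertices of the current column (during a transition a mix of $w^j_i$'s and $w^{j+1}_i$'s of total size $\variableCount+\Oh(1)$), a constant number of active blocks of size at most $b$, one turn block, and at most $q$ blocks for the current constraint relation, so its size is at most $\variableCount+\Oh_{\sigma,\rho}(1)+qb$; enlarging $f$ if necessary gives pathwidth at most $\variableCount+f(q)$. What is left is to write out the explicit bag sequence and check the three tree-decomposition axioms, which is routine since every edge and every relation scope is confined to $\Oh(1)$ consecutive steps of the sweep.

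The main obstacle is exactly this interleaving. A decomposition that processes $M^j$ completely before beginning $M^{j+1}$ would have to keep all $\variableCount$ right blocks of $M^j$ alive until the left blocks of $M^{j+1}$ arrive, so that each clique $\consistencyRelation^j_i$ can be realized; that costs $\Theta(b\variableCount)$ and yields only pathwidth $\Oh_{\sigma,\rho}(\variableCount)$ rather than the required $\variableCount+\Oh_{\sigma,\rho}(1)$. Getting the interleaved order right, and simultaneously reconciling it with the block-chains inside the managers, the turn-edges, the constraint relations, and a bounded presence interval for every information vertex, is the step that demands the most care.
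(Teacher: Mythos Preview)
Your proposal is correct and essentially matches the paper's proof: the paper derives the size and arity bounds from the block-size constant $b$ in the same way, and it handles the pathwidth via an explicit node-search strategy whose searcher sets $\CS^j_i$ implement precisely the interleaved column sweep you describe (a mix of $\variableCount+\Oh(1)$ information vertices from two adjacent columns, $\Oh(1)$ active blocks, the turn blocks $\Bl^{j-1}_\variableCount,\Br^{j-1}_\variableCount$ of the previous manager, and the full scope of $\constraintRelation{j}$). The only cosmetic difference is that the paper writes out the bag contents and verifies the separator property explicitly rather than arguing informally.
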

\begin{proof}
    We first elaborate on the number of vertices of $\reductionGraph$.
    Graph $\reductionGraph$ contains $\colCount$ copies
    of an \manager{\cutStateSet} of rank $\rowCount$.
    By \cref{def:manager}, the size of each block of an \manager{\cutStateSet}
    is at most $b$, for some constant $b$ depending
    only on $\cutset(\sigma)$ and $\cutset(\rho)$.
    Since $\cutset(\sigma)$ and $\cutset(\rho)$ only depend on $\sigma$ and $\rho$,
    it is evident that $b$ only depends on $\sigma$ and $\rho$.
    Since each manager consists of exactly $2 \rowCount$ blocks
    as well as $\rowCount$ information vertices,
    we see that $\reductionGraph$ consists of
    at most
    $\colCount \cdot (2 \rowCount \cdot b + \rowCount)$ vertices.
    Observe that
    \begin{align*}
        \colCount \cdot (2 \rowCount \cdot b + \rowCount) = \rowCount \cdot \constraintCount \cdot c_\textsf{v}
    \end{align*}
    for an appropriately chosen constant $c_\textsf{v}$
    only depending on $\sigma$ and $\rho$.

    For the arity of the relations,
    we notice that each consistency relations has arity at most $2 \cdot (b + 1)$,
    each constraint relation has arity at most $q \cdot (b + 1)$,
    and the arity of each relation stemming from a manager is in $\Oh(b)$.
    Hence, the maximum arity of each relation is bounded by
    \(\Oh(q \cdot b).\)

    Recall from \cref{def:graphWithRelations},
    that the size of a graph with relations
    is defined as the number of vertices plus the size of each relation
    (which might be exponential in the arity).
    The graph $\reductionGraph$
    contains exactly $\rowCount \cdot \colCount$ consistency relations
    and $\colCount$ constraint relations
    plus the relations of the managers
    each of which has at most~$n \cdot 2^{\Oh(b)}$ relations
    of arity at most $\Oh(b)$.

    Combining our knowledge of the number of vertices,
    with the maximal arity of any relation, and the number of relations,
    we can conclude that the size of $\reductionGraph$ is bounded by
    \begin{align*}
        \variableCount \cdot \constraintCount \cdot c_\textsf{v}
        + \variableCount \cdot \constraintCount \cdot 2^{\Oh(q \cdot b)}
        = \variableCount \cdot \constraintCount \cdot f_1(q,\sigma, \rho),
    \end{align*}
    for an appropriately chosen function $f_1(q,\sigma, \rho)$.

    It remains to bound the pathwidth of $\reductionGraph$.
    We do this by the standard approach of providing a node search strategy
    (see, for example, \cite[Section~7.5]{cyganParameterizedAlgorithms} or \cite{FominT08_annotated_bibliography_graph_searching}).
    From \cref{def:graphWithRelationsWidthMeasures},
    we know that the pathwidth of a graph $G$ with relations
    is defined as the pathwidth of the graph
    we obtain when making the vertices in the scope of each relation a clique.
    Let $\hat\reductionGraph$ be the graph obtained from $\reductionGraph$
    by this modification (while keeping all indexed vertices/sets the same).

    The graph is cleaned in $\colCount + 1$ stages,
    where each stage consists of $\rowCount$ rounds.
    Intuitively, each stage is responsible for cleaning the left side
    of one column of the construction,
    and each round for cleaning a block of the column.

    For each round, we list the vertices on which searchers are placed.
    This makes it clear that one can go from one stage to the next
    without recontaminating already cleaned parts
    and without the use of additional searchers.
    For notational convenience, we define
    \begin{itemize}
        \item
              $w^j_i$ as a dummy vertex that is not part of the graph $\hat\reductionGraph$ whenever $i \not \in \rowSet$ or $j \not \in \colSet$,
        \item
              $B^{j}_{i}$ and $\inv B^j_i$ to be the empty set whenever $i \not \in \rowSet$ or $j \not \in \colSet$,
        \item
              $\scope(\constraintRelation{j})$ to be the empty set when $j \not \in \colSet$.
    \end{itemize}

    Let $\CS^j_{i}$ denote the set of vertices on which searchers are placed
    in round $i$ of stage $j$.
    We define this set as
    \begin{align*}
        \CS^j_{i}
        =\  & \{w^{j-1}_{x} \mid i \leq x \leq n\}
        \cup \{w^{j}_{x} \mid 1 \leq x \leq i+1\}                   \\
            & \cup N\pos{\Br^{j-1}_{i}} \cup N\pos{\Bl^{j}_{i}}
        \cup N\pos{\Br^{j-1}_{i+1}} \cup N\pos{\Bl^{j}_{i+1}}
        \\
            & \cup \Br^{j-1}_{\rowCount} \cup \Bl^{j-1}_{\rowCount} \\
            & \cup \scope{(\constraintRelation{j})}
    \end{align*}
    where we use the closed neighborhood for the block
    to cover all possible relations appearing in the managers.

    First observe that every vertex and both endpoints of every edge of the graph
    are contained in some set $\CS_i^j$.
    It remains to argue that the graph does not get recontaminated.
    Consider the intersection of the vertices
    from two consecutive rounds of the same stage,
    that is,
    \begin{align*}
        \CS_i^j \cap \CS_{i+1}^j
        = \,&
        \{w^{j-1}_{x} \mid i+1 \leq x \leq n\}
        \cup \{w^{j}_{x} \mid 1 \leq x \leq i+1\}
        \\&
        \cup N\pos{\Br^{j-1}_{i+1}}
        \cup N\pos{\Bl^{j}_{i+1}}
        \cup N\pos{\Br^{j-1}_{\rowCount}}
        \cup N\pos{\Bl^{j-1}_{\rowCount}}
        \cup \scope{(\constraintRelation{j})}
        .
    \end{align*}
    As these vertices form a separator of the graph,
    the cleaned part of the graph does not get recontaminated.

    When moving from one stage to the next one,
    we can use the same technique by observing that
    \begin{align*}
        \CS_\variableCount^j \cap \CS_{1}^{j+1}
        =
        \{w^{j}_{x} \mid 1 \leq x \leq \variableCount\}
        \cup \Bl^{j}_{\variableCount}
    \end{align*}
    separates the graph.
    We conclude that the node search number of
    $\hat \reductionGraph$ is at most
    \begin{align*}
        \max_{\substack{i \in \fragment{1}{\rowCount}, \\j \in \fragment{1}{\colCount + 1}}}
        \Big|\CS^j_{i} \Big|
        = \variableCount + 2 + 10b + q \cdot (b + 1)
        ,
    \end{align*}
    which means that the pathwidth of $\hat \reductionGraph$
    and thus, of $\reductionGraph$ is at most
    \begin{align*}
        \variableCount + f_2(q, \sigma, \rho),
    \end{align*}
    for an appropriately chosen function $f_2(q, \sigma, \rho)$.

    Note that the search strategy directly corresponds to a path decomposition.
    Hence,
    a corresponding decomposition can be computed in polynomial time.
    This concludes the proof by choosing the function $f$
    from the statement as the maximum of $f_1$ and $f_2$.
\end{proof}

\subsection{Combining the Results}
\label{sec:lowerRel:finalizing}

As a last ingredient to our proof of \cref{thm:lower_bound_relations},
we need the SETH-based lower bound for \qcspa
from \cref{thm:qcsp_hardness}.

\qcspHardness*

Finally, we are ready to conclude the proof of the
lower bound for \srDomSetRel from \cref{thm:lower_bound_relations}.

\lowerBoundsRelations
\begin{proof}
    We assume for contradiction's sake,
    that, for some $\epsilon > 0$, there exists an algorithm
    that can solve \srDomSetRel in time
    $(\mname - \varepsilon)^\pw \cdot \abs{G}^{\Oh(1)}$
    when the input contains a graph $G$
    and a path decomposition of $G$ of width $\pw$.

    Let $q$ be the arity from \cref{thm:qcsp_hardness}
    such that there is no algorithm that can solve \qcspa
    in time $(\mname - \varepsilon)^n \cdot (n+\ell)^{\Oh(1)}$
    for an instance with $n$ variables and $\ell$ constraints.

    Given an arbitrary \qcspa instance $I$ as input,
    let $\reductionGraph$ denote the corresponding \srDomSetRel instance
    from \cref{sec:lowerRel:construction}.
    We use \cref{thm:reduced_instance_size_pathwidth}
    to compute a corresponding path decomposition of width $n + f(q)$
    in polynomial time.
    Then, we run the hypothetical algorithm for \srDomSetRel on this instance
    and return the output as the output of the \qcspa instance.

    Since, by \cref{lem:lowerRel:correctness1,lem:lowerRel:correctness2},
    the instance $\reductionGraph$ has a solution if and only if $I$ is satisfiable,
    this algorithm correctly decides if the \qcspa instance $I$ is satisfiable.

    It remains to analyze the running time of the algorithm.
    The construction of $\reductionGraph$ (which bounds the size) and its path decomposition takes time
    polynomial in the size of $I$ (and thus, also in $\reductionGraph$).
    Hence, the final algorithm runs in time
    \[
        (\mname -\varepsilon)^{n + f(q)} \cdot (n+\ell)^{\Oh(1)}
        = (\mname - \varepsilon)^n \cdot (n+\ell)^{\Oh(1)}
    \]
    since $q$ is a constant that only depends on $\sigma$ and $\rho$,
    which are fixed sets and thus, not part of the input.
    Therefore, this directly contradicts SETH by \cref{thm:qcsp_hardness}
    and concludes the proof.
\end{proof}

\section{Realizing Relations}
\label{sec:relations}

\Cref{sec:lowerRel} covers the proof of the conditional lower bound
for \srDomSetRel based on a reduction from \qcspa,
see \cref{thm:lower_bound_relations} for the precise statement.
From a higher level perspective, this reduction just considers
how the values for the variables are chosen
and encodes this choice by a graph problem;
the constraints are hardly changed.
More concretely, the constraints of the CSP instance
are not (yet) encoded by graph gadgets,
but the reduction just transformed them into relations
of the resulting \srDomSetRel instance.
Therefore, our next step is to remove these relations
or rather replace them by appropriate gadgets.
Recall that the reduction from \cref{thm:lower_bound_relations}
also introduces relations that do not correspond to constraints
of the CSP instance due to technical reasons.
In the following we design a reduction which replaces all relations
that appear in the given \srDomSetRel instance by suitable gadgets.

Formally, this section is dedicated entirely to the proof of
\cref{lem:relations:reductionToRemoveRelations}.

\reductionToRemoveRelations*

Similar to the known lower bound for general classes of problems
(especially the bound for \srDomSet
when $\sigma$ and $\rho$ are finite or simple cofinite
from~\cite[Section~8]{focke_tight_2023_lb}
but also the constructions in \cite{CurticapeanM16,MarxSS21,MarxSS22}),
we divide the reduction into smaller steps.
As a first step, we replace the general relations,
which might be very complex, by relations that are simple and easy to define.
We mainly base our construction on the existence of \HWeq{1}
relations and variants there of which we formally introduce as follows.
\begin{definition}[Relations]
  \label{def:usefulRelations}
  Consider an integer $k \ge 1$
  and an arbitrary non-empty set $\tau$ of non-negative integers.
  We define \HWin[k]{\tau} as the relation of arity $k$ that only accepts
  exactly $t$ selected vertices in the scope if and only if $t \in \tau$.
  Otherwise, the relation does not accept.

  To simplify notation, we write \HWeq[k]{1} for the \HWin[k]{\{1\}}.
  Moreover, we denote by \HWin{\tau} (or similarly for the other relations)
  to the set of the relations \HWin[k]{\tau}
  for every arity $t$.
  \qedhere
\end{definition}

As a first step, we make use of~\cite[Corollary~8.8]{focke_tight_2023_lb},
which holds for arbitrary, non-empty sets,
to replace the arbitrary relations by gadgets using only \HWeq{1} as relations.
We denote by \srDomSetRel[\HWeq{1}]
the variant of \srDomSetRel where all relations are \HWeq{1} relations.

\begin{lemma}[{\cite[Corollary~8.8]{focke_tight_2023_lb}}]
  \label{lem:relations:reductionArbitraryToHWone}
  Let $\sigma,\rho$ denote non-empty sets with $\rho \neq \{ 0 \}$.
  For all constants~$d$,
  there is a polynomial-time, parsimonious reduction from \srDomSetRel
  on instances of arity at most $d$ given with a path decomposition of width $p$
  to \srDomSetRel[{\HWset{1}}]
  on instances of arity at most $2^d+1$ given with a path decomposition of width $p + \Oh(1)$.
\end{lemma}

With the reduction from \cref{lem:relations:reductionArbitraryToHWone},
it remains to design gadgets realizing \HWeq{1} relations.
Intuitively \emph{a gadget $F$ realizes some relation $R$}
if removing the relation $R$ and inserting the gadget $F$
does not introduce any new solutions but also does not remove solutions
when projecting onto the remaining graph.
We formalize this intuition in the following definition.
\begin{definition}[{Realization of a Relation;~\cite[Definition~8.2]{focke_tight_2023_lb}}]
  \label{def:realization}
  For a set of vertices $S$ with $d=\abs{S}$,
  let $R\subseteq 2^{S}$ denote a $d$-ary relation.
  For an element \(r \in R\), we
  write \(x_r\) for the length-\(d\) string
  that is \(\sigma_0\) at every position \(v\in r\),
  and \(\rho_0\) at the remaining positions, i.e.,
  \[
    x_r\position{v} \deff
    \begin{cases}
      \sigma_0 & \text{if \(v \in r\)}, \\
      \rho_0   & \text{otherwise.}
    \end{cases}
  \]
  We set $L_R \deff \{ x_r \mid r \in R \}$.
  A graph with portals $(H,U)$ \emph{realizes} $R$
  if it realizes $L_R$.
  We say that $R$ is realizable
  if there is a graph with portals that realizes $R$.
\end{definition}

Unfortunately, the gadget constructions for the lower bound
when the sets $\sigma$ and $\rho$ are finite~\cite{focke_tight_2023_lb},
do not apply in our setting.
A main reason is that the gadgets for realizing \HWeq{1}
are heavily tailored to the set $\sigma$ and $\rho$
and thus, do usually rely on the fact that the sets are finite or cofinite.
In particular, we need to create new gadgets
that can realize \HWeq{1} relations from scratch.

For constructing the \HWeq{1} gadgets,
a significant obstacle is that in our setting
the sets $\sigma$ and $\rho$ do not necessarily have a largest element
which can be exploited in the construction.
Moreover, directly creating gadgets that realize the \HWeq{1} relation
for large arity
seems to be difficult; for a single vertex, having one selected neighbor
is essentially the same as having $\mname+1$ selected neighbors.
Instead, we first realize
the \HWone relation%
\footnote{For a set $S \subseteq \Nat$ and an integer $k \in \Nat$,
  we write $S-k$ for the set $\{s - k \mid s \in S\}$.}
which enforces that at least one vertex is selected
while not giving too many other guarantees.

We prove \cref{lem:relations:hwOne} in \cref{sec:relations:hwOne}.
\relationsHWOne*

This gadget can directly be used to instantiate \HWeq{1}
for arity one, two, or three
as stated by \cref{lem:relations:hwEqOneGeneralLowDegree}.

\begin{corollary}
  \label{lem:relations:hwEqOneGeneralLowDegree}
  Let $\sigma$ and $\rho$ be two difficult residue classes
  modulo $\mname$.
  Then, the relation $\HWeq[k]{1}$ can be realized for $k \in \numb{3}$.
\end{corollary}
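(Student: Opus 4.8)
The plan is to obtain this corollary as an immediate specialization of \cref{lem:relations:hwOne}: that lemma already realizes the relation family $\HWone = \HWin{\rho - \min\rho + 1}$, so I would only need to check that, at arity at most three, this relation literally coincides with $\HWeq{1}$.

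First I would record that the hypothesis ``$(\sigma,\rho)$ difficult'' forces $\mname \ge 3$. Indeed, the only periodic sets of period $2$ are $\EVEN$ and $\ODD$, and each of the four resulting pairs is easy by \cref{def:easyCases}: if $\rho = \EVEN$ then $0 \in \rho$, and if $\rho = \ODD$ then $\sigma \in \{\EVEN,\ODD\}$, both of which are explicitly listed there. So from now on $\mname \ge 3$.

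Next I would unfold the accepting set. Since $\rho$ is periodic with period $\mname$, we have $\rho = \{\min\rho + t\mname \mid t \in \Nat\}$, hence
\[
  \rho - \min\rho + 1 = \{1,\ \mname+1,\ 2\mname+1,\ \dots\} = \{t \in \NatP \mid t \equiv_\mname 1\}.
\]
For an arity $k \in \numb{3}$, the Hamming weight of any selection inside a scope of size $k$ lies in $\numbZ{k}$; because $k \le 3 \le \mname$, the least element of $\rho - \min\rho + 1$ strictly above $1$ is $\mname + 1 \ge 4 > k$, so $(\rho - \min\rho + 1) \cap \numbZ{k} = \{1\}$. By \cref{def:usefulRelations} a relation of the form $\HWin[k]{\tau}$ depends only on $\tau \cap \numbZ{k}$, so $\HWin[k]{\rho - \min\rho + 1}$ and $\HWin[k]{\{1\}} = \HWeq[k]{1}$ are the same relation on a $k$-element scope. (This matches the informal remark after \cref{lem:relations:hwOne}: weight $0$ is never accepted, and for $\mname \ge 3$ we have $\min\rho + 1, \min\rho + 2 \notin \rho$, which is exactly what rules out weights $2$ and $3$.)

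Finally I would transfer the realization. \cref{lem:relations:hwOne} realizes the family $\HWone$, which by \cref{def:usefulRelations} contains $\HWin[k]{\rho - \min\rho + 1}$ for every arity $k$, in particular for $k \in \numb{3}$; since this relation equals $\HWeq[k]{1}$, the very same graph with portals realizes $\HWeq[k]{1}$ in the sense of \cref{def:realization}, for each $k \in \numb{3}$. There is no genuine obstacle in this corollary — all of the difficulty is concentrated in \cref{lem:relations:hwOne}. The only two points that need a moment's care are the verification that ``difficult'' excludes $\mname = 2$ (otherwise the arity-$3$ version of $\HWone$ would also accept Hamming weight $3$, and we would only get $\HWeq[k]{1}$ for $k \le 2$), and the observation that ``realizing $\HWone$'' must be read as realizing each member relation of that family.
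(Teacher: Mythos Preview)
Your proposal is correct and follows essentially the same approach as the paper: verify that difficult forces $\mname \ge 3$, observe that then $(\rho-\min\rho+1)\cap\numbZ{3}=\{1\}$ (equivalently $0,2,3\notin\rho-\min\rho+1$), and conclude that the realization of $\HWone$ from \cref{lem:relations:hwOne} at arity $k\le 3$ is literally a realization of $\HWeq[k]{1}$. The paper's proof is terser but makes exactly these three points.
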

\begin{proof}
  We know that $1 \in \rho - \min\rho + 1$.
  Since $\sigma$ and $\rho$ are difficult, we furthermore have $\mname \geq 3$,
  which in turn implies that $0,2,3, \not \in \rho - \min\rho + 1$.
  Thus, the realization from \cref{lem:relations:hwOne} already provides a realization of $\HWeq[k]{1}$ for $k \in \numb{3}$.
\end{proof}

\subparagraph*{Finalizing the Proof.}

With the gadgets from
\cref{lem:relations:hwEqOneGeneralLowDegree},
we can now realize \HWeq{1} relations in the general setting.
Before diving into the details of this construction, we introduce a
useful gadget provided by Focke et al.~\cite{focke_tight_2023_lb}.

\begin{lemma}[{\cite[Lemma~5.2]{focke_tight_2023_lb}}]
    \label{thm:vertex_happy_provider}
    Let $\sigma$ and $\rho$ be two arbitrary non-empty sets.
    For any $s \in \sigma$ and $r \in \rho$, there is a $\{\sigma_s,\rho_r\}$-provider.
\end{lemma}

The provider of \cref{thm:vertex_happy_provider} is extremely useful
because it allows us to introduce a vertex to a graph
that can be both selected and unselected
in feasible solutions, as long as this vertex receives no additional selected
neighbors.

Next, we show how to realize $\HWeq{1}$ with an arbitrary arity, given
gadgets realizing $\HWeq{1}$ of arity one, two, and three.

\graphicspath{{figures/tikz}}
\begin{figure}[tp]
    \centering
    \includegraphics[scale=2]{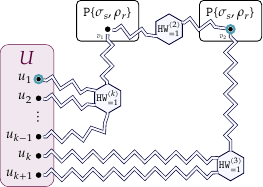}
    \caption{The gadget constructions from \cref{lem:relations:hwEqOneGeneral}.}\label{fig:2.13-1}
\end{figure}

\begin{lemma}
    \label{lem:relations:hwEqOneGeneral}
    Let $\sigma$ and $\rho$ denote sets such that
    \HWeq[1]{1}, \HWeq[2]{1}, and \HWeq[3]{1} can be realized.
    Then, for all $k \ge 1$, the relation \HWeq[k]{1} can be realized
    by a gadget of size $\Oh(k)$.
\end{lemma}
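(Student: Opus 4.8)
The plan is to realize $\HWeq[k]{1}$ by chaining $\Oh(k)$ constant\hyp{}size ``merge gates'', each assembled from $\HWeq[2]{1}$ and $\HWeq[3]{1}$, and then forcing a single output vertex to be selected with $\HWeq[1]{1}$. First I would realize an auxiliary ternary relation $M$ on an ordered triple $(a,b,c)$ that accepts exactly the selections $\emptyset$, $\{a,c\}$, and $\{b,c\}$; intuitively, $c$ is the disjunction of $a$ and $b$ under the side condition that $a$ and $b$ are never both selected. To realize $M$, introduce one fresh vertex $c'$, place a $\HWeq[3]{1}$ relation on $\{a,b,c'\}$ and a $\HWeq[2]{1}$ relation on $\{c',c\}$; since $\HWeq[2]{1}$ on two vertices forces opposite selection status, this enforces exactly $c' = \neg c$ together with ``exactly one of $a,b,c'$ is selected'', which unwinds precisely to $M$. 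Substituting the given realizing gadgets for $\HWeq[2]{1}$ and $\HWeq[3]{1}$ and keeping $c'$ internal yields a constant\hyp{}size gadget for $M$ with portals $\{a,b,c\}$.

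Next I would assemble the full gadget: let $u_1,\dots,u_k$ be the $k$ portals, set $o_1\deff u_1$, and for $i=2,\dots,k$ attach a fresh copy of the merge gadget on $(o_{i-1},u_i,o_i)$ with $o_i$ a new vertex; finally attach (the realizing gadget of) a $\HWeq[1]{1}$ relation on the single vertex $o_k$. Correctness follows by induction on $i$: in every $(\sigma,\rho)$-set of the gadget, at most one of $u_1,\dots,u_i$ is selected, and $o_i$ is selected iff exactly one of $u_1,\dots,u_i$ is selected. The base case $i=1$ is immediate, and the inductive step is read off from the definition of $M$ using the induction hypothesis that ``$o_{i-1}$ selected'' is equivalent to ``some $u_j$ with $j<i$ is selected''. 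Conversely, any selection of at most one leaf extends (uniquely on the merge\hyp{}gate vertices) to a satisfying assignment of all gates, and the $\HWeq[1]{1}$ constraint on $o_k$ then discards the ``no leaf selected'' extension; hence the realized language on $\{u_1,\dots,u_k\}$ equals $\{x_r : |r|=1\} = L_{\HWeq[k]{1}}$. As each of the $k-1$ merge gadgets and the final $\HWeq[1]{1}$\hyp{}gadget has size depending only on $\sigma,\rho$, the total size is $\Oh(k)$.

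The one point requiring care --- and the main obstacle --- is that the connector vertices $o_2,\dots,o_k$ and the internal vertices $c'$ of the merge gadgets are \emph{not} portals of the final gadget, so they must satisfy the $(\sigma,\rho)$\hyp{}constraints themselves; yet, by \cref{def:realization}, any realization of a relation contributes zero selected neighbors to the vertices of its scope, which would force each such internal vertex to have exactly zero selected neighbors, and $0$ need not lie in $\sigma$ or $\rho$. I would resolve this exactly as in the finite case, by attaching to every internal vertex a $\{\sigma_s,\rho_r\}$\hyp{}provider with $s=\min\sigma$ and $r=\min\rho$, which exists by \cref{thm:vertex_happy_provider}: the provider supplies $\min\sigma$ selected neighbors when the vertex is selected and $\min\rho$ when it is not, so the vertex's own constraint holds in either case. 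Since the $M$\hyp{}relations and the $\HWeq{1}$\hyp{}relations constrain only \emph{which} vertices are selected and not their neighbor counts, and since the provider still leaves both selection options available, attaching the providers affects neither the induction nor the realized language on the true portals $u_1,\dots,u_k$. Finally, substituting the realizing gadgets for all $\HWeq[1]{1}$, $\HWeq[2]{1}$, $\HWeq[3]{1}$ relations (and including the providers) produces a genuine relation\hyp{}free graph with portals of size $\Oh(k)$, completing the realization of $\HWeq[k]{1}$.
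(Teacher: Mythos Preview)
Your proposal is correct and follows essentially the same approach as the paper: both build a linear chain out of $\HWeq[3]{1}$ and $\HWeq[2]{1}$ gadgets, with $\{\sigma_{\min\sigma},\rho_{\min\rho}\}$-providers attached to the internal connector vertices so that they can be selected or unselected while still satisfying their own $\sigma/\rho$-constraint. The only structural differences are cosmetic: the paper peels off two portals per inductive step (placing $\HWeq[3]{1}$ on $v_2,u_k,u_{k+1}$) and therefore needs no final $\HWeq[1]{1}$, whereas you peel off one portal per step via your merge gate $M$ and then force the last output $o_k$ with $\HWeq[1]{1}$; both unroll to the same kind of path of alternating $\HWeq[3]{1}$/$\HWeq[2]{1}$ blocks and both cite the idea of \cite[Lemma~4.4]{MarxSS21}.
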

\begin{proof}
    We use the same ideas as in the proof of~\cite[Lemma 4.4]{MarxSS21},
    which uses a simple approach of obtaining a higher degree relation
    by combining \HWeq[2]{1} and \HWeq[3]{1} relations in a path-like manner.

    We proceed by strong induction.
    The base cases, $1 \leq k \leq 3$ hold by assumption.

    For the induction step $(k \geq 4)$, we assume that we can realize the relation for
    all arities from $1$ to $k$, and show that we can realize the relation for arity
    $k+1$.
    We first describe the gadget construction; then we argue about its properties.
    Also consult \cref{fig:2.13-1} for a visualization of the construction.

    Denote by $u_1,\dots,u_{k+1}$ the vertices of the relation scope.
    First, we set $s \deff \min \sigma \in \sigma$
    and $r \deff \min\rho \in \rho$, and we add to the graph two independent
    copies of the gadget from \cref{thm:vertex_happy_provider} for \(s\) and \(r\);
    call the portal vertices of said gadgets $v_1$ and $v_2$, respectively.
    Next, we add the relation $\HWeq[k]{1}$ with scope $u_1,\dots,u_{k-1},v_1$ to the graph.
    Then, we add the relation $\HWeq[2]{1}$ with scope $v_1,v_2$ to the graph.
    After that, we add the relation $\HWeq[3]{1}$ with scope $v_2,u_k,u_{k+1}$ to the graph.
    Finally, we replace all relations with the respective realization gadgets, which exist by assumption.
    For the rest of this proof, call the resulting graph \(G\).

    \begin{claim}
        The graph \(G\) realizes the $\HWeq[k+1]{1}$ relation.
    \end{claim}
    \begin{claimproof}
        First, assume that no vertex of the relation scope is selected.
        In this case, both $v_1$, and $v_2$ must be selected, which is not possible due to the $\HWeq[2]{1}$ relation.

        Next, it is not possible that two vertices of $u_1,\dots,u_{k-1}$ or two vertices
        of $u_{k},u_{k+1}$ are selected, due to the $\HWeq[k]{1}$ and $\HWeq[3]{1}$ relations.
        If one vertex of $u_1,\dots,u_{k-1}$ and one vertex of $u_k$ and $u_{k+1}$ are selected,
        then both $v_1$, and $v_2$ cannot be selected, which is once again impossible in
        any solution.

        Finally, if one vertex of $u_1,\dots,u_{k+1}$ is selected, then exactly one of
        $v_1$ and $ v_2$ must be selected, and all relations are fulfilled.
        Moreover, the vertices $v_1$ and $v_2$ can always receive a feasible number of neighbors,
        regardless of their selection status.
    \end{claimproof}

    To conclude the proof, we analyze the size of the gadget.
    The gadgets for arities $1$, $2$ and $3$, and the gadgets from \cref{thm:vertex_happy_provider} have constant size each.
    Hence, the size of the gadget grows only by a constant amount as we go from one arity
    to the next, proving that the size of the gadget for arity $k$ is linear in $k$.
\end{proof}

With \cref{lem:relations:hwEqOneGeneralLowDegree,lem:relations:hwEqOneGeneral} at hand,
we can now finally prove the main result of this section,
that is, prove \cref{lem:relations:reductionToRemoveRelations}
which we restate here for convenience.

\reductionToRemoveRelations

\begin{proof}
  Let \(\sigma\) and \(\rho\) denote sets as in the statement of the lemma.
  Further, let $I_1$ denote an instance of \srDomSetRel{},
  let $\pw$ denote the pathwidth of the graph corresponding to \(I_1\),
  and let $d$ denote the arity of the graph corresponding to \(I_1\).

  First, we apply \cref{lem:relations:reductionArbitraryToHWone}
  to obtain an equivalent instance $I_2$ of \srDomSetRel[\HWeq{1}] with
  pathwidth $\pw + \Oh(1)$ and arity $2^d + 1$.
  By \cref{lem:relations:hwEqOneGeneralLowDegree,lem:relations:hwEqOneGeneral}, we can replace all remaining relations of the graph with their realizations.
  To do this, observe that any remaining relation is a $\HWeq{1}$ relation.
  To replace such a relation with a graph, we add the graph that realizes this relation, and unify its portal vertices with the vertices of the relation.
  Write \(I_3\) to denote the resulting instance of \srDomSet{}.

  \begin{claim}
    \label{cl:2.11-1}
    The instances \(I_2\) and \(I_3\) are equivalent.
  \end{claim}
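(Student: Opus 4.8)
The plan is to argue that replacing each $\HWeq{1}$ relation of $I_2$ by a graph realizing it (via the realizers from \cref{lem:relations:hwEqOneGeneralLowDegree,lem:relations:hwEqOneGeneral}) preserves the set of solutions when projected onto the original vertex set. The key tool is the notion of \emph{realization} from \cref{def:realization}: a graph with portals $(H,U)$ realizes $R$ precisely when its realized language equals $L_R$, i.e., the partial solutions of $H$ restricted to $U$ are exactly the strings $x_r$ for $r \in R$, with \emph{no additional selected neighbors} contributed to the portal vertices. This last property---that each portal vertex $u \in U$ receives no selected neighbors from inside $H \setminus U$ in the states $\sigma_0$ and $\rho_0$---is what makes the replacement ``transparent'' and allows us to glue gadgets onto a graph without disturbing the degree constraints of the scope vertices.

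First I would set up the bookkeeping: let $C_1, \dots, C_m$ be the $\HWeq{1}$ relations of $I_2$ with scopes $\scope(C_i)=S_i$, and let $(H_i, S_i)$ be the graph with portals realizing $\HWeq{1}$ on $S_i$ obtained from \cref{lem:relations:hwEqOneGeneral} (using \cref{lem:relations:hwEqOneGeneralLowDegree} for the base cases); $I_3$ is obtained from the underlying graph of $I_2$ by taking the disjoint union with all the $H_i$ and then identifying the portal vertices of $H_i$ with $S_i$. The gadgets $H_i$ are internally vertex-disjoint from each other and share with the rest of the graph exactly the portal sets $S_i$. I would then prove both inclusions. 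For the forward direction, given a $(\sigma,\rho)$-set $S$ of $I_2$ (satisfying all relations), the restriction $S \cap S_i \in \acc(C_i)$ corresponds to some $r_i \in \HWeq{1}$, so the string $x_{r_i}$ is in the realized language of $(H_i, S_i)$, witnessed by some partial solution $T_i \subseteq V(H_i)$. Setting $S' \deff S \cup \bigcup_i (T_i \setminus S_i)$, I would check that $S'$ is a valid $(\sigma,\rho)$-set of $I_3$: vertices of the original graph still satisfy their constraints because---by the definition of realization---each $H_i$ adds zero selected neighbors to each portal vertex, so $|N(v) \cap S'| = |N(v) \cap S|$ for every original vertex $v$; and the internal vertices of each $H_i$ satisfy their constraints because $T_i$ is a partial solution (together with $S$ agreeing with $T_i$ on $S_i$ via the states $\sigma_0/\rho_0$). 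Conversely, given a $(\sigma,\rho)$-set $S'$ of $I_3$, its restriction $T_i \deff S' \cap V(H_i)$ is a partial solution of $(H_i, S_i)$ (the internal $\sigma/\rho$ constraints hold in $I_3$, and the portal constraints are exactly the ``partial'' ones), hence $T_i$ witnesses some string in $L(H_i, S_i) = L_{\HWeq{1}}$, which forces $S' \cap S_i \in \HWeq{1}$; and again the zero-extra-neighbor property of realizations gives that $S \deff S' \cap V(I_2)$ satisfies all the graph constraints of $I_2$ and all relations $C_i$.

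The main obstacle I anticipate is not conceptual but careful: making sure the ``states are $\sigma_0$ and $\rho_0$'' convention in \cref{def:realization} lines up with the fact that in $I_3$ a portal vertex $v \in S_i$ may also be adjacent to vertices \emph{outside} $H_i$, so its total neighbor count is a sum of the count inside $H_i$ (which is $0$ by realization) and the count in the rest of the graph. I would make explicit the lemma---essentially a restatement of the ``realization glues cleanly'' principle already implicit in \cite{focke_tight_2023_lb}---that if $(H,U)$ realizes $R$ and $G$ is any graph with relations sharing only $U$ with $H$, then the $(\sigma,\rho)$-sets of $G$ with relation $R$ on $U$ are in bijection (via restriction/extension) with those of $G \cup H$, and iterate it over $i = 1, \dots, m$. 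Once this gluing lemma is in place, the claim follows by a straightforward induction on the number of relations replaced, and the pathwidth bound $\pw + \Oh(2^d)$ follows by inserting each constant-size gadget (of size $\Oh(2^d)$, since the arity after \cref{lem:relations:reductionArbitraryToHWone} is $2^d+1$) into a bag containing its scope, as handled in the remainder of the proof of \cref{lem:relations:reductionToRemoveRelations}.
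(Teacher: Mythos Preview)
Your proposal is correct and follows essentially the same approach as the paper's proof: both argue the two directions by extending (respectively restricting) a solution using the defining property of a realization from \cref{def:realization}. Your write-up is considerably more explicit---you spell out the bookkeeping with the gadgets $(H_i,S_i)$, isolate the ``zero extra selected neighbors at the portals'' property coming from the $\sigma_0/\rho_0$ states, and package the argument as a gluing lemma iterated over the relations---whereas the paper dispatches both directions in two sentences each. One small remark: the last paragraph of your plan drifts into the pathwidth bound, which is the content of the \emph{next} claim (\cref{cl:2.11-2}) rather than the present one; you can safely drop that here.
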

  \begin{claimproof}
    First, assume that $I_2$ is a yes-instance.
    Then, selecting the same vertices that are selected in $I_2$, and extending this solution to the newly added graphs results in a solution for instance $I_3$.
    Because all relations of $I_2$ are fulfilled, such an extension is indeed possible by \cref{def:realization}.

    Now, assume that $I_3$ is a yes-instance.
    Since no solution can select neighbors of vertices of $I_2$ that were not yet present in $I_2$, restricting the solution of $I_3$ to graph $I_2$ ensures that all vertices of $I_2$ receive a feasible number of neighbors.
    Moreover, the gadgets that were added to replace the relations ensure that exactly one vertex of each relation of $I_2$ must be selected in any solution, hence, this solution also fulfills all relations.
  \end{claimproof}

  \begin{claim}
    \label{cl:2.11-2}
    The instance \(I_3\) has a pathwidth of \(\pw + \Oh(2^d)\).
  \end{claim}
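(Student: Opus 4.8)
The plan is to build a path decomposition of $I_3$ by taking a path decomposition of (the graph underlying) $I_2$ in which every relation scope already sits inside a single bag, and then splicing in, for each relation of $I_2$, one extra bag holding the few new vertices of the gadget that replaces it. Throughout we treat $d$ as a parameter, so that the $\Oh(1)$ overhead promised by \cref{lem:relations:reductionArbitraryToHWone} --- being a function of $d$ only, and, inspecting that reduction, linear in the number $\Oh(2^d)$ of patterns of a $d$-ary relation --- is in particular $\Oh(2^d)$.

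Concretely, I would first invoke \cref{lem:relations:reductionArbitraryToHWone} to obtain, together with $I_2$, a path decomposition $\mathcal P$ of the graph $\hat I_2$ that is obtained from $I_2$ by turning each relation scope into a clique; by \cref{def:graphWithRelationsWidthMeasures} this decomposition has width $w \le \pw + \Oh(2^d)$. Every relation $R$ of $I_2$ is an $\HWeq[k_R]{1}$ relation with $k_R \le 2^d + 1$, and since $\scope(R)$ is a clique of $\hat I_2$, the standard fact that every clique is contained in a bag of a path decomposition (see, e.g., \cite[Chapter~7]{cyganParameterizedAlgorithms}) gives a bag $B_R$ of $\mathcal P$ with $\scope(R) \subseteq B_R$. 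When we replace $R$ by the realization gadget $H_R$ from \cref{lem:relations:hwEqOneGeneral}, its $k_R$ portal vertices are identified with $\scope(R)$, and the remaining vertices $V(H_R)\setminus\scope(R)$ --- of which there are $\Oh(k_R) = \Oh(2^d)$ --- are fresh vertices whose entire neighbourhood in $I_3$ lies inside $V(H_R)$.

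Now process the relations one after another: for the relation $R$ currently handled, insert into $\mathcal P$, immediately next to the bag $B_R$, the single new bag $B_R \cup (V(H_R)\setminus\scope(R))$. This bag contains all of $V(H_R)$, hence covers every edge of $H_R$; the fresh vertices occur in this one bag only, while every other vertex of the new bag already occurs in the neighbouring bag $B_R$, so the connectivity condition is maintained. After all relations have been processed, the resulting sequence of bags is a path decomposition of $I_3$: the vertex set is covered (original vertices by $\mathcal P$, fresh vertices by the inserted bags), and the edge set is $E(I_2)$ (covered by $\mathcal P$) together with the edges of the gadgets $H_R$ (covered by the inserted bags). The width is at most $\max_R\bigl(|B_R| + |V(H_R)\setminus\scope(R)|\bigr) - 1 \le w + \Oh(2^d) = \pw + \Oh(2^d)$, which is exactly the claim.

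The single point that needs care --- and the place where a naive argument fails --- is that this $\Oh(2^d)$ overhead must be \emph{additive} and not cumulative over the possibly $\Omega(|I_2|)$ many relations. This holds precisely because the fresh vertices of $H_R$ are nonadjacent to the fresh vertices of every other gadget and each such set of fresh vertices lives in a single bag, so the bags inserted for distinct relations never need to carry more than one gadget's worth of new vertices at a time; one must resist the tempting but fatal shortcut of putting all gadget vertices into all bags.
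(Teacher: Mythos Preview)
Your proposal is correct and follows essentially the same approach as the paper: both arguments use that each relation scope is a clique of $\hat I_2$ and hence sits in a single bag, then splice in one fresh bag per relation containing that bag together with the $\Oh(2^d)$ new gadget vertices from \cref{lem:relations:hwEqOneGeneral}, taking care that distinct gadgets land in distinct bags so the overhead stays additive. Your extra discussion of whether the $\Oh(1)$ in \cref{lem:relations:reductionArbitraryToHWone} might hide a dependence on $d$ is a reasonable caution but unnecessary here, since the target bound is $\pw + \Oh(2^d)$ regardless.
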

  \begin{claimproof}
    Recall from \cref{def:graphWithRelationsWidthMeasures} that the pathwidth of $I_2$ is defined as the pathwidth of the graph obtained by forming a clique out of all vertices in the relation scope for each relation.
    Let $\hat I_2$ denote the graph that is obtained from $I_2$ by applying the aforementioned transformation.
    Consider a path decomposition of $\hat I_2$.
    For any relation, there exists a bag of the decomposition in which all vertices of the relation are present.
    We can duplicate this bag and reconnect the bags in the natural way.
    Then, we simply add all vertices of the gadget that realizes the relation to the duplicated bag.
    It is easy to see that one can obtain a path decomposition of $I_3$ by performing this operation for each relation such that we never add the vertices of two realization gadgets to the same bag.
    The width of this decomposition is the width of the decomposition of $I_2$ plus the size of the largest gadget that was added to the graph.
    Using \cref{lem:relations:hwEqOneGeneral}, we observe that this results in a decomposition of width $\pw + \Oh(2^d)$, as desired.
  \end{claimproof}
  Combining \cref{cl:2.11-1,cl:2.11-2}, we obtain the claimed result.
\end{proof}

\subsection[\texorpdfstring{Realizing \HWone}%
{Realizing HW\textunderscore(in rho-min(rho))}]{\boldmath{}Realizing \HWone}
\label{sec:relations:hwOne}
In this section, we construct gadgets that realize the \HWone{} relation.
Formally, we prove \cref{lem:relations:hwOne},
which we restate here for convenience.
\graphicspath{{figures/tikz}}
\begin{figure}[tp]
  \renewcommand{\sigMin}{\min\sigma}
  \renewcommand{\rhoMin}{\min\rho}
  \centering

  \begin{subfigure}[t]{0.4\textwidth}
    \centering
    \includegraphics[height=20ex]{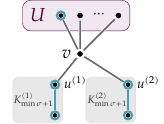}
    \caption{The case $\rhoMin \geq 2$ illustrated for $\sigMin = 1, \rhoMin = 3$.}
  \end{subfigure}\quad
  \begin{subfigure}[t]{0.4\textwidth}
    \centering
    \includegraphics[height=20ex]{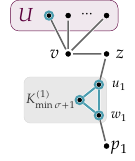}
    \caption{The case $\rhoMin = 1, \sigMin \geq 2$ illustrated for $\sigMin = 2, \rhoMin = 1$ and $r = 1$.}
  \end{subfigure}~

  \begin{subfigure}[t]{0.4\textwidth}
    \centering
    \includegraphics[height=20ex]{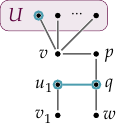}
    \caption{The case $\rhoMin = \sigMin = 1$ illustrated for
      $r = 1, s = 1$.}
  \end{subfigure}\quad
  \begin{subfigure}[t]{0.4\textwidth}
    \centering
    \includegraphics[height=20ex]{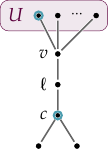}
    \caption{The case $\rhoMin = 1, \sigMin = 0$.}
  \end{subfigure}~

  \caption{The gadget constructions from \cref{lem:relations:hwOne}.
    In each sketched construction we mark vertices corresponding to a feasible solution within the gadget.
    From the relation scope $U$, we select an arbitrary vertex.
  }
  \label{fig:realizing_hw_1}
\end{figure}

\relationsHWOne

\begin{proof}
  \renewcommand{\sigMin}{\min\sigma}
  \renewcommand{\rhoMin}{\min\rho}
  We consider different cases, depending on $\sigma$ and $\rho$.
  Note that $\sigma$, $\rho$ being difficult implies that $\mname \geq 3$ and $\rhoMin \geq 1$.
  \begin{itemize}
    \item
          If $\rhoMin \ge 2$,
          then we use \cref{claim:hw_geq_rmin_2}.
    \item
          If $\rhoMin = 1$ and $\sigMin \geq 2$,
          then we use \cref{claim:hw_geq_rmin_1_smin_2}.
    \item
          If $\rhoMin = 1$, $\sigMin = 1$,
          then we use \cref{claim:hw_geq_rmin_1_smin_1_m_3}.
    \item
          If $\rhoMin = 1$, $\sigMin = 0$,
          then we use \cref{claim:hw_geq_rmin_1_smin_0_mRho_3}.
  \end{itemize}
  Consult \cref{fig:realizing_hw_1} for a visualization of the gadgets used in the
  different cases of the proof.

  In the following,
  we describe the gadgets in the strongest-possible way,
  that is, we describe the minimal requirements for the gadgets to work.

  We start with the first case where $\rhoMin \ge 2$.
  Clearly, $\mname \geq 3$ implies that $\sigMin +1 \notin \sigma$.

  \begin{claim}
    \label{claim:hw_geq_rmin_2}
    Let $\sigma$ and $\rho$ be arbitrary non-empty sets such that
    $\rhoMin \geq 2$ and $\sigMin+1 \notin \sigma$.
    Then, \HWone{} is realizable.
  \end{claim}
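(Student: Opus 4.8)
The plan is to realize, for every arity $k$, the relation $\HWone$, that is, $\HWin[k]{\tau}$ with $\tau \deff \rho-\min\rho+1$; note $1\in\tau$ since $\min\rho\in\rho$. I would attach to the scope $u_1,\dots,u_k$ a fresh \emph{hub} vertex $c$ made adjacent to all of $u_1,\dots,u_k$, and then hang off $c$ a \emph{budget gadget} $A$ --- disjoint from the scope, touching the rest of the construction only at $c$ --- engineered so that in \emph{every} partial solution of the resulting graph with portals one has (i)~$c\notin S$, and (ii)~$c$ has exactly $\min\rho-1$ selected neighbors inside $A$. Since $\min\rho-1\ge 1$ by hypothesis, $A$ is non-empty.

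Granting (i) and (ii), correctness is a one-line count. For a subset $r$ of the scope, $c$ is an unselected non-portal vertex with $|r|+(\min\rho-1)$ selected neighbors, so the partial solution extends to a valid one if and only if $|r|+\min\rho-1\in\rho$, equivalently $|r|\in\tau$. And since $c\notin S$, the unique gadget-neighbor $c$ of each portal contributes no selected neighbor, so the realized string on the scope is exactly $x_r$. Hence the realized language is $\{x_r : |r|\in\tau\}$, which is exactly $L_R$ for $R=\HWin[k]{\tau}$, so the gadget realizes $\HWone$ (see \cref{def:realization}). When this gadget is used inside \cref{lem:relations:hwOne}, where $\sigma,\rho$ are periodic with $\mname\ge 3$, one has $0,2,3\notin\tau$, so restricting to arity $k\le 3$ yields $\HWeq{1}$; this is how \cref{lem:relations:hwEqOneGeneralLowDegree} extracts the low-arity relations.

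It remains to build $A$. The plan is to take $\min\rho-1$ \emph{satellite} vertices $a_1,\dots,a_{\min\rho-1}$, each adjacent to $c$ and to a private \emph{stabilizer} gadget $H_j$ that forces $a_j$ into every solution while feeding $a_j$ exactly $\min\sigma$ selected neighbors from inside $H_j$. Given such $H_j$: if $c\notin S$, then every $a_j$ sees exactly $\min\sigma\in\sigma$ selected neighbors and is happy, so all satellites are selected and (ii) holds; and if $c\in S$, then every (forced-selected) $a_j$ would see $\min\sigma+1$ selected neighbors, which is not in $\sigma$ by hypothesis, so there is no valid solution with $c\in S$, giving (i) --- crucially, this argument is independent of how many portals are selected, so the whole construction has size $\Oh(k)$. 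When $\min\sigma=0$ the stabilizer is trivial: take $H_j=\varnothing$, so $a_j$ is a pendant of $c$, and because $0,1\notin\rho$ (as $\min\rho\ge 2$) and $1=\min\sigma+1\notin\sigma$, the vertex $a_j$ is forced selected whenever $c\notin S$ while $c\in S$ is infeasible, exactly as needed.

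The step I expect to be the main obstacle is the stabilizer $H_j$ when $\min\sigma\ge 1$: one needs a gadget that forces its portal into every solution while supplying a \emph{controlled} number ($\min\sigma$) of selected neighbors, relying only on $\min\sigma+1\notin\sigma$, $\min\rho\ge 2$, and the $\{\sigma_s,\rho_r\}$-providers of \cref{thm:vertex_happy_provider} (which are only providers, so an extra argument is needed to pin the satellite's state down to a single option). A naive clique on $\min\sigma+1$ vertices does not force selection, since $\min\sigma$ may itself lie in $\rho$; instead I would combine providers with a few auxiliary vertices --- possibly after a further case split comparing $\min\sigma$ with $\min\rho$ --- which is also why \cref{fig:realizing_hw_1} shows visibly different gadgets in its several cases. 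The remaining verifications --- vertex-disjointness of the $H_j$'s, that no gadget vertex adds a selected neighbor to the scope, and non-emptiness of $A$ --- are routine.
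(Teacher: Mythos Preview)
Your high-level plan is exactly the paper's: a hub vertex adjacent to the scope, with $\min\rho-1$ ``satellites'' forced selected so that the hub is unselected with $\min\rho-1$ selected neighbors, leaving precisely the constraint $|r|+\min\rho-1\in\rho$ on the scope. The gap is that you talk yourself out of the correct stabilizer. You write that ``a naive clique on $\min\sigma+1$ vertices does not force selection, since $\min\sigma$ may itself lie in $\rho$'', and then leave the stabilizer as the main open obstacle. This dismissal is wrong, and the clique \emph{is} the stabilizer the paper uses.

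Concretely, let the satellite $a_j$ together with $H_j$ form a clique on $\min\sigma+1$ vertices, so each vertex of $H_j$ has degree exactly $\min\sigma$ while $a_j$ has the extra edge to the hub. If $\min\sigma+1=1$ the single vertex has degree $1$ and, being unable to reach $\min\rho\ge 2$ selected neighbors, is forced selected. If $\min\sigma+1=2$ the non-satellite leaf has degree $1$, hence is forced selected, and then needs its unique neighbor $a_j$ selected since $\min\sigma=1$. If $\min\sigma+1\ge 3$, suppose some $v'\in H_j$ is unselected; it needs at least $\min\rho\ge 2$ selected neighbors, so at least one selected neighbor $w'\neq a_j$ exists. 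But $w'\in H_j$ has degree exactly $\min\sigma$, so being selected it needs all $\min\sigma$ of its neighbors selected---including $v'$, a contradiction. Hence every vertex of $H_j$ is selected; and then any such vertex has $\min\sigma-1$ selected neighbors inside $H_j$, forcing $a_j$ selected as well. Your feared configuration ``$a_j$ unselected with $\min\sigma\in\rho$ selected neighbors'' never arises, because the $H_j$-vertices cannot tolerate $a_j$ being unselected regardless of whether $\min\sigma\in\rho$. With all satellites selected and each having $\min\sigma$ selected clique-neighbors, the hypothesis $\min\sigma+1\notin\sigma$ rules out selecting the hub, and the rest of your argument goes through verbatim. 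No providers, no further case split is needed.
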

  \begin{claimproof}
    For all $i \in \numb{\rhoMin-1}$,
    we create a clique $K_{\sigMin+1}^{(i)}$ on $\sigMin+1$ vertices.
    We create a vertex $v$ that is made adjacent to all vertices in the scope of the relation.
    Finally, we select a vertex $u^{(i)}$ from the clique $K_{\sigMin+1}^{(i)}$,
    and make $v$ adjacent to $u^{(i)}$ for all $i \in \numb{\rhoMin-1}$.

    The correctness relies on the fact that all vertices of each clique must be selected in a solution.
    To see this, fix some $i \in \numb{\rhoMin-1}$
    and consider the clique $K_{\sigMin+1}^{(i)}$.

    If $\sigMin+1 = 1$, then the clique consists of a single vertex $u^{(i)}$
    that must be selected since $u^{(i)}$ only has a single neighbor,
    and if the vertex would be unselected, it would need at least two neighbors
    because of $\rhoMin \geq 2$.

    If $\sigMin+1 = 2$,
    then the clique consists of two vertices connected by an edge.
    Let $w$ be the vertex of the clique that is not $u^{(i)}$.
    Since $w$ has only a single neighbor, vertex $w$ must be selected.
    But, since $\sigMin = 1$, this means that also $u^{(i)}$,
    the only neighbor of $w$, must be selected.

    If $\sigMin+1 \geq 3$,
    assume that some vertex $v^{(i)}$ of the clique that is not $u^{(i)}$
    is not selected in a solution.
    Then, $v^{(i)}$ must have two selected neighbors in the clique.
    At least one of them, denote it by $w^{(i)}$,
    must be different from $u^{(i)}$,
    Hence, $w^{(i)}$ is selected and requires $\sigMin$ selected neighbors,
    which implies that all vertices of the clique must be selected,
    contradicting that $v^{(i)}$ is not selected.

    By the previous argument vertex $v$ is adjacent to a vertex $u^{(i)}$
    such that $u^{(i)}$ has exactly $\sigMin$ selected neighbors in the clique.
    Since $\sigMin+1 \not \in \sigma$, vertex $v$ cannot be selected.
    Furthermore, $v$ is adjacent to $\rhoMin -1$ vertices that are selected,
    and we know that $\rhoMin -1 \not \in \rho$.
    Thus, at least one vertex in the scope of the relation must be selected,
    so that $v$ can have enough neighbors.
    Moreover, exactly $r$ vertices from the scope must be selected
    where $r +\rhoMin-1 \in \rho$.
  \end{claimproof}

  For all the remaining cases we assume that $\rhoMin = 1$.
  We proceed with the case where $\sigMin \ge 2$.

  \begin{claim}
    \label{claim:hw_geq_rmin_1_smin_2}
    Let $\sigma$ and $\rho$ be arbitrary non-empty sets such that
    $\rhoMin = 1$,
    there exists an $r \in \rho$ with $r+1 \notin \rho$,
    $\sigMin \geq 2$, and $\sigMin+1 \notin \sigma$.
    Then, \HWin{\rho} is realizable.
  \end{claim}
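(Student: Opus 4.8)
The plan is to realize $\HWin{\rho}$ by a gadget organized, as in \cref{claim:hw_geq_rmin_2}, around a single central vertex $v$ adjacent to every vertex of the relation scope; the gadget will force $v$ to be unselected in every partial solution and to receive no selected neighbour from the gadget, so that $v$ is happy precisely when the number of selected scope vertices lies in $\rho$ --- which is exactly the behaviour of $\HWin{\rho}$ (and equals $\HWone$ since $\min\rho = 1$). The new difficulty compared with \cref{claim:hw_geq_rmin_2} is that there the $\min\rho-1$ cliques simultaneously forced $v$ unselected \emph{and} supplied the needed offset $\min\rho-1$ to its neighbour count; here that offset is $0$, so cliques hung directly on $v$ would hand it a nonzero count. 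I would therefore insert an auxiliary vertex $p$ between $v$ and the cliques whose sole job is to ``absorb'' the offset while still forcing $v$ unselected.

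Concretely, I would first fix (by hypothesis) some $r \in \rho$ with $r+1 \notin \rho$, and build: for each $i \in \numb{r}$ a clique $K^{(i)}$ on $\min\sigma + 1$ vertices with a distinguished vertex $u^{(i)}$; an auxiliary vertex $p$ made adjacent to $u^{(1)}, \dots, u^{(r)}$; and a central vertex $v$ made adjacent to $p$ and to all scope vertices of the relation. The gadget has $r(\min\sigma+1)+2 = \Oh(1)$ vertices.

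The core of the correctness proof would be the following chain of observations about an arbitrary partial solution $S$ of the gadget. (a) Every clique $K^{(i)}$ is entirely selected: its $\min\sigma \ge 2$ vertices other than $u^{(i)}$ are adjacent to exactly the remaining clique vertices, so if one of them is selected then $|S \cap K^{(i)}|-1 \in \sigma$, forcing $|S \cap K^{(i)}| \ge \min\sigma+1 = |K^{(i)}|$; whereas if none of them is selected then $|S \cap K^{(i)}| = [u^{(i)} \in S] \le 1$, and an unselected one of them forces $|S \cap K^{(i)}| \in \rho$, hence $|S \cap K^{(i)}| = 1$ by $\min\rho = 1$, i.e. $u^{(i)} \in S$ with zero selected clique-neighbours, which is impossible since $\min\sigma \ge 2 > [p \in S]$. (b) $p \notin S$: by (a), $u^{(i)}$ is selected with $\min\sigma$ selected clique-neighbours, so it has $\min\sigma + [p \in S]$ selected neighbours overall, and since $\min\sigma \in \sigma$ but $\min\sigma+1 \notin \sigma$ this forces $[p\in S]=0$. (c) $v \notin S$: by (a) and (b), $p$ is unselected with exactly $r + [v\in S]$ selected neighbours (the $u^{(i)}$ plus possibly $v$), which must lie in $\rho$; as $r \in \rho$ and $r+1 \notin \rho$, this gives $v \notin S$. (d) Then the selected neighbours of the unselected vertex $v$ are exactly the selected scope vertices, so $v$ is happy iff $|S \cap U| \in \rho$; moreover, for every $S' \subseteq U$ with $|S'| \in \rho$ the configuration ``all cliques selected, $p$ and $v$ unselected'' together with $S'$ is a valid partial solution, and since $v \notin S$ no scope vertex receives a selected neighbour from the gadget. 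Hence the realized language of the gadget is exactly $\{x_{S'} : S' \subseteq U,\ |S'| \in \rho\} = L_{\HWin{\rho}}$, so the gadget realizes $\HWin{\rho}$ in the sense of \cref{def:realization}.

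I expect the main obstacle to be making step (a) watertight: although small, it is a case analysis over the selected subset of a single clique with one pendant edge, and it (together with (b)) is where all four numeric hypotheses --- $\min\rho = 1$, $\min\sigma \ge 2$, $\min\sigma+1 \notin \sigma$, and the existence of $r$ --- are consumed; one must also be careful that forcing $v$ unselected is genuinely needed for \cref{def:realization}, since otherwise a selected $v$ would leak selected neighbours into the scope and change the realized language.
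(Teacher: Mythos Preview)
Your proof is correct and uses essentially the same architecture as the paper: a central vertex $v$ adjacent to the scope, an intermediate vertex (your $p$, the paper's $z$) adjacent to $v$ and to one distinguished vertex $u^{(i)}$ of each of $r$ cliques on $\min\sigma+1$ vertices; the cliques force the intermediate vertex to be unselected via $\min\sigma+1\notin\sigma$, which then forces $v$ unselected via $r+1\notin\rho$, leaving $v$ with exactly the scope as potential selected neighbours. The one difference is in how the cliques are shown to be fully selected: the paper hangs a degree-one pendant $p_i$ off a second distinguished clique vertex $w_i$, so that $p_i$ (unselectable since $\min\sigma\ge 2$) forces $w_i\in S$ (since $\min\rho=1$), which cascades to the whole clique; you omit the pendants and derive full selection directly by the case analysis in step~(a). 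Your variant saves $r$ vertices and is self-contained, while the paper's pendant trick makes the forcing one-line but enlarges the gadget.
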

  Observe that in our case choosing $r = \rhoMin$ is possible.
  \begin{claimproof}
    The gadget contains the $r+2$ vertices $v, z, p_1,\dots, p_r$
    and $r$ cliques $K^{(i)}_{\sigMin+1}$ on $\sigMin+1$ vertices
    each where $i \in \numb{r}$,
    and $u_i$ and $w_i$ denote two distinct vertices
    of the clique $K^{(i)}_{\sigMin+1}$.
    We make $v$ adjacent to all vertices of the scope and to $z$.
    Vertex $z$ is adjacent to all $u_i$
    and vertex $w_i$ is adjacent to $p_i$ for all $i$.

    In any solution, all vertices of each clique $K^{(i)}_{\sigMin+1}$
    must be selected due to vertex $p_i$:
    Vertex $p_i$ has only a single neighbor
    and hence, cannot be selected as $\sigMin \ge 2$.
    Moreover, since $\rhoMin = 1$,
    the unique neighbor $w_i$ of $p_i$ must also be selected.
    The selected vertex $w_i$ then needs $\sigMin$ selected neighbors,
    and so the whole clique must be selected.
    Then, vertex $u_i$ has $\sigMin$ selected neighbors,
    which implies that its only other neighbor $z$ cannot be selected.
    Since the unselected vertex $z$ already has $r$ selected neighbors $u_i$,
    vertex $v$ cannot be selected as $r+1 \notin \rho$.
    However, as $v$ is unselected and has no selected neighbors,
    it requires to have $t$ selected neighbors for some $t \in \rho$
    which have to stem from the scope of the relation.
  \end{claimproof}

  Next we change the requirement for $\sigMin$ by assuming $\sigMin=1$.

  \begin{claim}
    \label{claim:hw_geq_rmin_1_smin_1_m_3}
    Let $\sigma$ and $\rho$ be arbitrary non-empty sets such that
    $\rhoMin = 1$,
    there exists an $r \in \rho$ with $r+1 \notin \rho$,
    $\sigMin = 1$,
    $\sigMin+1 \notin \sigma$,
    and there is an $s \in \sigma$ with $s+1, s+2 \notin \sigma$.
    Then, \HWin{\rho} is realizable.
  \end{claim}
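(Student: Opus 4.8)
The plan is to realize the relation $\HWin{\rho}$ by a single gadget built around a \emph{central vertex} $c$ that is made adjacent to every vertex of the relation scope $U$ (and to no other scope vertex), the idea being to force $c$ to be unselected in every $(\sigma,\rho)$-set $S$ of the gadget while guaranteeing that $c$ receives \emph{no} selected neighbour from inside the gadget. Then $c$, being unselected and having only its scope neighbours possibly selected, is feasible precisely when the number of selected scope vertices lies in $\rho$, which is exactly $\HWin{\rho}$ (note $\rho - \rhoMin + 1 = \rho$ here since $\rhoMin = 1$). For the converse direction I would show that for every $T \subseteq U$ with $|T|\in\rho$ the gadget has a compatible $(\sigma,\rho)$-set in which exactly $T$ is selected among the scope, and I would observe that each scope vertex has $c$ as its only internal neighbour, so it always gets $0$ internal selected neighbours and hence the witnessed string is the required $x_T$.

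To force $c$ unselected without giving it a selected neighbour, I would attach to $c$ a vertex $z$, and attach to $z$ exactly $r$ disjoint copies of a small \emph{selected-vertex gadget}, where $r\in\rho$ is the fixed element with $r+1\notin\rho$; each copy has a distinguished vertex $u_i$ joined to $z$. Each such gadget is designed so that $u_i$ is forced selected with exactly $s$ selected neighbours besides $z$, where $s\in\sigma$ is the fixed element with $s+1,s+2\notin\sigma$: attach to $u_i$ the $s$ vertices $w_{i,1},\dots,w_{i,s}$ (each adjacent to $u_i$ and carrying its own pendant), together with one further pendant $p_i$ adjacent only to $u_i$. Using $\rhoMin=\sigMin=1$ (so $0\notin\sigma$ and $0\notin\rho$), every pendant forces its unique neighbour into $S$; in particular $p_i$ forces $u_i\in S$ and every $w_{i,j}\in S$. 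Using $\sigMin=1$ and $\sigMin+1\notin\sigma$, the pendant on each $w_{i,j}$ must stay unselected, so $w_{i,j}$ has exactly one selected neighbour ($u_i$) and is happy. Finally, $u_i$ is selected and already sees the $s$ selected neighbours $w_{i,1},\dots,w_{i,s}$, so feasibility of $u_i$ together with $s+1,s+2\notin\sigma$ forces both of its remaining neighbours, $z$ and $p_i$, to be unselected. Hence $z$ is unselected; $z$ then sees exactly the $r$ selected vertices $u_1,\dots,u_r$, and since $r+1\notin\rho$ the vertex $c$ cannot be selected; and with $c$ unselected and having only $z$ (unselected) and scope vertices as neighbours, we conclude $|U\cap S|\in\rho$. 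Checking feasibility of all internal vertices in the extension used for the converse is then routine.

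The main obstacle is exactly the forcing in the regime $\sigMin=1$. In contrast with the case $\sigMin\ge 2$ of \cref{claim:hw_geq_rmin_1_smin_2}, a pendant no longer forbids its unique neighbour from being selected, and a single forced-selected vertex hung on $z$ fails to force $z$ unselected, since a selected $z$ could be absorbed as that vertex's one required selected neighbour. This is why each $u_i$ must be given \emph{two} ``free'' neighbours, namely the chain vertex $z$ and the auxiliary pendant $p_i$, and why the hypothesis asks for some $s\in\sigma$ with \emph{both} $s+1$ and $s+2$ outside $\sigma$: it is this double gap above $s$ that makes a selected $z$ impossible. A further subtlety to keep in mind throughout is that $\sigma$ and $\rho$ may coincide (indeed $\sigma=\rho$ in the intended periodic application with $\mname\ge 3$), so no vertex's status can be pinned down merely from its neighbour count lying in one of the two sets but not the other; every forcing step has to route through degree-one pendants and the prescribed gaps $\sigMin+1\notin\sigma$, $s+1,s+2\notin\sigma$, and $r+1\notin\rho$.
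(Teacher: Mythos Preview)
Your proposal is correct and is essentially the same construction as the paper's, up to relabeling: your central selected vertex $u_i$ with the two ``free'' neighbours $z$ and $p_i$ and the $s$ forced-selected arms $w_{i,j}$ (each with its own pendant) is exactly the paper's auxiliary gadget $F$ centred at $q$ with free neighbours $p,w$ and arms $u_1,\dots,u_s$ (each with pendant $v_i$); your merge of the $r$ copies at $z$ followed by the scope-adjacent vertex $c$ is the paper's identification of the $r$ copies at $p$ followed by the scope-adjacent vertex~$v$. The forcing arguments you give (pendants force selection via $\sigMin=\rhoMin=1$, the gap $\sigMin+1\notin\sigma$ pins down the arms, and the double gap $s+1,s+2\notin\sigma$ rules out selecting $z$ and $p_i$) match the paper's verbatim in substance.
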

  Observe that this covers our case by setting $r = 1$ and $s = 1$
  as $\mname\ge 3$.
  \begin{claimproof}
    Assume we can construct a gadget $F$ where a distinguished vertex $p$ is forced to be not selected,
    and the only possible solution also provides exactly one selected neighbor for this vertex $p$.
    Then, we can realize the relation as follows.
    Create $r$ copies of $F$ which we denote by $F_1, \dots, F_r$ where we identify all vertices $p_1, \dots, p_r$ with a new vertex $p$
    We additionally add a vertex $v$ and make $v$ adjacent to all vertices in the scope of the relation and $p$.

    By the properties of the gadget $F$,
    the vertex $p$ is not selected and has one neighbor in each copy $F_i$.
    Hence, vertex $v$ cannot be selected as $r+1 \notin \rho$.
    Therefore, the number of vertices that are selected from the scope must be $t$ for some $t \in \rho$.

    It remains to construct the gadget we assumed to exist above.
    For this we introduce $2s+3$ vertices
    $p, q, w$ and $u_1, v_1, \dots, u_s, v_s$.
    We connect the three vertices~$p, q, w$ in that order to a path.
    Moreover, for all~$i \in \numb{s}$,
    we also connect the three vertices~$q, u_i, v_i$ in that order to a path.

    Since $\rhoMin=\sigMin=1$, all vertices $u_i$ must be selected because of the requirement of each $v_i$.
    Moreover, because the only neighbor of $w$ is $q$,
    vertex $q$ must be selected as well.
    Hence, as $\sigMin+1 \notin \sigma$ and $q$ is selected,
    each vertex $u_i$ forces $v_i$ to not be selected.
    With this selection vertex $q$ has $s$ selected
    neighbors.
    Since $s+1, s+2 \notin \sigma$,
    it is not possible to select any of the vertices $w$ and $p$
    which have one selected neighbor each.

    This concludes the construction of the auxiliary gadget $F$
    with $p$ as the distinguished vertex.
  \end{claimproof}

  For the remaining case we now assume that $\sigMin = 0$.

  \begin{claim}
    \label{claim:hw_geq_rmin_1_smin_0_mRho_3}
    Let $\sigma$ and $\rho$ be arbitrary non-empty sets such that
    $\rhoMin = 1$, and
    $2, 3 \not \in \rho$,
    $\sigMin = 0$, and
    $\sigMin+1 \notin \sigma$.
    Then, \HWin{\rho} is realizable.
  \end{claim}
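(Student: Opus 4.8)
The plan is to realize $\HWin{\rho}$ — which, since $\rhoMin=1$, coincides with $\HWone$ — by a gadget $(H,U)$ built around a single central vertex $v$ that is made adjacent to every vertex of the relation scope $U=\{u_1,\dots,u_k\}$ (and, besides $U$, to exactly one auxiliary vertex). The point is that portal vertices carry no constraint while $v$ does, so once $v$ is forced out of $S$ its feasibility constraint $|N(v)\cap S|\in\rho$ becomes exactly the desired condition $|U\cap S|\in\rho$, \emph{provided} the auxiliary part contributes no selected neighbour to $v$. I would realize this with five extra vertices $v,z,y,y_1',y_2'$ and edges $u_iv$ for all $i\in\numb{k}$ together with $vz$, $zy$, $yy_1'$, $yy_2'$; thus $y_1',y_2'$ are pendants on $y$, the neighbourhood of $z$ is exactly $\{v,y\}$, and $v$ has no non-portal neighbour other than $z$.

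The verification I have in mind is a short propagation argument over any feasible partial solution $S$. First, $y\in S$ is forced: otherwise each pendant must be selected (an unselected pendant would need $0\in\rho$, impossible as $\rhoMin=1$), so $y$ would receive $[z\in S]+2\in\{2,3\}$ selected neighbours, contradicting $2,3\notin\rho$. Then both pendants are unselected, since a selected pendant would give $y$ exactly one selected neighbour and $\sigMin+1=1\notin\sigma$; hence $y$ has $[z\in S]$ selected neighbours, which must lie in $\sigma$, and since $\sigMin=0\in\sigma$ while $1\notin\sigma$ this forces $z\notin S$. Next $z$ has $[v\in S]+1\in\{1,2\}$ selected neighbours, which must lie in $\rho$, and $2\notin\rho$ forces $v\notin S$. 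Finally $v$ has $|U\cap S|+[z\in S]=|U\cap S|$ selected neighbours, which must lie in $\rho$, so $|U\cap S|\in\rho$. For the converse, for any $r\subseteq U$ with $|r|\in\rho$ the set $S=r\cup\{y\}$ is a feasible partial solution ($y,y_1',y_2',z,v$ then have $0\in\sigma$, $1\in\rho$, $1\in\rho$, $1\in\rho$, $|r|\in\rho$ selected neighbours, respectively) whose compatible string is $x_r$. Since the propagation argument pins down the status of all of $v,z,y,y_1',y_2'$, the realized language is exactly $\{x_r: r\subseteq U,\ |r|\in\rho\}$, so $(H,U)$ realizes $\HWin{\rho}$, and $H$ has constant size.

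I expect the only delicate step to be forcing $v\notin S$. When $\sigMin=0$, a vertex with a multiple of $\mname$ selected neighbours — in particular with none — may be selected, so the ``clique-saturation'' device used for the cases $\sigMin\ge 1$ is unavailable here. The way around this is the observation that $\sigMin=0$ makes an isolated vertex \emph{forced selected} (an unselected isolated vertex would need $0\in\rho$); this upgrades, via $2,3\notin\rho$, to ``$y$ together with two pendants is forced selected'', and a forced-selected $y$ propagates through the buffer vertex $z$ to pin down $v$. Routing this through $z$ instead of attaching $y$ directly to $v$ is exactly what keeps the auxiliary contribution to $v$'s neighbour count equal to $0$ (a selected $z$ would contribute $1$), so that the relation realized is $\HWin{\rho}$ rather than $\HWin{\rho-1}$.
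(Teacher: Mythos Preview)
Your proposal is correct and, in fact, constructs exactly the same gadget as the paper: under the relabeling $y\leftrightarrow c$, $z\leftrightarrow\ell$, $y_1',y_2'\leftrightarrow$ the two remaining leaves, your graph is isomorphic to the paper's star $S_3$ with center $c$ plus the portal-facing vertex $v$ attached to one leaf. Your propagation argument is essentially the paper's, and is even slightly cleaner in one spot: you deduce $z\notin S$ from the \emph{center's} constraint ($y\in S$ has $[z\in S]\in\sigma$, and $1\notin\sigma$), whereas the paper argues via $\ell$'s own $\sigma$-constraint before $v$'s status is settled, which tacitly needs $2\notin\sigma$ and is not literally guaranteed by the claim's hypotheses (though it holds in the periodic setting and the construction is correct either way).
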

  Observe that our case is covered since $\mname \geq 3$, $\rhoMin = 1$ and $\sigMin = 0$ implies all required conditions.
  \begin{claimproof}
    We start by creating a star graph $S_3$ with $3$ leaves.
    Let $c$ be the center of the star.
    Add a vertex $v$ to the graph, and make $v$ adjacent to one leaf of $S_{3}$, and to all vertices in the scope of the relation.

    We first argue that in any solution, vertex $c$ must be selected,
    and furthermore, all of its neighbors cannot be selected.
    Towards a contradiction, assume that $c$ is not selected.
    Then, there are two leaves of $c$ that must be selected
    as $\min\rho = 1$ and $0 \in \sigma$.
    However, for $c$, two out of three of its neighbors are now selected, and $2 \not \in \rho, 3 \not \in \rho$.
    Thus, $c$ must be selected in any solution.
    The leaves of $c$ that have no other neighbors cannot be selected now, as they have selected neighbor $c$ and $1 \not \in \sigma$.

    Consider the leaf $\ell$ of $S_3$ that is a neighbor of $v$.
    This leaf has two neighbors, one of which is selected.
    As $c$ is selected and $1 \notin \sigma$,
    vertex $\ell$ cannot be selected.
    Furthermore, $v$ cannot be selected either,
    because selecting $v$ would give $\ell$ two selected neighbors
    and $2 \not \in \rho$.
    Thus, $v$ is not selected and requires at least one more selected vertex from the scope of the relation.

    Finally, if $r \in \rho$ vertices of the relation scope are selected,
    we claim that it suffices to select only vertex~$c$
    such that every vertex of the gadget
    has a valid number of selected neighbors.
    For the leaves of~$c$ this is immediately true
    because they have a single selected neighbor
    and are unselected.
    For vertex $c$ this also holds as $c$ is selected and has no selected neighbor.
    Furthermore, vertex $v$ has no additional selected neighbors
    besides the~$r$ selected neighbors in the scope of the relation
    and we know that~$r \in \rho$.
  \end{claimproof}

  This finishes the proof of \cref{lem:relations:hwOne} by combining
  \cref{claim:hw_geq_rmin_2,%
    claim:hw_geq_rmin_1_smin_2,%
    claim:hw_geq_rmin_1_smin_1_m_3,%
    claim:hw_geq_rmin_1_smin_0_mRho_3,%
  }.
\end{proof}

\section{Turn the Lights Off!}
\label{sec:lightsOut}
\graphicspath{{figures/lights_out/}}

Recall that \srDomSet{}
captures multiple variants of \LightsOut{}.
Concretely, we denote by \ReflAllOff
the version of \srDomSet
where $\sigma = \EVEN$ and $\rho = \ODD$,
and by \AllOff the variant where $\sigma = \rho = \ODD$.
As already mentioned in the introduction, the decision versions of \ReflAllOff{}  and \AllOff{}
can be solved in polynomial-time by Gaussian elimination
on the corresponding set of linear equations~\cite{Sutner89,HalldorssonKT00-mod-2,GoldwasserKT97,AndersonF98,dodisUniversalConfigurationsLightflipping2001}.
However, as the minimization version of these two problems
is \NP-complete~\cite{sutner1988additive,%
    caroOddResidueDomination2001,HalldorssonKT00-mod-2},
it is open whether our algorithm from \cref{thm:twUpperBound}
is optimal or not.
In this section we answer this question in the positive
by formally proving \cref{thm:lightsOut:lowerPW}.

\thmLightsOutLowerPW*

We use the \NP-hardness result by Sutner~\cite[Theorem~3.2]{sutner1988additive}, as presented in \cite[Theorem 19]{FleischerY13}, as a basis
and strengthen our results by directly proving the bounds
for the larger parameter pathwidth.
We consider \ReflAllOff in \cref{sec:alloff:reflexive} and \AllOff in \cref{sec:alloff:nonreflexive}.
The proof of \cref{thm:lightsOut:lowerPW} then directly follows from
the combination of the main results of the aforementioned sections.

\subsection{Lower Bound for \ReflAllOff}
\label{sec:alloff:reflexive}

In the following we prove the lower bound for \ReflAllOff
by a reduction from \kSAT
to an equivalent instance of \ReflAllOff with small pathwidth.

\begin{theorem}
    \label{thm:allOff:reflexive:lowerTW}
    Unless SETH fails,
    for all $\epsilon > 0$,
    there is no algorithm for \ReflAllOff
    that can decide
    in time $(2 - \epsilon)^{\pw} \cdot |G|^{\Oh(1)}$
    whether there exists a solution
    of arbitrary size (the size is given as input)
    for a graph~$G$
    that is given with a path decomposition of width~$\pw$.
\end{theorem}
\begin{proof}
    We prove the lower bound by a reduction from CNF-SAT.
    Fix some $\epsilon > 0$ for this
    and let $k$ be the smallest integer such that
    \kSAT does not have a $(2-\epsilon)^n \cdot (n+m)^{\Oh(1)}$ algorithm
    where $n$ is the number of variables and $m$ the number of clauses.

    Consider an arbitrary \kSAT formula $\varphi$
    with $n$ variables $x_1,\dots,x_n$ and $m$ clauses $C^1,\dots,C^m$ as input.%
    \footnote{%
        We assume that every clause contains exactly $k$ literals.
        This restriction is not of technical nature as the constructions
        works for the general case but rather to keep notation simple and clean.%
    }
    In the following we construct a graph $G_\varphi$ as an instance of \ReflAllOff{}.
    The graph is built based on variable gadgets, clause gadgets,
    and a single negation gadget.

    We first construct the gadgets and then describe how they are connected.
    For every variable $x_i$ where $i \in \numb{n}$,
    the \emph{variable gadget} $V_i$ consists of the two vertices
    $v_i$ and $\bar v_i$ that are connected by an edge.

    For every clause $C^j = \lambda^j_1 \lor \dots \lor \lambda^j_k$
    where $j \in \numb{m}$,
    the \emph{clause gadget} $D^j$
    contains the following vertices and edges.
    There are $k$ \emph{literal vertices} $t^j_1, \dots, t^j_k$
    where each vertex corresponds to one literal of the clause.
    Moreover, the gadget $D^j$ contains so-called \emph{subset vertices}
    $s^j_L$ for all $L \subsetneq \numb{k}$,
    that is, for every proper subset of the literals of the clause,
    there exists a vertex labeled with this subset (and the gadget index).
    For each subset $L \subsetneq \numb{k}$,
    the subset vertex $s^j_L$ is connected to the literal vertex $t^j_\ell$
    if and only if $\ell \in L$.
    Moreover, all subset vertices together form a clique on $2^k - 1$ vertices.

    The negation gadget consists of three vertices $q_0$, $q_1$, and $q_2$
    that are connected to a path on three vertices with $q_1$ in the middle.

    As a last step it remains to connect the vertices of the different gadgets.
    Intuitively, each literal vertex of the clause gadget
    is connected to the corresponding variable vertex of the variable gadgets.
    Consider literal $\lambda^j_\ell$,
    that is, the $\ell$th literal in the $j$th clause.
    If this literal is positive,
    i.e., if $\lambda^j_\ell = x_i$ for some variable $x_i$,
    then the vertex $t^j_\ell$ is adjacent to vertex $v_i$.
    If the literal is negative,
    i.e., if $\lambda^j_\ell = \lnot x_i$ for some variable $x_i$,
    then the vertex $t^j_\ell$ is also adjacent to vertex $v_i$
    but additionally also to vertex $q_1$.

    This concludes the description of $G_\varphi$.
    See \cref{fig:alloff:reflexive:reduction} for an illustration of the clause gadget and the negation gadget.
    To prove the correctness of this reduction we set $m+n+1$
    as the upper bound for the number of selected vertices.

    \graphicspath{{figures/tikz}}
    \begin{figure}[t]
        \centering
        \includegraphics[scale=2]{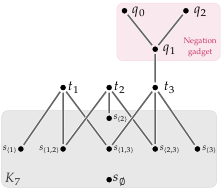}
        \caption{A depiction of the clause gadget for the clause $x_1 \lor x_2 \lor \neg x_3$ as well as the negation gadget. Some indices are omitted for simplicity.}
        \label{fig:alloff:reflexive:reduction}
    \end{figure}

    \begin{claim}
        \label{thm:all_off_reduction_correctness_1}
        If $\varphi$ is a yes-instance of \kSAT,
        then $G_\varphi$ has a solution for \ReflAllOff of size at most $n+m+1$.
    \end{claim}
    \begin{claimproof}
        Let $\pi$ be a satisfying assignment for the \kSAT formula $\varphi$.
        We select the following vertices:
        \begin{itemize}
            \item
                  In the variable gadget of variable $x_i$,
                  we select $v_i$ if $\pi(x_i)= 1$ and $\bar v_i$ otherwise.
            \item
                  In the clause gadget of clause $C^j$,
                  let $L \subsetneq \numb{k}$ be the set of literal indices
                  of this clause that are \emph{not} satisfied.
                  (This is well-defined as $\pi$ is a satisfying
                  which means that not all literals are unsatisfied.)
                  We select vertex $s^j_L$.
            \item
                  We select vertex $q_1$ from the negation gadget.
        \end{itemize}
        Let $S$ denote the set of all selected vertices.

        Since we select exactly one vertex from each gadget,
        the size of $S$ is precisely $n+m+1$.
        It remains to prove that $S$ is indeed a solution,
        that is, $S$ is a $(\sigma, \rho)$-set where $\sigma=\EVEN$ and $\rho=\ODD$.

        First, consider the vertices of the negation gadget.
        Since only vertex $q_1$ is selected,
        the vertices $q_0$ and $q_2$ have exactly one selected neighbors.
        As none of the literal vertices are selected,
        vertex $q_1$ is adjacent to zero selected vertices.

        Consider any vertex of the variable gadgets
        and observe that none of them have a selected neighbor outside the gadget.
        By our choice of $S$, exactly one of the two vertices is selected
        and has no selected neighbors,
        while the other vertex is not selected
        and has exactly one selected neighbor, the one in the variable gadget.

        It remains to check the vertices of the clause gadget.
        For the subset vertices we notice that in every clause gadget
        exactly one of them is selected.
        Recall that all subset vertices of one clause gadget are connected to each other,
        and they are not connected to any other vertices outside the gadget
        but only to the unselected literal vertices of the gadget.
        This implies that exactly one of the subset vertices is selected
        and has no selected neighbors,
        whereas the other subset vertices are not selected
        and have exactly one selected neighbor.

        Next we check the literal vertices.
        Consider a positive literal, say $\lambda^j_\ell = x_i$.
        By our selection $t^j_\ell$ is unselected and has
        vertex $v_i$ as selected neighbor if $\pi(x_i)=1$.
        In this case no subset vertex $s^j_L$ with $\ell \in L$ is selected
        by definition of $S$.
        Hence, the vertex has exactly one selected neighbor.
        If $\pi(x_i)=0$, then the literal does not satisfy the clause
        and hence, by the definition of $S$,
        a vertex $s^j_L$ is selected where $\ell \in L$.

        As a last step we check the literal vertices
        corresponding to negated variables, say $t^j_\ell = \lnot x_i$.
        Once more, this vertex is not selected
        but always adjacent to the selected vertex $q_1$ from the negation gadget.
        If $\pi(x_i)=1$, then also the neighboring vertex $v_i$ is selected.
        However, this is not a problem
        since in this case the literal does not satisfy the clause
        and hence, a subset vertex $s^j_L$ with $\ell \in L$ must be selected.
        Thus, the literal vertex is adjacent to three selected vertices.
        If $\pi(x_i) = 0$, then the literal satisfies the clause
        and no subset vertex $s^j_L$ with $\ell \in L$ is selected;
        the literal vertex has exactly one selected neighbor.

        We conclude that every selected vertex of the graph
        has no selected neighbors,
        whereas every unselected vertex has either exactly one
        or exactly three selected neighbors.
    \end{claimproof}

    As a next step we show the reverse direction of the correctness.

    \begin{claim}
        \label{thm:all_off_reduction_correctness_2}
        If $G_\varphi$ has a solution for \ReflAllOff of size at most $n + m + 1$,
        then $\varphi$ is a yes-instance of \kSAT.
    \end{claim}
    \begin{claimproof}
        Consider a solution $S$ of the \ReflAllOff instance
        of size at most $n+m+1$.
        Recall that $S$ is a $(\sigma,\rho)$-set for $G_\varphi$
        where $\sigma = \EVEN$ and $\rho = \ODD$.

        In the negation gadget at least one vertex must be selected
        as $q_0$ and $q_2$ must be selected themselves
        or require one selected neighbor which can only be $q_1$.
        Similarly, in each variable gadget $V_i$,
        the set $S$ contains at least one vertex;
        vertex $\bar v_i$ is either selected itself
        or it requires a selected neighbor (which must be $v_i$).

        In each clause gadget $D^j$,
        we must also select at least one vertex which must be a subset vertex.
        Indeed, the subset vertex $s^j_\emptyset$ must either be selected
        or have a selected neighbor.
        As all neighbors of this subset vertex are also subset vertices,
        the claim follows.

        Hence, we see that any solution $S$ of size at most $n+m+1$
        must select \emph{exactly one} vertex of each variable gadget,
        as well as exactly one subset vertex of each clause gadget,
        and vertex $q_1$ of the negation gadget.

        Based on these observations we define an assignment $\pi$
        for the formula $\varphi$ by setting $\pi(x_i) \deff 1$
        if and only if $v_i \in S$ and $\pi(x_i) \deff 0$ otherwise.

        In the following we prove that $\pi$ satisfies $\varphi$.
        For this consider an arbitrary clause $C^j$.
        By the above discussion,
        we know that there is some selected subset vertex $s^j_L$
        of the clause gadget $D^j$.
        Since $L \subsetneq \numb{k}$,
        there is, by the construction of $G_\varphi$,
        a literal vertex $t^j_\ell$ where $\ell \in \numb{k} \setminus L$.
        Since this vertex $t^j_\ell$ is not selected,
        it must have an odd number of neighbors in $S$.
        We first consider the case that the corresponding literal $\lambda^j_\ell$
        is positive, that is, $\lambda^j_\ell = x_i$ for some variable $x_i$.
        In this case the only selected neighbor of~$t^j_\ell$ is~$v_i$.
        From $v_i \in S$ the definition of $\pi$ gives $\pi(x_i) = 1$
        which implies that the clause~$C^j$ is satisfied
        by literal~$\lambda^j_\ell$.
        Now consider the case when $\lambda^j_\ell = \lnot x_i$
        for some variable $x_i$.
        In this case vertex $t^j_\ell$ is adjacent to the selected vertex $q_1$
        of the negation gadget, by the construction of $G_\varphi$
        and the above observations.
        As the literal vertex is still unselected
        and only adjacent to unselected subset vertices
        and one additional variable vertex,
        the variable vertex $v_i$ cannot be selected.
        Hence, the definition of the assignment $\pi$ gives $\pi(x_i) = 0$
        which directly implies that the clause $C^j$ is satisfied
        by the literal $\lambda^j_\ell$.
    \end{claimproof}

    Before we combine all parts of the proof to obtain the lower bound,
    we first provide a bound on the pathwidth of the constructed graph.

    \begin{claim}
        \label{thm:all_off_pathwidth}
        $G_\varphi$ has pathwidth at most $2^k +k + n$.
        Moreover, a path decomposition of $G_\varphi$ of width at most  $2^k + k + n$ can be computed in polynomial-time.
    \end{claim}
    \begin{claimproof}
        If we delete vertex~$q_1$ and, for all $i \in \numb{n}$,
        the variable vertex $v_i$,
        then the graph decomposes into small components
        of size at most $2^k-1+k$.
        This allows us to get a path decomposition of small width.

        Concretely, we create a node for every variable gadget, clause gadget,
        and the negation gadget.
        We add the vertices of the corresponding gadgets to the bag of the node.
        We connect these nodes in an arbitrary way to form a path.
        As a last step we extend all bags by adding the vertices in
        $\{ v_i \mid i \in \numb{n}\} \cup\{q_1\}$.
        It is easy to see that this is a valid path decomposition.

        The bags corresponding to the variable gadgets have size $1 + (n + 1)$
        and the bags of the clause gadgets have size $2^k - 1 + k + (n + 1)$.
        Finally, the bag of the negation gadget has size $2 + (n + 1)$.
        Hence, the pathwidth of the graph is at most
        $\max( 2^k + k + n - 1, n + 2 ) \leq 2^k + k + n$.
    \end{claimproof}

    Recall that we fixed some $\epsilon > 0$
    and set $k$ to the smallest integer such that
    \kSAT does not have an algorithm with running time
    $(2-\epsilon)^n \cdot n^{\Oh(1)}$
    where $n$ is the number of variables.
    For a given \kSAT formula $\varphi$ with $n$ variables and $m$ clauses
    we constructed an \ReflAllOff instance $G_\varphi$
    that has a solution of size $n + m +1$ if and only if $\varphi$ is satisfiable,
    and a path decomposition of $G_\varphi$ of width at most $2 +k + k + n$, in polynomial-time (recall that $k$ is a constant depending only on $\epsilon$).

    Towards a contradiction,
    assume that the minimization variant of \ReflAllOff can be solved in time
    $(2-\epsilon)^\pw \cdot N^{\Oh(1)}$ on instances of size $N$.
    Applying this algorithm to the constructed instance $G_\varphi$ and asking for a solution of size at most $n+m+1$,
    yields, by using \cref{thm:all_off_reduction_correctness_1,%
        thm:all_off_reduction_correctness_2,thm:all_off_pathwidth},
    an algorithm for \kSAT with running time
    \[
        (2 - \epsilon)^\pw \cdot (n+m)^{\Oh(1)}
        \leq (2 - \epsilon)^{2^k+k+n} \cdot (n+m)^{\Oh(1)}
        = (2- \epsilon)^n \cdot (n+m)^{\Oh(1)}
    \]
    as the constant $k$ only depends on the fixed value $\epsilon$.
    This directly contradicts SETH and finishes the proof.
    \qedhere
\end{proof}

\subsection{Lower Bound for \AllOff}
\label{sec:alloff:nonreflexive}

In the previous section we have seen the lower bound
for \ReflAllOff.
In this section we focus on the non-reflexive version \AllOff instead.
Recall, that when representing this problem as \srDomSet
we have $\sigma=\rho=\ODD$,
that is, every vertex needs and odd number of selected neighbors
independent of whether the vertices are themselves selected or not.

We again use the ideas by Sutner~\cite{sutner1988additive}, as presented in \cite{FleischerY13},
which allows a reduction in a similar spirit as the previous one.
The main difference is that the subset vertices now do not form a clique.
As there are naturally other minor modifications we present the full proof.

\begin{theorem}
    \label{thm:allOff:nonreflexive:lowerTW}
    Unless SETH fails,
    for all $\epsilon > 0$,
    there is no algorithm for \AllOff that
    can decide in time $(2-\epsilon)^\pw \cdot |G|^{\Oh(1)}$
    whether there exists a solution
    of arbitrary size (the size is given as input)
    that is given with a path decomposition of width $\pw$.
\end{theorem}
\begin{proof}
    Similar to \cref{thm:allOff:reflexive:lowerTW},
    we prove the bound by a reduction from CNF-SAT.
    For this fix some $\epsilon > 0$
    and let $k$ be the smallest integer such that
    \kSAT has no $(2-\epsilon)^n \cdot (n+m)^{\Oh(1)}$ algorithm
    under SETH where $n$ is the number of variables
    and $m$ the number of clauses.

    Let $\varphi$ be an arbitrary \kSAT instance consisting of
    $n$ variables $x_1,\dots,x_n$ and $m$ clauses $C^1,\dots,C^m$ as input.%
    \footnote{%
        As for the proof of \cref{thm:allOff:reflexive:lowerTW}
        we assume purely for the ease of the presentation
        that each clause contains exactly $k$ literals.
    }
    We construct a graph $G_\varphi$
    that consists of variable and clause gadgets.
    See \cref{fig:allOff:nonreflexive:twLower}
    for an illustration of the construction.

    For every variable $x_i$ where $i \in \numb{n}$,
    the graph $G_\varphi$ contains a variable gadget $V_i$.
    Gadget $V_i$ consists of a path on three vertices $v_i,w_i,\bar v_i$ (in this order).

    For every clause $C^j = \lambda^j_1 \lor \dots \lor \lambda^j_k$
    where $j \in \numb{m}$, the graph $G_\varphi$ contains a clause gadget $D^j$.
    This gadget $D^j$ contains $k$ literal vertices $t^j_1,\dots,t^j_k$,
    that is, one distinguished vertex for every literal of the clause.
    Additionally, $D^j$ contains, for every subset $L \subsetneq \numb{k}$,
    a so called \emph{subset}-vertex $s^j_L$,
    that is, for every proper subset of the literals of the clause,
    there exists a vertex labeled with the subset (and the gadget index).
    These subset vertices are used to indicate which literals of the clause
    are \emph{not} satisfied by the encoded assignment.
    As a last vertex there is an additional vertex $h^j$ (for happy) in $D^j$.
    There are two different groups of edges in the gadget;
    first, each subset vertex $s^j_L$ is connected to each literal vertex $t^j_\ell$
    if and only if $\ell \in L$.
    Second, the vertex $h^j$ is connected to all subset vertices
    of the gadget $D^j$.

    As a last step of the construction we describe the edges
    encoding the appearance of variables in the clauses.
    Intuitively each literal vertex of the clause gadget is connected
    to the corresponding variable vertex of the variable gadgets.
    Formally, if the $\ell$th literal of the $j$th clause is positive,
    i.e., if $\lambda^j_\ell = x_i$ for some variable $x_i$,
    then vertex $t^j_\ell$ is adjacent to vertex $v_i$.
    If the $\ell$th literal is negative,
    i.e., if $\lambda^j_\ell = \neg x_i$ for some variable $x_i$,
    then vertex $t^j_\ell$ is again adjacent to vertex $v_i$
    but also to vertex $h^j$.

    This concludes the description of $G_\varphi$.
    We prove in the following that the \AllOff instance $G_\varphi$
    has a solution of size at most $2m + 2n$ if and only if $\varphi$ is satisfiable.

    \begin{figure}[t]
        \graphicspath{{figures/tikz}}
        \begin{subfigure}[t]{.48\textwidth}
            \centering
            \includegraphics[height=10ex]{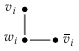}
            \caption{A depiction of the variable gadget.}
            \label{fig:variable_gadget}
        \end{subfigure}%
        \hfill%
        \begin{subfigure}[t]{.48\textwidth}
            \centering
            \includegraphics[height=25ex]{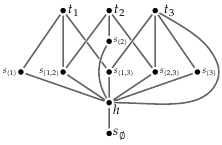}
            \caption{A depiction of the clause gadget for the clause $x_1 \lor x_2 \lor \neg x_3$.}
            \label{fig:clause_gadget}
        \end{subfigure}
        \caption{
            A depiction of a literal gadget and a clause gadget
            from the proof of the lower bound for \AllOff.
            Some indices are omitted for simplicity.
        }
        \label{fig:allOff:nonreflexive:twLower}
    \end{figure}

    We first show that if the \kSAT formula $\varphi$ is satisfiable,
    then the constructed \AllOff instance is a yes-instance.

    \begin{claim}
        \label{thm:reflexive_all_off_reduction_correctness_1}
        If $\varphi$ is a yes-instance of \kSAT,
        then $G_\varphi$ has a solution for \AllOff of size at most $2m + 2n$.
    \end{claim}
    \begin{claimproof}
        Consider a satisfying assignment $\pi$ for $\varphi$.
        We select the following vertices:
        \begin{itemize}
            \item
                  For all $i \in \numb{n}$,
                  if $\pi(x_i) = 1$, then we select the two vertices $v_i$ and $w_i$.
                  If $\pi(x_i) = 0$,
                  then we select $\bar v_i$ and $w_i$.

            \item
                  For all $j \in \numb{m}$,
                  we select in the clause gadget the vertex $h^j$.
                  Moreover, let $L \subsetneq \numb{k}$
                  be the set of all positions of the literals of the clause
                  that are \emph{not} satisfied by $\pi$.
                  Since the clause is satisfied by the assignment,
                  the set $L$ cannot contain all literals of the clause.
                  Thus, there is a subset vertex $s^j_L$ corresponding to this set $L$.
                  We select this vertex $s^j_L$.
        \end{itemize}
        Let $S$ denote the set of all selected vertices.
        Clearly, this set contains exactly $2m+2n$ vertices
        as we select exactly two vertices from every gadget.
        It remains to show that $S$ is indeed a solution,
        that is, every vertex of $G_\varphi$ has an odd number of neighbors in $S$.

        Consider any vertex of the variable gadgets
        and observe that none of them have a selected neighbor outside the gadget.
        With this it is easy to see that each vertex of a variable gadget
        has exactly one selected neighbor in $S$.

        Each subset vertex of a clause gadget $D^j$
        has exactly one selected neighbor, namely vertex $h^j$.
        Furthermore, vertex $h^j$ has exactly one selected neighbor,
        namely the selected subset vertex of the clause gadget.
        Now consider an arbitrary literal vertex $t^j_\ell$.
        First assume that the corresponding literal is positive,
        that is, $\lambda^j_\ell = x_i$ for some variable $x_i$.
        If this literal is satisfied,
        the vertex $v_i$ in the variable gadget is selected,
        otherwise if the literal is not satisfied,
        then vertex~$t^j_\ell$ is a neighbor of the selected subset vertex.

        Now assume that the corresponding literal is negative,
        that is, $\lambda^j_\ell = \lnot x_i$ for some variable~$x_i$.
        If the variable $x_i$ is not satisfied by $\pi$,
        then the only selected neighbor of $t^j_\ell$ is $h^j$.
        If the variable $x_i$ is satisfied by $\pi$,
        then the literal $\lambda^j_\ell$ is not satisfied
        which implies that the literal vertex is adjacent
        to the selected subset vertex of this clause gadget.
        Moreover, vertex $t^j_\ell$ is adjacent to $v_i$
        and thus, to three selected vertices which is a valid number.

        We conclude the proof by observing that every vertex
        has either one or three selected neighbors and thus,
        we constructed a valid solution.
    \end{claimproof}

    As a next step we prove the reverse direction of the correctness.

    \begin{claim}
        \label{thm:reflexive_all_off_reduction_correctness_2}
        If $G_\varphi$ has a solution for \AllOff of size $2m + 2n$,
        then $\varphi$ is a yes-instance of \kSAT.
    \end{claim}
    \begin{claimproof}
        Consider a solution $S$ to the \AllOff instance
        of size at most $2m + 2n$.
        Recall that every vertex of $G_\varphi$ is adjacent
        to an odd number of vertices in $S$.

        We first start with some observations about the solution $S$.
        In each variable gadget $V_i$,
        vertex $\bar v_i$ must have a neighbor that is in $S$, and thus, $w_i$ is also in $S$.
        Then, vertex $w_i$ requires a selected neighbor too, so either $v_i$ or $\bar v_i$ must be selected as well.
        Hence, at least two vertices must be selected from each variable gadget.

        In each clause gadget $D^j$, the vertex $s^j_\emptyset$
        must have a selected neighbor which forces its only neighbor,
        which is vertex $h^j$, to be selected.
        The vertex $h^j$ also requires a selected neighbor.
        If this neighbor was a literal vertex, say vertex $t^j_\ell$, then, the subset vertex $s^j_{\{\ell\}}$ would have exactly two selected neighbors, namely $h^j$ and $t^j_\ell$.
        As $2$ is neither in $\sigma$ nor in $\rho$, we see that the selected neighbor is not a literal vertex, and must thus be a subset vertex.
        From the given bound on the solution size,
        we conclude that in every variable gadget and every clause gadget
        exactly two vertices are selected.
        Moreover, from each clause gadget $D^j$ exactly one subset vertex, and vertex $h^j$ are selected.

        We define the assignment $\pi$ for the variables of $\varphi$
        such that $\pi(x_i) = 1$ if and only if $v_i \in S$
        and $\pi(x_i)=0$ otherwise.

        It remains to show that $\pi$ satisfies $\varphi$.
        To prove this, consider an arbitrary clause $C^j$.
        Let $s^j_L$ be the selected subset vertex from the clause gadget $D^j$.
        As $L \neq \numb{k}$ by the construction of $G_\varphi$,
        there is some literal vertex $t^j_\ell$ such that $\ell \notin L$.
        If the corresponding literal is positive,
        i.e., if $\lambda^j_\ell = x_i$,
        then $t^j_\ell$ is not adjacent to $h^j$.
        As $t^j_\ell$ must have one selected neighbor in $S$
        and since the vertex is not adjacent to any selected subset vertex,
        the only remaining neighbor of $t^j_\ell$, i.e.,
        vertex $v_i$, must be selected
        which implies that $\pi$ was defined such that $\pi(x_i) = 1$.

        If the literal is negative,
        i.e., if $\lambda^j_\ell = \lnot x_i$,
        then the vertex $t^j_\ell$ is adjacent to $h^j$
        by the construction of $G_\varphi$.
        Since $h^j$ is selected,
        vertex $t^j_\ell$ cannot have further selected neighbors
        as all other adjacent subset vertices are unselected.
        Hence, the vertex $v_i$ is also unselected
        which implies that, by the definition of $\pi$,
        the variable $x_i$ is not satisfied but the literal $\lambda^j_\ell$
        is satisfied which makes the clause $C^j$ true.
    \end{claimproof}

    As a last step we prove a bound on the pathwidth of the constructed graph.

    \begin{claim}
        \label{thm:reflexive_all_off_pathwidth}
        $G_\varphi$ has pathwidth at most $n+k+1$.
        Moreover, a path decomposition of $G_\varphi$ of width at most $n + k + 1$ can be computed in polynomial-time.
    \end{claim}
    \begin{claimproof}
        Intuitively the idea is as follows.
        If we delete all the variable vertices $v_i$ for all $i \in \numb{n}$
        in the variable gadgets, the graph decomposes into small components.
        We use this to construct a path decomposition in the following
        by providing a node search strategy (see, for example, \cite[Section~7.5]{cyganParameterizedAlgorithms} or \cite{FominT08_annotated_bibliography_graph_searching}).

        We start by placing one searcher on each vertex $v_i$
        for every $i \in \numb{n}$.
        Each of the variable gadget can be cleaned
        by using $2$ additional searchers which we just place on all vertices.

        For the clause gadgets we use a more complex approach to clean all vertices.
        Fix a clause gadget $D^j$ for this.
        We first place $k$ new searchers on the $k$ literal vertices of the gadget
        and one more searcher on the vertex $h^j$.
        The remaining subset vertices can then be cleaned
        by using one additional searcher
        which we put one subset vertex after the other.
        Repeating this procedure for all clause gadgets cleans the entire graph.

        This approach uses at most $n+k+2$ searchers simultaneously, and essentially directly provides a path decomposition.
        Thus, the claim follows immediately.
    \end{claimproof}

    Recall that we fixed some $\epsilon > 0$
    and chose $k$ as the smallest integer such that
    \kSAT has no $(2-\epsilon)^n \cdot (n+m)^{\Oh(1)}$ algorithm
    under SETH where $n$ is the number of variables
    and $m$ the number of clauses.
    For a given \kSAT instance $\varphi$ with $n$ variables and $m$ clauses
    we constructed an \AllOff instance $G_\varphi$ that has a solution of size $2m + 2n$ if and only if $\varphi$ is satisfiable,
    together with a path decomposition of width at most $n + k + 1$,
    in polynomial-time (recall that $k$ is a constant only depending on $\varepsilon$).

    For the sake of a contradiction, now assume that the minimization version of \AllOff
    can be solved in time $(2-\epsilon)^\pw \cdot N^{\Oh(1)}$
    on graphs of size $N$.
    If we apply this algorithm to the constructed instance and ask for a solution of size at most $2m +2n$,
    we also solve, by \cref{thm:reflexive_all_off_reduction_correctness_1,%
        thm:reflexive_all_off_reduction_correctness_2,%
        thm:reflexive_all_off_pathwidth} the \kSAT instance in time
    \[
        (2-\epsilon)^\pw \cdot N^{\Oh(1)}
        \leq (2-\epsilon)^{n+k+1} \cdot (n+m)^{\Oh(1)}
        = (2-\epsilon)^{n} \cdot (n+m)^{\Oh(1)}
    \]
    as $k$ depends only on the fixed value $\epsilon$
    and thus, only contributes a constant factor to the running time.
    This then directly contradicts SETH and finishes the proof.
\end{proof}

\thmLightsOutLowerPW
\begin{proof}
    Follows from
    \cref{thm:allOff:reflexive:lowerTW,thm:allOff:nonreflexive:lowerTW}.
\end{proof}

\bibliography{main}

\newcommand{\etalchar}[1]{$^{#1}$}
\begin{thebibliography}{FMMI{\etalchar{+}}25b}

\bibitem[ABF{\etalchar{+}}02]{AlberBFKN02}
Jochen Alber, Hans~L. Bodlaender, Henning Fernau, Ton Kloks, and Rolf Niedermeier.
\newblock Fixed parameter algorithms for dominating set and related problems on planar graphs.
\newblock {\em Algorithmica}, 33(4):461--493, 2002.
\newblock URL: \url{https://doi.org/10.1007/s00453-001-0116-5}, \href {https://doi.org/10.1007/S00453-001-0116-5} {\path{doi:10.1007/S00453-001-0116-5}}.

\bibitem[AF98]{AndersonF98}
Marlow Anderson and Todd Feil.
\newblock Turning lights out with linear algebra.
\newblock {\em Mathematics Magazine}, 71(4):300--303, 1998.
\newblock \href {https://doi.org/10.1080/0025570X.1998.11996658} {\path{doi:10.1080/0025570X.1998.11996658}}.

\bibitem[AKK11]{doi:10.1089/cmb.2010.0280}
Ferhat Ay, Manolis Kellis, and Tamer Kahveci.
\newblock Submap: Aligning metabolic pathways with subnetwork mappings.
\newblock {\em J. Comput. Biol.}, 18(3):219--235, 2011.
\newblock PMID: 21385030.
\newblock \href {https://doi.org/10.1089/cmb.2010.0280} {\path{doi:10.1089/cmb.2010.0280}}.

\bibitem[BBH21]{BermanBH21}
Abraham Berman, Franziska Borer, and Norbert Hungerb{\"u}hler.
\newblock Lights out on graphs.
\newblock {\em Mathematische Semesterberichte}, 68(2):237--255, 2021.
\newblock \href {https://doi.org/10.1007/s00591-021-00297-5} {\path{doi:10.1007/s00591-021-00297-5}}.

\bibitem[BL16]{BorradaileL16}
Glencora Borradaile and Hung Le.
\newblock Optimal dynamic program for r-domination problems over tree decompositions.
\newblock In Jiong Guo and Danny Hermelin, editors, {\em 11th International Symposium on Parameterized and Exact Computation, {IPEC} 2016, August 24-26, 2016, Aarhus, Denmark}, volume~63 of {\em LIPIcs}, pages 8:1--8:23. Schloss Dagstuhl - Leibniz-Zentrum f{\"{u}}r Informatik, 2016.
\newblock URL: \url{https://doi.org/10.4230/LIPIcs.IPEC.2016.8}, \href {https://doi.org/10.4230/LIPICS.IPEC.2016.8} {\path{doi:10.4230/LIPICS.IPEC.2016.8}}.

\bibitem[BMOR18]{BennettMOR18}
Michael~A. Bennett, Greg Martin, Kevin O'Bryant, and Andrew Rechnitzer.
\newblock Explicit bounds for primes in arithmetic progressions.
\newblock {\em Illinois J. Math.}, 62(1-4):427--532, 2018.
\newblock \href {https://doi.org/10.1215/ijm/1552442669} {\path{doi:10.1215/ijm/1552442669}}.

\bibitem[BNNS16]{DBLP:conf/gis/BarthNNS16}
Lukas Barth, Benjamin Niedermann, Martin N{\"{o}}llenburg, and Darren Strash.
\newblock Temporal map labeling: a new unified framework with experiments.
\newblock In Siva Ravada, Mohammed~Eunus Ali, Shawn~D. Newsam, Matthias Renz, and Goce Trajcevski, editors, {\em Proceedings of the 24th {ACM} {SIGSPATIAL} International Conference on Advances in Geographic Information Systems, {GIS} 2016, Burlingame, California, USA, October 31 - November 3, 2016}, pages 23:1--23:10. {ACM}, 2016.
\newblock \href {https://doi.org/10.1145/2996913.2996957} {\path{doi:10.1145/2996913.2996957}}.

\bibitem[Bod93]{DBLP:conf/stoc/Bodlaender93}
Hans~L. Bodlaender.
\newblock A linear time algorithm for finding tree-decompositions of small treewidth.
\newblock In S.~Rao Kosaraju, David~S. Johnson, and Alok Aggarwal, editors, {\em Proceedings of the Twenty-Fifth Annual {ACM} Symposium on Theory of Computing, May 16-18, 1993, San Diego, CA, {USA}}, pages 226--234. {ACM}, 1993.
\newblock \href {https://doi.org/10.1145/167088.167161} {\path{doi:10.1145/167088.167161}}.

\bibitem[BTV13]{Bui-XuanTV13}
Binh{-}Minh Bui{-}Xuan, Jan~Arne Telle, and Martin Vatshelle.
\newblock Fast dynamic programming for locally checkable vertex subset and vertex partitioning problems.
\newblock {\em Theor. Comput. Sci.}, 511:66--76, 2013.
\newblock URL: \url{https://doi.org/10.1016/j.tcs.2013.01.009}, \href {https://doi.org/10.1016/J.TCS.2013.01.009} {\path{doi:10.1016/J.TCS.2013.01.009}}.

\bibitem[CFK{\etalchar{+}}15]{cyganParameterizedAlgorithms}
Marek Cygan, Fedor~V. Fomin, Lukasz Kowalik, Daniel Lokshtanov, D{\'{a}}niel Marx, Marcin Pilipczuk, Michal Pilipczuk, and Saket Saurabh.
\newblock {\em Parameterized Algorithms}.
\newblock Springer, 2015.
\newblock \href {https://doi.org/10.1007/978-3-319-21275-3} {\path{doi:10.1007/978-3-319-21275-3}}.

\bibitem[CH75]{1083994}
E.~Cockayne and S.~Hedetniemi.
\newblock Optimal domination in graphs.
\newblock {\em IEEE Transactions on Circuits and Systems}, 22(11):855--857, 1975.
\newblock \href {https://doi.org/10.1109/TCS.1975.1083994} {\path{doi:10.1109/TCS.1975.1083994}}.

\bibitem[Cha10]{Chapelle10v5}
Mathieu Chapelle.
\newblock Parameterized complexity of generalized domination problems on bounded tree-width graphs.
\newblock {\em CoRR}, abs/1004.2642v5, 2010.
\newblock URL: \url{https://arxiv.org/abs/1004.2642v5}, \href {https://arxiv.org/abs/1004.2642v5} {\path{arXiv:1004.2642v5}}, \href {https://doi.org/10.48550/arxiv.1004.2642} {\path{doi:10.48550/arxiv.1004.2642}}.

\bibitem[Cha11]{Chapelle11}
Mathieu Chapelle.
\newblock {\em D{\'{e}}compositions de graphes : quelques limites et obstructions. (Graphs decompositions: some limites and obstructions)}.
\newblock PhD thesis, University of Orl{\'{e}}ans, France, 2011.
\newblock URL: \url{https://tel.archives-ouvertes.fr/tel-00659666}.

\bibitem[CIP09]{cip09}
Chris Calabro, Russell Impagliazzo, and Ramamohan Paturi.
\newblock The complexity of satisfiability of small depth circuits.
\newblock In Jianer Chen and Fedor~V. Fomin, editors, {\em Parameterized and Exact Computation, 4th International Workshop, {IWPEC} 2009, Copenhagen, Denmark, September 10-11, 2009, Revised Selected Papers}, volume 5917 of {\em Lecture Notes in Computer Science}, pages 75--85. Springer, 2009.
\newblock \href {https://doi.org/10.1007/978-3-642-11269-0_6} {\path{doi:10.1007/978-3-642-11269-0_6}}.

\bibitem[CJ03]{caroNonzModDominating2003}
Yair Caro and Michael~S. Jacobson.
\newblock On non-z(mod k) dominating sets.
\newblock {\em Discuss. Math. Graph Theory}, 23(1):189--199, 2003.
\newblock \href {https://doi.org/10.7151/dmgt.1195} {\path{doi:10.7151/dmgt.1195}}.

\bibitem[CKG01]{caroOddResidueDomination2001}
Yair Caro, William~F. Klostermeyer, and John~L. Goldwasser.
\newblock Odd and residue domination numbers of a graph.
\newblock {\em Discuss. Math. Graph Theory}, 21(1):119--136, 2001.
\newblock \href {https://doi.org/10.7151/dmgt.1137} {\path{doi:10.7151/dmgt.1137}}.

\bibitem[CM16]{CurticapeanM16}
Radu Curticapean and D{\'{a}}niel Marx.
\newblock Tight conditional lower bounds for counting perfect matchings on graphs of bounded treewidth, cliquewidth, and genus.
\newblock In Robert Krauthgamer, editor, {\em Proceedings of the Twenty-Seventh Annual {ACM-SIAM} Symposium on Discrete Algorithms, {SODA} 2016, Arlington, VA, USA, January 10-12, 2016}, pages 1650--1669. {SIAM}, 2016.
\newblock URL: \url{https://doi.org/10.1137/1.9781611974331.ch113}, \href {https://doi.org/10.1137/1.9781611974331.CH113} {\path{doi:10.1137/1.9781611974331.CH113}}.

\bibitem[CNP{\etalchar{+}}22]{DBLP:journals/talg/CyganNPPRW22}
Marek Cygan, Jesper Nederlof, Marcin Pilipczuk, Michal Pilipczuk, Johan M.~M. van Rooij, and Jakub~Onufry Wojtaszczyk.
\newblock Solving connectivity problems parameterized by treewidth in single exponential time.
\newblock {\em {ACM} Trans. Algorithms}, 18(2):17:1--17:31, 2022.
\newblock \href {https://doi.org/10.1145/3506707} {\path{doi:10.1145/3506707}}.

\bibitem[Cou90]{Courcelle90}
Bruno Courcelle.
\newblock The monadic second-order logic of graphs. i. recognizable sets of finite graphs.
\newblock {\em Inf. Comput.}, 85(1):12--75, 1990.
\newblock \href {https://doi.org/10.1016/0890-5401(90)90043-H} {\path{doi:10.1016/0890-5401(90)90043-H}}.

\bibitem[CP13]{DBLP:conf/fct/CattaneoP13}
David Cattan{\'{e}}o and Simon Perdrix.
\newblock Parameterized complexity of weak odd domination problems.
\newblock In Leszek Gasieniec and Frank Wolter, editors, {\em Fundamentals of Computation Theory - 19th International Symposium, {FCT} 2013, Liverpool, UK, August 19-21, 2013. Proceedings}, volume 8070 of {\em Lecture Notes in Computer Science}, pages 107--120. Springer, 2013.
\newblock \href {https://doi.org/10.1007/978-3-642-40164-0_13} {\path{doi:10.1007/978-3-642-40164-0_13}}.

\bibitem[CP14]{DBLP:conf/tamc/CattaneoP14}
David Cattan{\'{e}}o and Simon Perdrix.
\newblock The parameterized complexity of domination-type problems and application to linear codes.
\newblock In T.~V. Gopal, Manindra Agrawal, Angsheng Li, and S.~Barry Cooper, editors, {\em Theory and Applications of Models of Computation - 11th Annual Conference, {TAMC} 2014, Chennai, India, April 11-13, 2014. Proceedings}, volume 8402 of {\em Lecture Notes in Computer Science}, pages 86--103. Springer, 2014.
\newblock \href {https://doi.org/10.1007/978-3-319-06089-7_7} {\path{doi:10.1007/978-3-319-06089-7_7}}.

\bibitem[DMP{\etalchar{+}}03]{DBLP:conf/soda/DubhashiMPRS03}
Devdatt~P. Dubhashi, Alessandro Mei, Alessandro Panconesi, Jaikumar Radhakrishnan, and Aravind Srinivasan.
\newblock Fast distributed algorithms for (weakly) connected dominating sets and linear-size skeletons.
\newblock In {\em Proceedings of the Fourteenth Annual {ACM-SIAM} Symposium on Discrete Algorithms, January 12-14, 2003, Baltimore, Maryland, {USA}}, pages 717--724. {ACM/SIAM}, 2003.
\newblock URL: \url{http://dl.acm.org/citation.cfm?id=644108.644226}.

\bibitem[DW01]{dodisUniversalConfigurationsLightflipping2001}
Yevgeniy Dodis and Peter Winkler.
\newblock Universal configurations in light-flipping games.
\newblock In S.~Rao Kosaraju, editor, {\em Proceedings of the Twelfth Annual Symposium on Discrete Algorithms, January 7-9, 2001, Washington, DC, {USA}}, pages 926--927. {ACM/SIAM}, 2001.
\newblock URL: \url{http://dl.acm.org/citation.cfm?id=365411.365812}.

\bibitem[FGK{\etalchar{+}}09a]{FominGKKL09}
Fedor~V. Fomin, Petr~A. Golovach, Jan Kratochv{\'{\i}}l, Dieter Kratsch, and Mathieu Liedloff.
\newblock Sort and search: Exact algorithms for generalized domination.
\newblock {\em Inf. Process. Lett.}, 109(14):795--798, 2009.
\newblock URL: \url{https://doi.org/10.1016/j.ipl.2009.03.023}, \href {https://doi.org/10.1016/J.IPL.2009.03.023} {\path{doi:10.1016/J.IPL.2009.03.023}}.

\bibitem[FGK09b]{FominGK09}
Fedor~V. Fomin, Fabrizio Grandoni, and Dieter Kratsch.
\newblock A measure {\&} conquer approach for the analysis of exact algorithms.
\newblock {\em J. {ACM}}, 56(5):25:1--25:32, 2009.
\newblock \href {https://doi.org/10.1145/1552285.1552286} {\path{doi:10.1145/1552285.1552286}}.

\bibitem[FLPS16]{FominLPS16}
Fedor~V. Fomin, Daniel Lokshtanov, Fahad Panolan, and Saket Saurabh.
\newblock Efficient computation of representative families with applications in parameterized and exact algorithms.
\newblock {\em J. {ACM}}, 63(4):29:1--29:60, 2016.
\newblock \href {https://doi.org/10.1145/2886094} {\path{doi:10.1145/2886094}}.

\bibitem[FMI{\etalchar{+}}23]{FockeMINSSW23}
Jacob Focke, D{\'{a}}niel Marx, Fionn~Mc Inerney, Daniel Neuen, Govind~S. Sankar, Philipp Schepper, and Philip Wellnitz.
\newblock Tight complexity bounds for counting generalized dominating sets in bounded-treewidth graphs.
\newblock In Nikhil Bansal and Viswanath Nagarajan, editors, {\em Proceedings of the 2023 {ACM-SIAM} Symposium on Discrete Algorithms, {SODA} 2023, Florence, Italy, January 22-25, 2023}, pages 3664--3683. {SIAM}, 2023.
\newblock \href {https://doi.org/10.1137/1.9781611977554.ch140} {\path{doi:10.1137/1.9781611977554.ch140}}.

\bibitem[FMMI{\etalchar{+}}25a]{focke_tight_2023_ub}
Jacob Focke, D\'{a}niel Marx, Fionn Mc~Inerney, Daniel Neuen, Govind~S. Sankar, Philipp Schepper, and Philip Wellnitz.
\newblock Tight complexity bounds for counting generalized dominating sets in bounded-treewidth graphs {Part I:} {A}lgorithmic results.
\newblock {\em ACM Trans. Algorithms}, April 2025.
\newblock Just Accepted.
\newblock \href {https://doi.org/10.1145/3731452} {\path{doi:10.1145/3731452}}.

\bibitem[FMMI{\etalchar{+}}25b]{focke_tight_2023_lb}
Jacob Focke, D\'{a}niel Marx, Fionn Mc~Inerney, Daniel Neuen, Govind~S. Sankar, Philipp Schepper, and Philip Wellnitz.
\newblock Tight complexity bounds for counting generalized dominating sets in bounded-treewidth graphs {Part II}: {H}ardness results.
\newblock {\em ACM Trans. Comput. Theory}, 17(2), April 2025.
\newblock \href {https://doi.org/10.1145/3708509} {\path{doi:10.1145/3708509}}.

\bibitem[FT06]{FominT06}
Fedor~V. Fomin and Dimitrios~M. Thilikos.
\newblock Dominating sets in planar graphs: Branch-width and exponential speed-up.
\newblock {\em {SIAM} J. Comput.}, 36(2):281--309, 2006.
\newblock \href {https://doi.org/10.1137/S0097539702419649} {\path{doi:10.1137/S0097539702419649}}.

\bibitem[FT08]{FominT08_annotated_bibliography_graph_searching}
Fedor~V. Fomin and Dimitrios~M. Thilikos.
\newblock An annotated bibliography on guaranteed graph searching.
\newblock {\em Theor. Comput. Sci.}, 399(3):236--245, 2008.
\newblock URL: \url{https://doi.org/10.1016/j.tcs.2008.02.040}, \href {https://doi.org/10.1016/J.TCS.2008.02.040} {\path{doi:10.1016/J.TCS.2008.02.040}}.

\bibitem[FY13]{FleischerY13}
Rudolf Fleischer and Jiajin Yu.
\newblock A survey of the game {``Lights Out!''}.
\newblock In Andrej Brodnik, Alejandro L{\'{o}}pez{-}Ortiz, Venkatesh Raman, and Alfredo Viola, editors, {\em Space-Efficient Data Structures, Streams, and Algorithms - Papers in Honor of J. Ian Munro on the Occasion of His 66th Birthday}, volume 8066 of {\em Lecture Notes in Computer Science}, pages 176--198. Springer, 2013.
\newblock \href {https://doi.org/10.1007/978-3-642-40273-9_13} {\path{doi:10.1007/978-3-642-40273-9_13}}.

\bibitem[GC18]{gagarinMultipleDominationModels2018}
Andrei Gagarin and Padraig Corcoran.
\newblock Multiple domination models for placement of electric vehicle charging stations in road networks.
\newblock {\em Comput. Oper. Res.}, 96:69--79, 2018.
\newblock \href {https://doi.org/10.1016/j.cor.2018.03.014} {\path{doi:10.1016/j.cor.2018.03.014}}.

\bibitem[GH08]{gassnerParityDominationProblem2008}
Elisabeth Gassner and Johannes Hatzl.
\newblock A parity domination problem in graphs with bounded treewidth and distance-hereditary graphs.
\newblock {\em Computing}, 82(2-3):171--187, jul 2008.
\newblock \href {https://doi.org/10.1007/s00607-008-0005-8} {\path{doi:10.1007/s00607-008-0005-8}}.

\bibitem[GKLT09]{DBLP:journals/talg/GaspersKLT09}
Serge Gaspers, Dieter Kratsch, Mathieu Liedloff, and Ioan Todinca.
\newblock Exponential time algorithms for the minimum dominating set problem on some graph classes.
\newblock {\em {ACM} Trans. Algorithms}, 6(1):9:1--9:21, 2009.
\newblock \href {https://doi.org/10.1145/1644015.1644024} {\path{doi:10.1145/1644015.1644024}}.

\bibitem[GKS12]{golovachParameterizedComplexityGeneralized2012a}
Petr~A. Golovach, Jan Kratochv{\'{\i}}l, and Ondrej Such{\'{y}}.
\newblock Parameterized complexity of generalized domination problems.
\newblock {\em Discret. Appl. Math.}, 160(6):780--792, 2012.
\newblock URL: \url{https://doi.org/10.1016/j.dam.2010.11.012}, \href {https://doi.org/10.1016/J.DAM.2010.11.012} {\path{doi:10.1016/J.DAM.2010.11.012}}.

\bibitem[GKT97]{GoldwasserKT97}
John Goldwasser, William Klostermeyer, and George Trapp.
\newblock Characterizing switch-setting problems.
\newblock {\em Linear and Multilinear Algebra}, 43(1-3):121--135, 1997.
\newblock \href {https://doi.org/10.1080/03081089708818520} {\path{doi:10.1080/03081089708818520}}.

\bibitem[Gre24]{master_thesis_greilhuber}
Jakob Greilhuber.
\newblock Shining light on periodic dominating sets in bounded-treewidth graphs.
\newblock Master's thesis, TU Wien, 2024.
\newblock \href{https://doi.org/10.34726/hss.2024.120579}{\path{doi:10.34726/hss.2024.120579}}.

\bibitem[GSW25]{greilhuberResidueDominationBoundedTreewidth2025}
Jakob Greilhuber, Philipp Schepper, and Philip Wellnitz.
\newblock Residue domination in bounded-treewidth graphs.
\newblock In Olaf Beyersdorff, Michal Pilipczuk, Elaine Pimentel, and Kim~Thang Nguyen, editors, {\em 42nd International Symposium on Theoretical Aspects of Computer Science, STACS 2025, March 4-7, 2025, Jena, Germany}, volume 327 of {\em LIPIcs}, pages 41:1--41:20. Schloss Dagstuhl - Leibniz-Zentrum f{\"u}r Informatik, 2025.
\newblock URL: \url{https://doi.org/10.4230/LIPIcs.STACS.2025.41}, \href {https://doi.org/10.4230/LIPICS.STACS.2025.41} {\path{doi:10.4230/LIPICS.STACS.2025.41}}.

\bibitem[HKT00a]{HalldorssonKT00}
Magn{\'{u}}s~M. Halld{\'{o}}rsson, Jan Kratochv{\'{\i}}l, and Jan~Arne Telle.
\newblock Independent sets with domination constraints.
\newblock {\em Discret. Appl. Math.}, 99(1-3):39--54, 2000.
\newblock \href {https://doi.org/10.1016/S0166-218X(99)00124-9} {\path{doi:10.1016/S0166-218X(99)00124-9}}.

\bibitem[HKT00b]{HalldorssonKT00-mod-2}
Magn{\'{u}}s~M. Halld{\'{o}}rsson, Jan Kratochv{\'{\i}}l, and Jan~Arne Telle.
\newblock Mod-2 independence and domination in graphs.
\newblock {\em Int. J. Found. Comput. Sci.}, 11(3):355--363, 2000.
\newblock \href {https://doi.org/10.1142/S0129054100000272} {\path{doi:10.1142/S0129054100000272}}.

\bibitem[HL90]{DBLP:journals/dm/HedetniemiL90a}
Stephen~T. Hedetniemi and Renu~C. Laskar.
\newblock Bibliography on domination in graphs and some basic definitions of domination parameters.
\newblock {\em Discret. Math.}, 86(1-3):257--277, 1990.
\newblock \href {https://doi.org/10.1016/0012-365X(90)90365-O} {\path{doi:10.1016/0012-365X(90)90365-O}}.

\bibitem[IP01]{ip01}
Russell Impagliazzo and Ramamohan Paturi.
\newblock On the complexity of k-sat.
\newblock {\em J. Comput. Syst. Sci.}, 62(2):367--375, 2001.
\newblock \href {https://doi.org/10.1006/jcss.2000.1727} {\path{doi:10.1006/jcss.2000.1727}}.

\bibitem[Kor21]{DBLP:conf/focs/Korhonen21}
Tuukka Korhonen.
\newblock A single-exponential time 2-approximation algorithm for treewidth.
\newblock In {\em 62nd {IEEE} Annual Symposium on Foundations of Computer Science, {FOCS} 2021, Denver, CO, USA, February 7-10, 2022}, pages 184--192. {IEEE}, 2021.
\newblock \href {https://doi.org/10.1109/FOCS52979.2021.00026} {\path{doi:10.1109/FOCS52979.2021.00026}}.

\bibitem[KPS14]{DBLP:conf/soda/KhullerPS14}
Samir Khuller, Manish Purohit, and Kanthi~K. Sarpatwar.
\newblock Analyzing the optimal neighborhood: Algorithms for budgeted and partial connected dominating set problems.
\newblock In Chandra Chekuri, editor, {\em Proceedings of the Twenty-Fifth Annual {ACM-SIAM} Symposium on Discrete Algorithms, {SODA} 2014, Portland, Oregon, USA, January 5-7, 2014}, pages 1702--1713. {SIAM}, 2014.
\newblock \href {https://doi.org/10.1137/1.9781611973402.123} {\path{doi:10.1137/1.9781611973402.123}}.

\bibitem[KW20]{KratschW20}
Stefan Kratsch and Magnus Wahlstr{\"{o}}m.
\newblock Representative sets and irrelevant vertices: New tools for kernelization.
\newblock {\em J. {ACM}}, 67(3):16:1--16:50, 2020.
\newblock \href {https://doi.org/10.1145/3390887} {\path{doi:10.1145/3390887}}.

\bibitem[Lam20]{Lampis20}
Michael Lampis.
\newblock Finer tight bounds for coloring on clique-width.
\newblock {\em {SIAM} J. Discret. Math.}, 34(3):1538--1558, 2020.
\newblock \href {https://doi.org/10.1137/19M1280326} {\path{doi:10.1137/19M1280326}}.

\bibitem[LMS18]{lokshtanovKnownAlgorithmsGraphs2018}
Daniel Lokshtanov, D{\'{a}}niel Marx, and Saket Saurabh.
\newblock Known algorithms on graphs of bounded treewidth are probably optimal.
\newblock {\em {ACM} Trans. Algorithms}, 14(2):13:1--13:30, apr 2018.
\newblock \href {https://doi.org/10.1145/3170442} {\path{doi:10.1145/3170442}}.

\bibitem[Mar20]{DBLP:conf/birthday/Marx20}
D{\'{a}}niel Marx.
\newblock Four shorts stories on surprising algorithmic uses of treewidth.
\newblock In Fedor~V. Fomin, Stefan Kratsch, and Erik~Jan van Leeuwen, editors, {\em Treewidth, Kernels, and Algorithms - Essays Dedicated to Hans L. Bodlaender on the Occasion of His 60th Birthday}, volume 12160 of {\em Lecture Notes in Computer Science}, pages 129--144. Springer, 2020.
\newblock \href {https://doi.org/10.1007/978-3-030-42071-0_10} {\path{doi:10.1007/978-3-030-42071-0_10}}.

\bibitem[MFMP20]{DBLP:journals/tcs/MeybodiFMP20}
Mohsen~Alambardar Meybodi, Fedor~V. Fomin, Amer~E. Mouawad, and Fahad Panolan.
\newblock On the parameterized complexity of [1, \emph{j}]-domination problems.
\newblock {\em Theor. Comput. Sci.}, 804:207--218, 2020.
\newblock \href {https://doi.org/10.1016/j.tcs.2019.11.032} {\path{doi:10.1016/j.tcs.2019.11.032}}.

\bibitem[MR19]{DBLP:conf/csr/MisraR19}
Neeldhara Misra and Piyush Rathi.
\newblock The parameterized complexity of dominating set and friends revisited for structured graphs.
\newblock In Ren{\'{e}} van Bevern and Gregory Kucherov, editors, {\em Computer Science - Theory and Applications - 14th International Computer Science Symposium in Russia, {CSR} 2019, Novosibirsk, Russia, July 1-5, 2019, Proceedings}, volume 11532 of {\em Lecture Notes in Computer Science}, pages 299--310. Springer, 2019.
\newblock \href {https://doi.org/10.1007/978-3-030-19955-5_26} {\path{doi:10.1007/978-3-030-19955-5_26}}.

\bibitem[MSS21]{MarxSS21}
D{\'{a}}niel Marx, Govind~S. Sankar, and Philipp Schepper.
\newblock Degrees and gaps: Tight complexity results of general factor problems parameterized by treewidth and cutwidth.
\newblock In Nikhil Bansal, Emanuela Merelli, and James Worrell, editors, {\em 48th International Colloquium on Automata, Languages, and Programming, {ICALP} 2021, July 12-16, 2021, Glasgow, Scotland (Virtual Conference)}, volume 198 of {\em LIPIcs}, pages 95:1--95:20. Schloss Dagstuhl - Leibniz-Zentrum f{\"{u}}r Informatik, 2021.
\newblock URL: \url{https://doi.org/10.4230/LIPIcs.ICALP.2021.95}, \href {https://doi.org/10.4230/LIPICS.ICALP.2021.95} {\path{doi:10.4230/LIPICS.ICALP.2021.95}}.

\bibitem[MSS22]{MarxSS22}
D{\'{a}}niel Marx, Govind~S. Sankar, and Philipp Schepper.
\newblock Anti-factor is {FPT} parameterized by treewidth and list size (but counting is hard).
\newblock In Holger Dell and Jesper Nederlof, editors, {\em 17th International Symposium on Parameterized and Exact Computation, {IPEC} 2022, September 7-9, 2022, Potsdam, Germany}, volume 249 of {\em LIPIcs}, pages 22:1--22:23. Schloss Dagstuhl - Leibniz-Zentrum f{\"{u}}r Informatik, 2022.
\newblock URL: \url{https://doi.org/10.4230/LIPIcs.IPEC.2022.22}, \href {https://doi.org/10.4230/LIPICS.IPEC.2022.22} {\path{doi:10.4230/LIPICS.IPEC.2022.22}}.

\bibitem[OR20]{DBLP:conf/soda/OkrasaR20}
Karolina Okrasa and Pawel Rzazewski.
\newblock Fine-grained complexity of graph homomorphism problem for bounded-treewidth graphs.
\newblock In Shuchi Chawla, editor, {\em Proceedings of the 2020 {ACM-SIAM} Symposium on Discrete Algorithms, {SODA} 2020, Salt Lake City, UT, USA, January 5-8, 2020}, pages 1578--1590. {SIAM}, 2020.
\newblock \href {https://doi.org/10.1137/1.9781611975994.97} {\path{doi:10.1137/1.9781611975994.97}}.

\bibitem[PRS09]{DBLP:conf/esa/PhilipRS09}
Geevarghese Philip, Venkatesh Raman, and Somnath Sikdar.
\newblock Solving dominating set in larger classes of graphs: {FPT} algorithms and polynomial kernels.
\newblock In Amos Fiat and Peter Sanders, editors, {\em Algorithms - {ESA} 2009, 17th Annual European Symposium, Copenhagen, Denmark, September 7-9, 2009. Proceedings}, volume 5757 of {\em Lecture Notes in Computer Science}, pages 694--705. Springer, 2009.
\newblock \href {https://doi.org/10.1007/978-3-642-04128-0_62} {\path{doi:10.1007/978-3-642-04128-0_62}}.

\bibitem[SU23]{DBLP:conf/stoc/SolomonU23}
Shay Solomon and Amitai Uzrad.
\newblock Dynamic ((1+{\(\epsilon\)}) ln n)-approximation algorithms for minimum set cover and dominating set.
\newblock In Barna Saha and Rocco~A. Servedio, editors, {\em Proceedings of the 55th Annual {ACM} Symposium on Theory of Computing, {STOC} 2023, Orlando, FL, USA, June 20-23, 2023}, pages 1187--1200. {ACM}, 2023.
\newblock \href {https://doi.org/10.1145/3564246.3585211} {\path{doi:10.1145/3564246.3585211}}.

\bibitem[Sut88]{sutner1988additive}
Klaus Sutner.
\newblock Additive automata on graphs.
\newblock {\em Complex Syst.}, 2(6), 1988.
\newblock URL: \url{http://www.complex-systems.com/abstracts/v02_i06_a03.html}.

\bibitem[Sut89]{Sutner89}
Klaus Sutner.
\newblock Linear cellular automata and the garden-of-eden.
\newblock {\em The Mathematical Intelligencer}, 11(2):49--53, 1989.
\newblock \href {https://doi.org/10.1007/BF03023823} {\path{doi:10.1007/BF03023823}}.

\bibitem[SZ14]{ShachnaiZ14}
Hadas Shachnai and Meirav Zehavi.
\newblock Representative families: {A} unified tradeoff-based approach.
\newblock In Andreas~S. Schulz and Dorothea Wagner, editors, {\em Algorithms - {ESA} 2014 - 22th Annual European Symposium, Wroclaw, Poland, September 8-10, 2014. Proceedings}, volume 8737 of {\em Lecture Notes in Computer Science}, pages 786--797. Springer, 2014.
\newblock \href {https://doi.org/10.1007/978-3-662-44777-2_65} {\path{doi:10.1007/978-3-662-44777-2_65}}.

\bibitem[Tel94]{Telle94}
Jan~Arne Telle.
\newblock Complexity of domination-type problems in graphs.
\newblock {\em Nord. J. Comput.}, 1(1):157--171, 1994.

\bibitem[TH90]{DBLP:conf/sigal/TsaiH90}
Kuo{-}Hui Tsai and Wen{-}Lian Hsu.
\newblock Fast algorithms for the dominating set problem on permutation graphs.
\newblock In Tetsuo Asano, Toshihide Ibaraki, Hiroshi Imai, and Takao Nishizeki, editors, {\em Algorithms, International Symposium {SIGAL} '90, Tokyo, Japan, August 16-18, 1990, Proceedings}, volume 450 of {\em Lecture Notes in Computer Science}, pages 109--117. Springer, 1990.
\newblock \href {https://doi.org/10.1007/3-540-52921-7_60} {\path{doi:10.1007/3-540-52921-7_60}}.

\bibitem[TP93]{tellePracticalAlgorithmsPartial1993}
Jan~Arne Telle and Andrzej Proskurowski.
\newblock Practical algorithms on partial k-trees with an application to domination-like problems.
\newblock In Frank K. H.~A. Dehne, J{\"{o}}rg{-}R{\"{u}}diger Sack, Nicola Santoro, and Sue Whitesides, editors, {\em Algorithms and Data Structures, Third Workshop, {WADS} '93, Montr{\'{e}}al, Canada, August 11-13, 1993, Proceedings}, volume 709 of {\em Lecture Notes in Computer Science}, pages 610--621. Springer, 1993.
\newblock \href {https://doi.org/10.1007/3-540-57155-8_284} {\path{doi:10.1007/3-540-57155-8_284}}.

\bibitem[vR20]{Rooij_FastJoinOperations}
Johan M.~M. van Rooij.
\newblock Fast algorithms for join operations on tree decompositions.
\newblock In Fedor~V. Fomin, Stefan Kratsch, and Erik~Jan van Leeuwen, editors, {\em Treewidth, Kernels, and Algorithms - Essays Dedicated to Hans L. Bodlaender on the Occasion of His 60th Birthday}, volume 12160 of {\em Lecture Notes in Computer Science}, pages 262--297. Springer, 2020.
\newblock \href {https://doi.org/10.1007/978-3-030-42071-0_18} {\path{doi:10.1007/978-3-030-42071-0_18}}.

\bibitem[vRBR09]{RooijBR09}
Johan M.~M. van Rooij, Hans~L. Bodlaender, and Peter Rossmanith.
\newblock Dynamic programming on tree decompositions using generalised fast subset convolution.
\newblock In Amos Fiat and Peter Sanders, editors, {\em Algorithms - {ESA} 2009, 17th Annual European Symposium, Copenhagen, Denmark, September 7-9, 2009. Proceedings}, volume 5757 of {\em Lecture Notes in Computer Science}, pages 566--577. Springer, 2009.
\newblock \href {https://doi.org/10.1007/978-3-642-04128-0_51} {\path{doi:10.1007/978-3-642-04128-0_51}}.

\end{thebibliography}

\end{document}